%Local IspellDict: english

\documentclass[11pt]{article}

\usepackage[utf8]{inputenc}
\usepackage{amsmath,amsthm,amsfonts,amssymb,dsfont}
\usepackage{natbib}
\usepackage[top=2cm,left=3cm,right=3cm,bottom=2cm]{geometry}
\usepackage{textcomp}
\usepackage{enumerate}
\usepackage{comment}
\usepackage{color}
\usepackage{booktabs} 
\usepackage{rotating}
\usepackage{caption}
\usepackage{pdflscape}
\usepackage{subcaption}		
\usepackage{soul}

\newcounter{hypA}

\newcommand{\R}{\mathbb{R}}

%%%%%%%%%%%%%%%%%%%%%%%%%%%%%%%%%%%%%%%%%%%%%%%%%%%%%%%%%%%%%%%%%%%%%%
% 3) pacote para enumeração com algaristmos romanos
\usepackage{enumitem}
\usepackage{enumerate}
\usepackage{float}
\usepackage{multirow}
\usepackage{graphicx}
\usepackage{adjustbox}
\usepackage{diagbox}
\usepackage{rotating}
\usepackage{adjustbox}
\usepackage{tabularray}
\usepackage{tikz}
\usepackage{tikz-3dplot}

%%%%%%%%%%%%%%%%%%%%%%%%%%%%%%%%%%%%%%%%%%%%%%%%%%%%%%%%%%%%%%%%
%%%% Flowcharts packages
\usepackage{tikz}
\usetikzlibrary{shapes.geometric, arrows}
\tikzstyle{startstop} = [rectangle, rounded corners, minimum width=3cm, minimum height=1.5cm,text centered, draw=black, fill=white!70]
\tikzstyle{io} = [trapezium, trapezium left angle=70, trapezium right angle=110, minimum width=3cm, minimum height=1cm, text centered, draw=black, fill=black!50]
\tikzstyle{process} = [rectangle, minimum width=3cm, minimum height=1cm, text centered, draw=black, fill=red!40]
\tikzstyle{decision} = [diamond, minimum width=3cm, minimum height=1cm, text centered, draw=black, fill=black!10]
\tikzstyle{arrow} = [thick,->,>=stealth]

\tikzstyle{startstop2} = [rectangle, rounded corners, minimum width=3cm, minimum height=1.5cm,text centered, draw=black, fill=orange!70]
\tikzstyle{io2} = [trapezium, trapezium left angle=70, trapezium right angle=110, minimum width=3cm, minimum height=1cm, text centered, draw=black, fill=orange!50]
\tikzstyle{process2} = [rectangle, minimum width=3cm, minimum height=1cm, text centered, draw=black, fill=orange!40]
\tikzstyle{decision2} = [diamond, minimum width=3cm, minimum height=1cm, text centered, draw=black, fill=orange!10]
\tikzstyle{arrow} = [thick,->,>=stealth]
%%%PACOTES DE CANCEL%%%%%%%%%%%%%%%%%%%%%%%%%%%%%%%%%%%%%%%%%%%
\usepackage{cancel,xcolor}
\newcommand{\Cancel}[2][black]{{\color{#1}\cancel{\color{black}#2}}}

% 1) \newtheorem{supp}{Suposição}
\newtheorem{theorem}{Theorem}
\newtheorem{proposition}{Proposition}
\newtheorem{corollary}{Corollary}
\newtheorem{lemma}{Lemma}
\newtheorem{remark}{Remark}
\newtheorem{assumptions}{A}

\begin{document}

\title{
Discriminant Analysis in stationary time series based on robust cepstral coefficients}

\author{Jonathan de Souza Matias and Valderio Anselmo Reisen}

\date{\today}

\maketitle

    \subsection*{Abstract}

        Time series analysis is essential in fields such as finance, economics, environmental science, and biomedical engineering for understanding underlying mechanisms, forecasting, and identifying patterns. Traditional time domain methods, which focus on trends, seasonality, and noise, often overlook periodicities and harmonic structures that are better captured in the frequency domain. Analyzing time series in the frequency domain enables the identification of these spectral properties, providing deeper insights into the underlying processes. These insights can help differentiate data-generating processes of different populations and assist in the discrimination and classification of time series. The literature commonly uses smoothed estimators like the smoothed periodogram to minimize bias, obtaining an average spectrum from individual replicates within a population to classify new time series. However, if there is spectral variability among replicates within each population, such methods become unfeasible. Moreover, abrupt values can significantly impact spectrum estimators, complicating practical discrimination and classification. There is a gap in the literature for methods that consider within-population spectral variability, separate white noise effects from autocorrelations, and use robust estimators in the presence of outliers. This paper addresses this gap by presenting a robust framework for classifying replicate groups of time series by transforming them into the frequency domain using the Fourier Transform to compute the power spectrum. Then, after taking the logarithm of the spectra, the inverse Fourier Transform is used to achieve the cepstrum. To mitigate the effects of outliers and improve the consistency of spectral estimates, we employ the multitaper periodogram alongside the M-periodogram. These spectral features are then utilized in Linear Discriminant Analysis (LDA) to enhance classification accuracy and interpretability. This integrated approach offers significant potential for applications requiring precise temporal pattern distinction and resilience to data anomalies.
    %%%%%%%%%%%%%%%%%%%%%%%%%%%%%%%%%%%%%%%%%%%%%%%%%%%%
    
        \subsection*{Keywords}
        Robust, discriminant, classification, cepstral, multitaper, time series, frequency domain.
    %%%%%%%%%%%%%%%%%%%%%%%%%%%%%%%%%%%%%%%%%%%%%%
    \section{Introduction}
    %%%%%%%%%%%%%%%%%%%%%%%%%%%%%%%%%%%%%%%%%%%%%%%%%%%%%
        Time series analysis plays a crucial role in various disciplines such as finance, economics, environmental science, and biomedical engineering. The primary goal of time series analysis is to understand the underlying mechanisms governing the observed data, predict future values, and identify significant patterns. Traditional approaches to time series analysis predominantly operate in the time domain, focusing on modeling trends, seasonality, and noise through techniques such as autoregressive moving average (ARMA), vector autoregression (VAR), and vector error correction models (VECM).
    
       However, these methods are often insufficient when it comes to uncovering the underlying spectral properties of the data. Analyzing time series in the frequency domain provides a complementary perspective, emphasizing periodicities and harmonic structures that are not easily observable in the time domain. To analyze the data in the frequency domain, it should be transformed by the Fourier transform, which decomposes a time series into the amplitude of harmonics for different frequencies, revealing the power distribution across those frequencies.
    
    % \subsection*{Literature Review}
    % \subsubsection*{Time Series Analysis}
    
        The foundation of time series analysis was significantly advanced by \cite{bj70,bj94}, who introduced a systematic approach for building stationary ARMA models. Their work laid the groundwork for subsequent developments in the field, which have focused on improving model accuracy and computational efficiency \cite{brockwell91,anderson76,shumway11}. Recent advancements have also explored the integration of machine learning techniques to enhance predictive performance and adaptability to complex data structures \cite{alpaydin14,lazzeri20}.
    
    % \subsubsection*{Frequency Domain Analysis}
    
        Further enhancing the understanding of time series, \cite{brillinger81} and \cite{priestley81} provided comprehensive treatments of frequency domain methods, highlighting their applicability to a wide range of time series data. The power spectrum, obtained through the quadratic form of the Fourier Transform of a signal, is central to these analyses, offering insights into the dominant frequencies and periodicities within the data. Frequency domain techniques have been particularly effective in fields such as signal processing and light wave, where the identification of spectral features is crucial. \cite{bloomfield00} expanded on these foundational works by elaborating on the utility of Fourier analysis for time series data, demonstrating its effectiveness in isolating and analyzing the amplitude of harmonics across different frequencies.
    
    % \subsubsection*{ln Spectrum and Cepstrum}
    
        Building on these spectral techniques, the ln spectrum—a logarithmic transformation of the power spectrum—improves the visualization of spectral features by expressing them as a sum of terms rather than a product, resulted from the quadratic form of the time series in the frequency domain, giving the spectra. This logarithmic approach has been extensively used in speech processing and image recognition \cite{oppenheim04} and geophysical signal analysis \cite{bogert1963,shumway82,alagon86,kakizawa98,shumway11}. The cepstrum, derived by taking the inverse Fourier Transform of the ln spectrum, further decomposes the time series variation into white noise and autocorrelation contributions. This technique has proven valuable for identifying echo patterns and deconvolving complex signals \cite{schafer1969,oppenheim1968}.
    
    % \subsubsection*{M-Periodogram and Multitaper Periodogram}
    
        Despite the advancements, outlier observations within time series data can significantly distort spectral estimates. Traditional periodograms can be sensitive to these outliers, leading to biased or inconsistent spectral estimates even using smooth periodograms \citep{kakizawa98,zhang92,zhang2005robust}. To mitigate these effects, robust spectral estimation techniques such as the M-periodogram and the multitaper periodogram have been developed. \cite{katkovnik98} introduced the M-periodogram as a robust method to reduce the influence of outliers. \cite{reisen20} provided an overview of robust spectral estimators and time series in the context of long memory approach, highlighting the benefits of these approaches in handling outliers and improving spectral estimation. The multitaper periodogram, as described by \cite{thomson82}, uses multiple orthogonal tapers to produce an averaged spectrum, reducing variance by diminishing the leakage caused by sidelobes and improving spectral estimation accuracy.
    
    % \subsubsection*{Linear Discriminant Analysis (LDA) in Time Series}
    
        To further refine classification and analysis of time series data, Linear Discriminant Analysis (LDA), introduced by \cite{fisher36}, is a statistical method used for classification and dimensionality reduction. LDA seeks to find a linear combination of features that best separates multiple classes. When applied to time series data, LDA can leverage spectral features to enhance discriminative power. Recent studies, such as the one of \cite{shumway11} have demonstrated the effectiveness of incorporating frequency domain features into LDA for improved classification of time series data. \cite{krafty16} specifically addressed the use of LDA in the presence of within-group spectral variability.
    
        Additionally, robust statistical methods have been developed to improve the performance of LDA, but there is a gap in the literature regarding robust methods in time series analysis, discrimination, and classification. \cite{huber} introduced robust estimators that can be applied in the context of time series analysis to enhance the reliability of discriminant functions. Building on these foundations, more recent works by \cite{kutz19} have further refined these techniques, making them more applicable to high-dimensional time series data.
    
    % \subsection*{Research Contribution}
    
        In this context, this paper aims to develop a robust framework for the classification of replicate groups of time series based on their spectral characteristics, using the M-periodogram as a robust estimator and accounting for within-group spectral variability. By integrating advanced spectral analysis techniques with discriminant analysis, this research offers a novel approach to time series classification. The use of robust spectral estimation method ensures that the analysis is resilient to outliers, thereby improving the reliability of the results. The proposed methodology not only improves classification accuracy but also provides deeper insights into the spectral variability inherent in time series data.
    
        Following this introduction, the paper is structured as follows. Section 2 presents the Cepstral Linear Discriminant Analysis (CLDA), detailing the theoretical foundation and methodology for applying cepstral features in discriminant analysis. Section 3 introduces the M-cepstral estimator, describing its formulation and the advantages it offers in robust spectral estimation, particularly in the presence of outliers. In Section 4, Monte Carlo simulations are conducted to evaluate the performance of the proposed methods, providing empirical evidence of their effectiveness and robustness. Section 5 applies the developed techniques to real-world data, specifically focusing on a study of neurodegenerative diseases and gait variability. This section demonstrates the practical utility of our approach in a complex biomedical context. Finally, Section 6 concludes the paper, summarizing the key findings and suggesting potential directions for future research.
    %%%%%%%%%%%%%%%%%%%%%%%%%%%%%%%%%%%%%%%%%%%%%%%%%%%%
    \section{Cepstral Linear Discriminat Analysis - CLDA}
    %%%%%%%%%%%%%%%%%%%%%%%%%%%%%%%%%%%%%%%%%%%%%%%%%%%%
        Discrimination and classification are multivariate tools used to distinguish objects based on their characteristics. According to \cite[p. 573]{wichern07}, discrimination is primarily exploratory, aiming to identify the main differences between populations. Meanwhile, \cite[p. 207]{anderson76} describes discrimination as applying algebraic or graphical rules to achieve maximal separation of time series data. On the other hand, classification involves assigning new observations to predefined populations, thereby facilitating their categorization into existing groups. The following section aims to demonstrate the extraction of time series properties for discrimination and classification, while subsequent sections will delve into the optimal procedures for effectively separating time series data.

        Let $\{X_{jkt}\}$, $t \in \mathbb{Z}$, be a family of process defined in probability space
        $\mathbb{L}^2(\Omega,\mathcal{A},\mathcal{P})$, such that $X_{jkt} = \sum_{\tau=-\infty}^{\infty} \upsilon_{jk \tau} \epsilon_{jk(t-\tau)}$ where $\{\epsilon_{jkt}\} \sim iid(0,\sigma^2)$ and ARMA coefficients with $\sum\limits_{\tau=-\infty}^{\infty}\lvert \upsilon_{\tau}\rvert < \infty$. In the above, $j=1,\cdots,J$, $k=1,\cdots,n_j$, where $J$ and $n_j$ are fixed values. The spectral density of $\{X_{jkt}\}$ is given by

        \begin{equation}\tag{2.1}\label{eq:eq2.1}
            S_{jk}(\lambda) 
            = \frac{1}{2\pi} \sum_{\tau =-\infty}^{\infty} \gamma_{jk}(\tau) cos(\lambda \tau), \quad \text{for all }\lambda\in [-\pi,\pi].
        \end{equation}
        
        %\noindent where $m \in \mathbb{Z}$, $\lambda_m$ is the $m$th frequency with $-\pi = \lambda_1 < ... < \lambda_m = \pi$ and  $i=\sqrt{-1}$.y 

         \noindent where $\gamma_{jk}(\tau)$ is the covariance of the process. $S_{jk}(\lambda)$ may be interpreted as a decomposition of the variance of the process.
    
        As previously mentioned, this paper is an extension of \cite{krafty16}, that is, we address the linear discriminate analysis under time series contaminated by additive outliers or with the heavy-tailed distribution. These issues are discussed in the next sections.
        
    %####################################################
    \subsection{Cepstra}
    %####################################################
    
        Cepstral is a tool for investigating periodic structures in frequency spectra to extract the fundamental components of a signal, in the sense that it separates signals that have been combined in a non-additive way.   \cite{oppenheim04} gave the following example to clarify the meaning of cepstral. Suppose that a signal of a simple echo can be written as 

        \begin{equation}\tag{2.2}\label{eq:eq2.2}
            X_t=s_t+\alpha s_{(t-\tau)}.
        \end{equation}
    
        \noindent where $\alpha$ is a constant which satisfies conditions to guarantee the stability of the real process $x_t$, and $s_t$  is a stable noise process. Note that $x_t$ corresponds to MA(1) process with white noise process $s_t$ \textcolor{blue}{with lag $\tau$}.  The spectral representation of Equation \ref{eq:eq2.5} can be written as 

        \begin{equation} \tag{2.3}\label{eq:eq2.3}
            \left\lVert X_{t}(\lambda) \right\rVert^{2} = \left\lVert S_s(\lambda)\right\rVert^2 [1 + \alpha^{2}+2\alpha\cos({\lambda \tau}]. 
        \end{equation} 

        \begin{equation} \tag{2.4}\label{eq:eq2.4}
            ln\left\lVert X_{t} (\lambda)\right\rVert^{2} = ln\left\lVert S_{t}(\lambda) \right\rVert^2+ln[1 + \alpha^{2}+2\alpha\cos({\lambda \tau})].
        \end{equation}  

        \noindent where $ln = ln_e$. 
        
        From Equation \ref{eq:eq2.3} we can see that the spectral density of the echo corresponds to the product of the spectral of the noise ($S_{s}$) with the spectrum contribution of the echo. One way to see the individual contribution of echo time-variability is to apply the real ln transformation, as shown in \ref{eq:eq2.4}. From this, we  see that the $ln \left\lVert X_{t} \right\rVert^{2}$ has a waveform with the periodic component with delay $\tau$.
         
        \cite{bogert1963}, in their seminal work, introduce the  cepstrum (a noun paraphrased from the word spectrum), which is the inverse Fourier transform of the $ln \left\lVert X_{t} \right\rVert^{2}$.  The correspondent coefficients are called cepstral coefficients ($c_\ell$), $\ell = 0,1, ...$.  The cepstrum analysis is in the domain denoted as quefrency. These terminolnies were introduced by the authors where cepstral and cepstrum are anagrams of spectral and spectrum, respectively.   Apart from the mathematical elegance of the cepstrum and the cepstral coefficients, they display interesting applications in various areas of knowledge.  The cepstrum is  a tool for investigating periodic structures in frequency spectra, with main applications in human speech, music and electric power systems. 
        The periodical structures are related to noticeable echos in the signal, or to the occurrence of harmonic frequencies. Mathematically it deals with the problem of deconvolution of signals in the frequency space. The typical definition for a harmonic is ``a sinusoidal component of a periodic wave or quantity having a frequency that is an integral multiple of the fundamental frequency''. Some references refer to ``clean'' or ``pure'' power as those waveform without harmonics (\cite{Fokianos}).  See a review in \cite{cavicchioli20}.\\
        
        The cepstrum power has also been to discriminate and classification spectral densities, that is, a discriminate analysis based on the frequency and quefrency domains. \cite{Fokianos} used the methodolny 
       for testing the similarity of G spectral density functions from G-independent stationary processes. \cite{krafty16} introduces the cepstral coefficients into the Mahalanobis distance to build a discriminant function in the frequency and quefrency domains, considering variability between and within groups.

        Since we are considering the discriminant problem based on stationary zero-mean ARMA models, we derive cepstral coefficients for some examples of this class process below. 

        %%%%%%%%%%%%%%%%%%%%%%%%%%%%%%%%%%%%%%%%%%%
        %%%%% PROPOSITION ARMA(p,q) %%%%%%%%%%%%%%%
        %%%%%%%%%%%%%%%%%%%%%%%%%%%%%%%%%%%%%%%%%%%
        \begin{proposition} \label{proposition 1}
        
        Let $X_t$ be a stationary ARMA$(p,q)$ process and $\sigma^2$ is the innovation variance. Additionally, let $z=e^{-i\lambda}$, $\eta_i$ and $\zeta_r$ be the $i$th and $r$th roots of polynomials $\Theta(z)$ and $\Phi(z)$, respectively \citep[p. 155]{hamilton94}. Then:
        
        \begin{enumerate}[label=\roman*.]        
            \item The ln spectra of $X_t$ can be written as:

                \begin{equation} \tag{2.5}\label{eq:eq2.5}
                    ln S_X(\lambda) = log \left\{ \frac{\sigma_{\epsilon}^2}{2\pi} \left[ \frac{\prod_{i=1}^{q} 1+ \eta_i^2+2\eta_i cos(\lambda)} {\prod_{r=1}^{p}1+ \zeta_r^2+2\zeta_r cos(\lambda)} \right] \right \}
                \end{equation}

                $$ = ln \left\{ \frac{\sigma_{\epsilon}^2}{2\pi}\right \}+\sum_{i=1}^{q}ln\left[ 1+ \eta_i^2+2\eta_i cos(\lambda)\right] - \sum_{r=1}^{p}ln\left[ 1+ \zeta_r^2+2\zeta_rcos(\lambda) \right]$$

                $$ = ln \left\{ \frac{\sigma_{\epsilon}^2}{2\pi}\right \} + 2\left\{\sum_{i=1}^{q}\sum_{\ell=1}^{\infty} \frac{(-1)^{\ell+1}\eta_i^{\ell}}{\ell}cos(\lambda \ell)+\sum_{r=1}^{p}\sum_{\ell=1}^{\infty} \frac{\zeta_r^{\ell}}{\ell}cos(\lambda \ell)\right\}.$$

                $$ = ln \left\{ \frac{\sigma_{\epsilon}^2}{2\pi}\right \} + 2\sum_{\ell=1}^{\infty}\left\{\sum_{i=1}^{q} \frac{(-1)^{\ell+1}\eta_i^{\ell}}{\ell}cos(\lambda \ell)+\sum_{r=1}^{p} \frac{\zeta_r^{\ell}}{\ell}cos(\lambda \ell)\right\}.$$

            \item Using the ln spectra, the cepstra coefficients for $ARMA(p,q)$ can be written as:

            \begin{equation}\tag{2.6}\label{eq:eq2.6}
                c_{\ell}=    \left\{ 
                    \begin{array}{rcl}
                        \begin{matrix}    
                            ln(\frac{\sigma^2}{2\pi}) &,if & \ell = 0\\
                            &&\\
                            2\left(\sum_{i=1}^{q} \frac{(-1)^{\ell+1}\eta_i^{\ell}}{\ell}+\sum_{r=1}^{p} \frac{\zeta_r^{\ell}}{\ell}\right). &,if & \ell \geq 1.
                    \end{matrix}    
                \end{array}\right.
            \end{equation}
    
        \end{enumerate}
        \end{proposition}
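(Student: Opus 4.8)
\textit{Proof sketch (proposal).}
The plan is to derive both parts from the textbook rational form of the ARMA spectral density, expanding its logarithm term by term as a Mercator (power) series; part (ii) then follows simply by matching coefficients with the last line of part (i), so essentially all the work is in (i). First I would record the spectral density of $\Phi(B)X_t=\Theta(B)\epsilon_t$, namely $S_X(\lambda)=\frac{\sigma_\epsilon^2}{2\pi}\,|\Theta(e^{-i\lambda})|^2/|\Phi(e^{-i\lambda})|^2$. Factoring $\Theta$ and $\Phi$ over their roots and using that, for a scalar $\rho$, $|1\pm\rho e^{-i\lambda}|^2=(1\pm\rho e^{-i\lambda})(1\pm\rho e^{i\lambda})=1+\rho^2\pm2\rho\cos\lambda$, each quadratic block of $S_X$ takes exactly the form inside the braces of \eqref{eq:eq2.5}; this is the first equality. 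Applying $\ln$ converts the product/quotient of blocks into a sum and difference of $\ln$'s of quadratic blocks, which is the second equality.

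Next, for each block I would invoke the principal-branch identity $\ln(1+z)=\sum_{\ell\ge1}\frac{(-1)^{\ell+1}}{\ell}z^\ell$, valid for $|z|<1$; since the model is causal and invertible (one may pass to the invertible representation without changing $S_X$), $|\eta_i|<1$ and $|\zeta_r|<1$, so $z=\rho e^{\pm i\lambda}$ is admissible for every $\lambda$. Writing a numerator block as $\ln(1+\eta_i e^{-i\lambda})+\ln(1+\eta_i e^{i\lambda})$ and adding these two conjugate series, the imaginary parts cancel and $e^{-i\lambda\ell}+e^{i\lambda\ell}=2\cos(\lambda\ell)$ collapses it to $2\sum_{\ell\ge1}\frac{(-1)^{\ell+1}\eta_i^\ell}{\ell}\cos(\lambda\ell)$; the denominator blocks, handled the same way with the root convention of \citep[p.~155]{hamilton94}, contribute $2\sum_{\ell\ge1}\frac{\zeta_r^\ell}{\ell}\cos(\lambda\ell)$. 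Summing the finitely many blocks gives the third equality, and interchanging the finite root-sums with the $\ell$-sum---legitimate because $\sum_{\ell\ge1}|\rho|^\ell/\ell<\infty$, so all series converge absolutely and uniformly in $\lambda$---gives the fourth.

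For part (ii), the cepstral coefficients are by definition the coefficients $c_\ell$ in the quefrency-domain (cosine-series) expansion $\ln S_X(\lambda)=c_0+\sum_{\ell\ge1}c_\ell\cos(\lambda\ell)$. Uniqueness of these coefficients, applied to the last line of part (i), immediately forces $c_0=\ln(\sigma_\epsilon^2/2\pi)$ and $c_\ell=2\big(\sum_{i=1}^{q}\frac{(-1)^{\ell+1}\eta_i^\ell}{\ell}+\sum_{r=1}^{p}\frac{\zeta_r^\ell}{\ell}\big)$ for $\ell\ge1$, which is \eqref{eq:eq2.6}.

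I expect the only genuine subtlety---the part worth writing out carefully---to be the complex-logarithm bookkeeping: choosing the principal branch, checking that the sum of a conjugate pair of log-series is real and equals $\ln$ of the corresponding positive quadratic block, and pinning down the sign/root convention so that the MA roots $\eta_i$ carry the alternating factor $(-1)^{\ell+1}$ while the AR roots $\zeta_r$ do not. The polynomial factorization, the series expansion itself, and the interchange of summations are all routine once $|\eta_i|,|\zeta_r|<1$ is in hand.
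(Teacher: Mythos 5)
Your proposal is correct and follows essentially the same route as the paper's own proof: factor each quadratic block $1+\rho^2\pm2\rho\cos\lambda$ into the conjugate pair $(1\pm\rho e^{i\lambda})(1\pm\rho e^{-i\lambda})$ via Euler's formula, expand each logarithm by the Mercator series, recombine the conjugate series into $2\cos(\lambda\ell)$ terms, and read off the cepstral coefficients by matching the cosine-series expansion. Your added care about the radius of convergence ($|\eta_i|,|\zeta_r|<1$ from causality/invertibility), the branch of the logarithm, and the sign convention distinguishing the MA and AR blocks only makes explicit what the paper's argument leaves implicit.
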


        \vspace{1cm}

        In particular, Corollaries \ref{corollary 1}, \ref{corollary 2}, and \ref{corollary 3} display the ln spectra and cepstral coefficients for AR$(1)$, MA$(1)$, and ARMA$(1,1)$. The complete demonstrations are presented in Appendices 7.1 and 7.2. Also, considering the $ln = log_{e}$, where $e$ is the Neperian number. 
        
        %%%%%%%%%%%%%%%%%%%%%%%%%%%%%%%%%%%%%%%%%%%
        %%%%% Corollary 1: AR(1) %%%%%%%%%%%%%%%%%%
        %%%%%%%%%%%%%%%%%%%%%%%%%%%%%%%%%%%%%%%%%%%
        \begin{corollary} \label{corollary 1}
        Let $X_{t} = \phi X_{(t-1)}+\epsilon_{t}$ be an stationary Gaussian $AR(1)$ process. Then:

        \begin{enumerate}[label=\roman*.]
        %%%%%%%%%%%%%%%%%%%%%%%%%%%%%%%%%%%%%%%%%%%%
            \item The ln spectra can be written as:

                \begin{equation} \tag{2.7}\label{eq:eq2.7}
                    ln S_X(\lambda) = ln \left\{ \frac{\sigma_{\epsilon}^2}{2\pi} \left[ 1+ \phi ^2-2\phi cos(\lambda) \right]^{-1} \right \}
                \end{equation}

                $$ = ln \left\{ \frac{\sigma_{\epsilon}^2}{2\pi}\right \} - ln\left[ 1+ \phi^2-2\phi cos(\lambda) \right]$$

                $$ = ln \left\{ \frac{\sigma_{\epsilon}^2}{2\pi}\right \} + 2\sum_{\ell=1}^{\infty} \frac{\phi^{\ell}}{\ell}cos(\lambda \ell).$$
        %%%%%%%%%%%%%%%%%%%%%%%%%%%%%%%%%%%%%%%%%%%%
            \item Using the ln spectra, the cepstra coefficients for $AR(1)$ can be written as:
            
            \begin{equation}\tag{2.8}\label{eq:eq2.8}
                c_{\ell}=    \left\{ 
                    \begin{array}{rcl}
                        \begin{matrix}    
                            ln(\frac{\sigma^2}{2\pi}) &,if & \ell = 0\\
                            &&\\
                            \frac{2\theta^{\ell}}{\ell} &,if & \ell \geq 1.
                    \end{matrix}    
                \end{array}\right.
            \end{equation}
        \end{enumerate}
        \end{corollary}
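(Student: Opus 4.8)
The plan is to obtain Corollary~\ref{corollary 1} as the special case $p=1$, $q=0$ of Proposition~\ref{proposition 1}, while in parallel giving a short self-contained computation that makes explicit where the stationarity condition $|\phi|<1$ enters; the Gaussian hypothesis is not needed for the spectral and cepstral identities themselves (it matters only downstream, for the discriminant analysis).

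First I would record the spectral density of the AR(1) model. Since $X_t=\phi X_{t-1}+\epsilon_t$ has autoregressive polynomial $\Phi(z)=1-\phi z$, the transfer-function representation gives
\begin{equation*}
S_X(\lambda)=\frac{\sigma_\epsilon^2}{2\pi}\,\big|1-\phi e^{-i\lambda}\big|^{-2}=\frac{\sigma_\epsilon^2}{2\pi}\big[\,1+\phi^2-2\phi\cos\lambda\,\big]^{-1},
\end{equation*}
which is the first line of \eqref{eq:eq2.7}; positivity of the bracket on $[-\pi,\pi]$, and hence of $S_X$, is exactly what $|\phi|<1$ buys us. Taking logarithms and using $\ln(ab)=\ln a+\ln b$ yields the second line, $\ln S_X(\lambda)=\ln\{\sigma_\epsilon^2/(2\pi)\}-\ln[\,1+\phi^2-2\phi\cos\lambda\,]$.

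The one nonroutine ingredient is the cosine-series identity $\ln[\,1+\phi^2-2\phi\cos\lambda\,]=-2\sum_{\ell=1}^{\infty}\phi^\ell\cos(\lambda\ell)/\ell$. I would obtain it by factoring $1+\phi^2-2\phi\cos\lambda=(1-\phi e^{-i\lambda})(1-\phi e^{i\lambda})$, writing $\ln[\cdot]=2\,\mathrm{Re}\,\ln(1-\phi e^{-i\lambda})$, applying the Mercator series $\ln(1-w)=-\sum_{\ell\ge1}w^\ell/\ell$ with $w=\phi e^{-i\lambda}$, $|w|=|\phi|<1$, and then taking real parts term by term. Because $|\phi|<1$ the series converges absolutely and uniformly in $\lambda$, which legitimizes both the termwise real part and the subsequent termwise inverse Fourier transform. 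Substituting this expansion into the second line produces the third line of \eqref{eq:eq2.7}.

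It then remains to read off the cepstral coefficients: by definition (as in \eqref{eq:eq2.6}) $c_\ell$ is the coefficient of $\cos(\lambda\ell)$ in the cosine expansion of $\ln S_X(\lambda)$, equivalently the inverse Fourier coefficients of the even, $2\pi$-periodic function $\ln S_X$. Matching the third line of \eqref{eq:eq2.7} term by term gives $c_0=\ln\{\sigma_\epsilon^2/(2\pi)\}$ and $c_\ell=2\phi^\ell/\ell$ for $\ell\ge1$, which is \eqref{eq:eq2.8}. The identical result drops out of \eqref{eq:eq2.6} on setting $q=0$, $p=1$ and identifying the single denominator factor there with $1+\phi^2-2\phi\cos\lambda$. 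The only place where care is really needed is this sign/root bookkeeping in the identification with Proposition~\ref{proposition 1}, together with the uniform-convergence justification above; beyond that there is no genuine obstacle, since the statement amounts to ``standard AR(1) spectrum $+$ Mercator expansion.''
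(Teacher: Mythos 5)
Your proposal is correct and follows essentially the same route as the paper's appendix proof (which, despite being labeled there as the proof of a different corollary, handles the AR$(1)$ case): factor $1+\phi^2-2\phi\cos\lambda$ into the conjugate linear factors $(1-\phi e^{i\lambda})(1-\phi e^{-i\lambda})$, expand each logarithm by the Mercator series, and recombine $e^{i\lambda\ell}+e^{-i\lambda\ell}$ into $2\cos(\lambda\ell)$. Your additions --- the explicit uniform-convergence justification for $|\phi|<1$ and the remark that the result also drops out of Proposition~\ref{proposition 1} with $q=0$, $p=1$ --- are correct refinements of, not departures from, the paper's argument, and you rightly write $c_\ell=2\phi^\ell/\ell$ where the paper's displayed formula \eqref{eq:eq2.8} has a typographical $\theta$.
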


        Figures \ref{fig1} and \ref{fig2} display spectra (a), ln spectra (b), and cepstra (c) for the particular case of AR$(1)$ with $\phi=0.5$ and $\phi=-0.5$, respectively. In Figure \ref{fig1}, the spectra show that most of the variability of the process is explained by low frequencies, close to zero. This indicates a time series with relatively low volatility, which is confirmed by the cepstra. In panel (c), the cepstra decrease more rapidly than an exponential function toward to zero. This implies that approximately $\ell = 8$ is sufficient to explain most of the variability of the process.
      
        \begin{figure}[H]
        \centering
            \includegraphics[width=.9\linewidth, height=7cm]{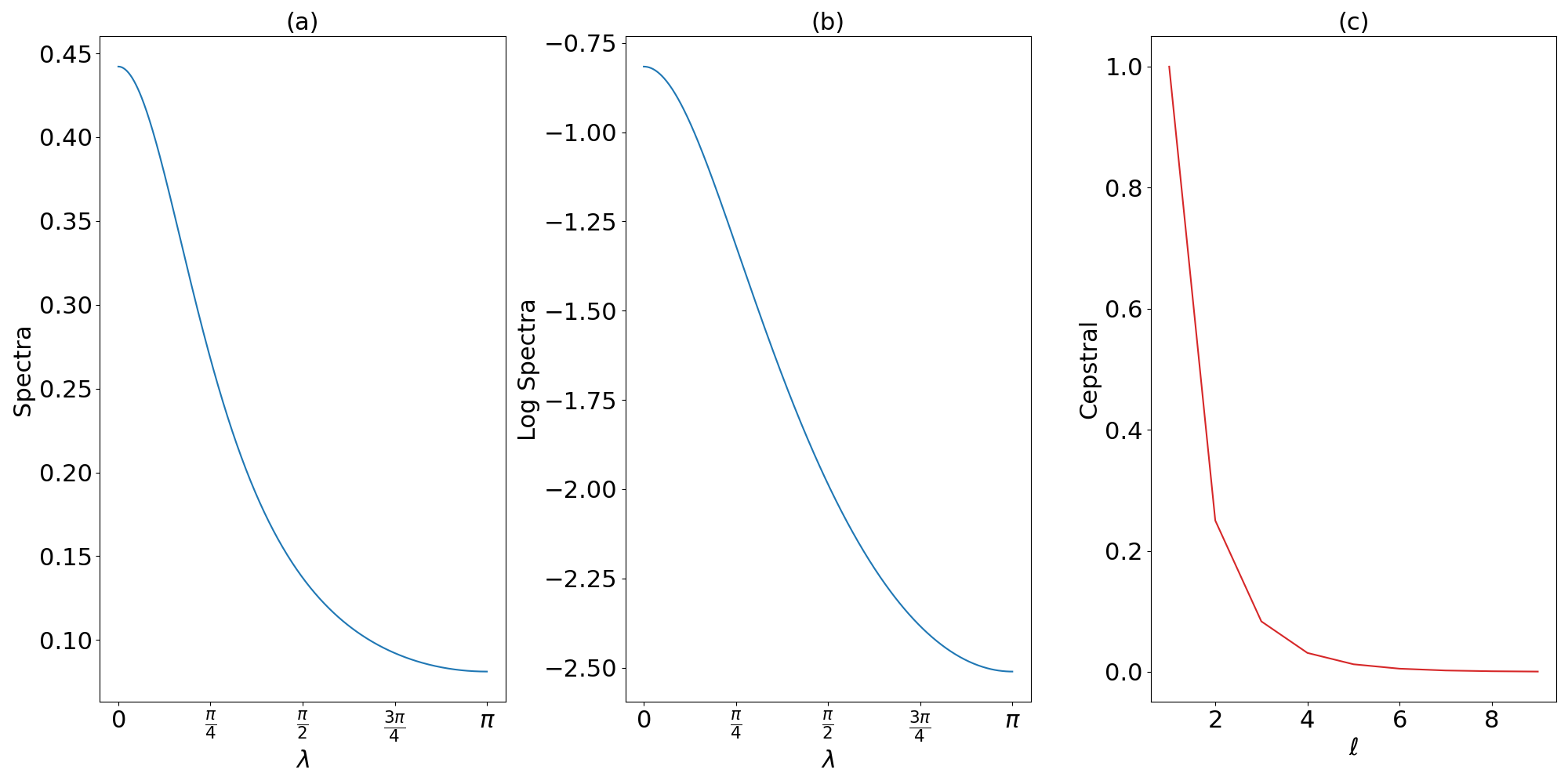}
            \caption{ $\phi = 0.5$ and $\sigma^2_{\epsilon\hspace{.1cm}} = 1$: (a) Spectra, (b) ln Spectra, (c) Cepstra}
            \label{fig1}   
        \end{figure}

        On the other hand, it is possible to see in Figure \ref{fig2} that with $\phi < 0$, the majority of the variability of the process is explained by high frequencies, close to $\pi$. As a result, the series is more volatile, and the cepstra exhibit behavior akin to a periodic function oscillating around zero. Additionally, these characteristics show that only about $\ell = 6$ is sufficient to explain the majority of the variability of the process.

        \begin{figure}[H]
        \centering
            \includegraphics[width=.9\linewidth, height=7cm]{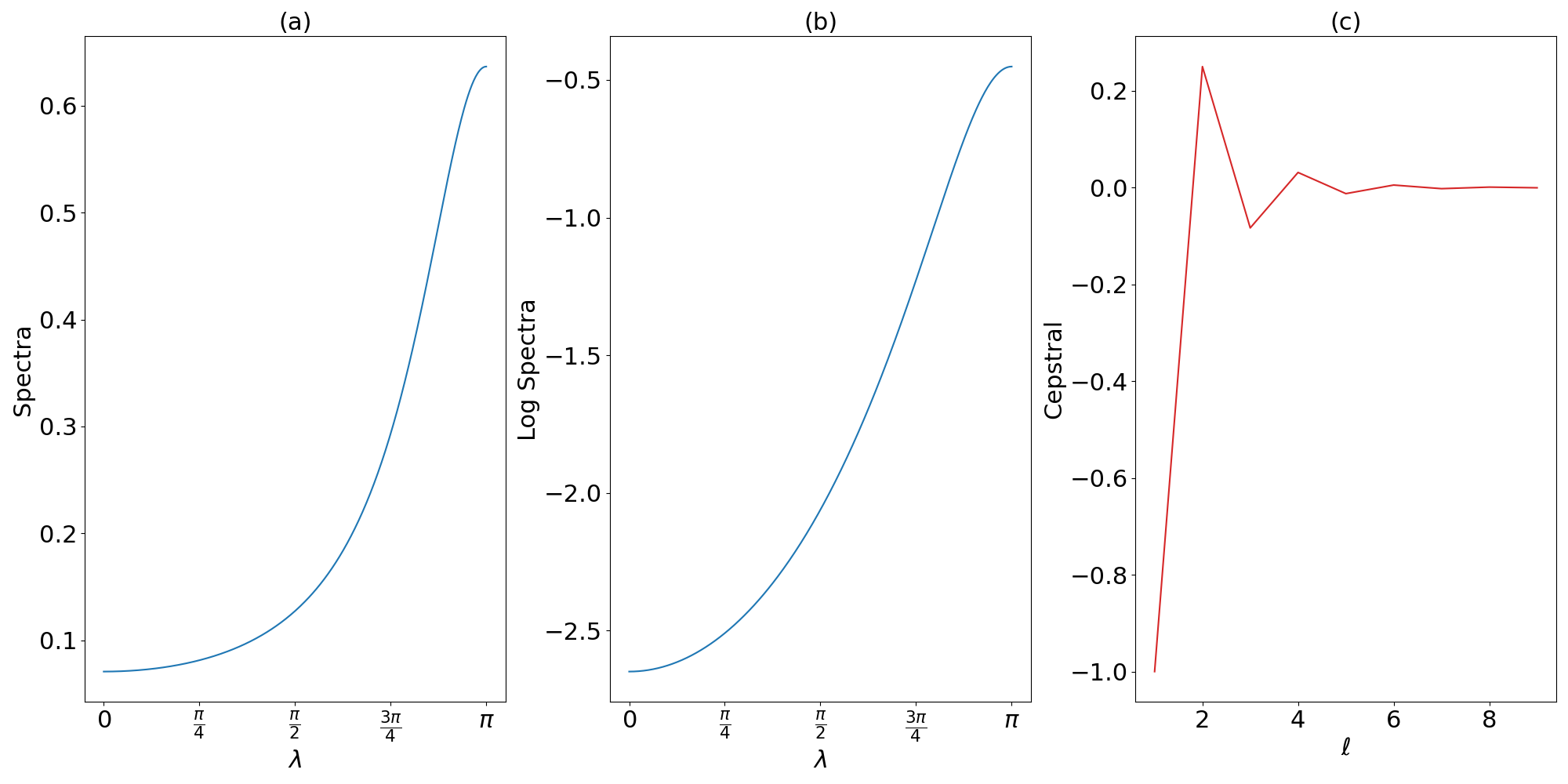}
            \caption{ $\phi = -0.5$ and $\sigma^2_{\epsilon\hspace{.1cm}} = 1$: (a) Spectra, (b) ln Spectra, (c) Cepstra}
            \label{fig2}   
        \end{figure} 

        As can be noted in Corollary \ref{corollary 2}, the behavior of cepstral coefficients for an MA(1) process differs from that of the AR(1) process presented in Corollary \ref{corollary 1} only by a change in sign for $\ell \geq 1$ and the parameter $\theta$.
        
        %%%%%%%%%%%%%%%%%%%%%%%%%%%%%%%%%%%%%%%%%%%
        %%%%% Corollary 2: MA(1) %%%%%%%%%%%%%%%%%%
        %%%%%%%%%%%%%%%%%%%%%%%%%%%%%%%%%%%%%%%%%%%
        \begin{corollary} \label{corollary 2}
        Let $X_{t} = \epsilon_t+\theta\epsilon_{t-1}$ be an invertible Gaussian MA$(1)$ process. Then:
        
        \begin{enumerate}[label=\roman*.]
        
            \item The ln spectra can be written as:
                \begin{equation} \tag{2.9}\label{eq:eq2.9}
                    ln [S_X(\lambda)] = ln \left\{ \frac{\sigma_{\epsilon}^2}{2\pi} \left[ 1+ \theta^2+2\theta cos(\lambda) \right] \right \}
                \end{equation}

                $$ = ln \left\{ \frac{\sigma_{\epsilon}^2}{2\pi}\right \} + ln\left[ 1+ \theta^2+2\theta cos(\lambda)\right]$$

                $$ = ln \left\{ \frac{\sigma_{\epsilon}^2}{2\pi}\right \} + 2\sum_{\ell=1}^{\infty} \frac{(-1)^{\ell+1}\theta^{\ell}}{\ell}cos(\lambda \ell).$$

            \item   Using the ln spectra, it will result in cepstra coefficients for $MA(1)$, given by:

                \begin{equation}\tag{2.10}\label{eq:eq2.10}
                    c_{\ell}=    \left\{ 
                    \begin{array}{rcl}
                        \begin{matrix}    
                            ln(\frac{\sigma^2}{2\pi}) &,if & \ell = 0\\
                            &&\\
                            \frac{(-1)^{\ell+1}2\theta^{\ell}}{\ell} &,if & \ell \geq 1.
                        \end{matrix}    
                    \end{array}\right.
                \end{equation}
        \end{enumerate}
        \end{corollary}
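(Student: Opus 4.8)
The plan is to observe that Corollary~\ref{corollary 2} is the $p=0$, $q=1$ case of Proposition~\ref{proposition 1} with the single moving-average root $\eta_1=\theta$, so that both (2.9) and (2.10) follow immediately by specialising that proposition. For a self-contained argument I would instead proceed directly: first compute the spectral density from (2.1) using the autocovariances of the invertible MA(1) model, namely $\gamma_X(0)=\sigma_\epsilon^2(1+\theta^2)$, $\gamma_X(\pm1)=\sigma_\epsilon^2\theta$, and $\gamma_X(\tau)=0$ for $|\tau|\ge2$. The infinite sum in (2.1) then collapses to three terms and gives
\[
S_X(\lambda)=\frac{\sigma_\epsilon^2}{2\pi}\bigl(1+\theta^2+2\theta\cos\lambda\bigr)=\frac{\sigma_\epsilon^2}{2\pi}\,\lvert 1+\theta e^{-i\lambda}\rvert^{2},
\]
with the Gaussianity playing no role at this stage.

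Next I would take logarithms, which yields the first two lines of (2.9), and then factor $1+\theta^2+2\theta\cos\lambda=(1+\theta e^{-i\lambda})(1+\theta e^{i\lambda})$ to split the term $\ln[1+\theta^2+2\theta\cos\lambda]$ into a complex-conjugate pair of logarithms. I would expand each of these by the Maclaurin series $\ln(1+z)=\sum_{\ell\ge1}(-1)^{\ell+1}z^{\ell}/\ell$ evaluated at $z=\theta e^{\mp i\lambda}$; invertibility forces $\lvert\theta\rvert<1$, so $\lvert z\rvert<1$ for every $\lambda$ and the series converges. Adding the two expansions and using $e^{i\lambda\ell}+e^{-i\lambda\ell}=2\cos(\lambda\ell)$ collapses the conjugate pair to the last line of (2.9),
\[
\ln S_X(\lambda)=\ln\!\Bigl(\frac{\sigma_\epsilon^2}{2\pi}\Bigr)+2\sum_{\ell=1}^{\infty}\frac{(-1)^{\ell+1}\theta^{\ell}}{\ell}\cos(\lambda\ell).
\]

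Since $\lvert\theta\rvert<1$, this cosine series converges absolutely and uniformly in $\lambda$ (being dominated by $\sum_\ell\lvert\theta\rvert^{\ell}/\ell$), so it is the quefrency-domain expansion of $\ln S_X$; comparing it with the convention used in (2.6) then reads off the cepstral coefficients $c_0=\ln(\sigma^2/2\pi)$ and $c_\ell=2(-1)^{\ell+1}\theta^{\ell}/\ell$ for $\ell\ge1$, which is (2.10). This also makes transparent the comparison with Corollary~\ref{corollary 1} noted before the statement: one passes from the AR(1) coefficients to the MA(1) ones by replacing $\phi$ with $\theta$ and inserting the alternating factor $(-1)^{\ell+1}$.

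The step that will need the most care is the handling of the complex logarithm: one must single out an analytic branch of $\ln(1+\theta e^{-i\lambda})$ and justify the termwise expansion. This is exactly what $\lvert\theta\rvert<1$ secures, since then the circle $\{1+\theta e^{-i\lambda}:\lambda\in[-\pi,\pi]\}$ lies entirely in the open right half-plane and so avoids the principal branch cut $(-\infty,0]$. To dispense with the complex logarithm altogether, one may instead differentiate $\ln(1+\theta^2+2\theta\cos\lambda)$ in $\lambda$, recognise the rational function $-2\theta\sin\lambda/(1+\theta^2+2\theta\cos\lambda)$ as the uniformly convergent sine series $-2\sum_{\ell\ge1}(-1)^{\ell+1}\theta^{\ell}\sin(\lambda\ell)$, integrate term by term, and pin down the integration constant at $\lambda=0$, where both sides reduce to $2\ln(1+\theta)$; the remaining manipulations are routine.
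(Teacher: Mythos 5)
Your proposal is correct and follows essentially the same route as the paper's own proof (given in the appendix, where the MA(1) case appears under the heading ``Proof of Corollary 1'' with the parameter $\theta$): factor $1+\theta^2+2\theta\cos\lambda$ via Euler's formula into the conjugate pair $(1+\theta e^{i\lambda})(1+\theta e^{-i\lambda})$, expand each logarithm by the Mercator series $\ln(1+x)=\sum_{\ell\ge 1}(-1)^{\ell+1}x^{\ell}/\ell$, and recombine with $e^{i\lambda\ell}+e^{-i\lambda\ell}=2\cos(\lambda\ell)$ to read off the coefficients. The extra care you take with the autocovariance computation of $S_X$, the convergence under $\lvert\theta\rvert<1$, and the branch of the complex logarithm only adds rigor to steps the paper leaves implicit.
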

        
        Similarly, in Figures \ref{fig3} and \ref{fig4}, panel (a) shows low and high frequencies, respectively. Thus, when $\theta > 0$, the process exhibits behavior similar to that of $\phi > 0$, while $\theta < 0$ is akin to $\phi < 0$. However, as can be noted in panel (c), the cepstra exhibit opposite behavior, indicating that low frequency in the MA$(1)$ process with $\theta = 0.5$ is associated with a periodic function in the cepstra. This is a result of the MA$(1)$ spectra having a positive sign in the cosine term, whereas the AR(1) process has a negative one.
        
        %%%%%%%%%%%%%%%%%%%%%%%%%%%%%%%%%%%%%%%%%%%
        \begin{figure}[H]
        \centering
            \includegraphics[width=.9\linewidth, height=7cm]{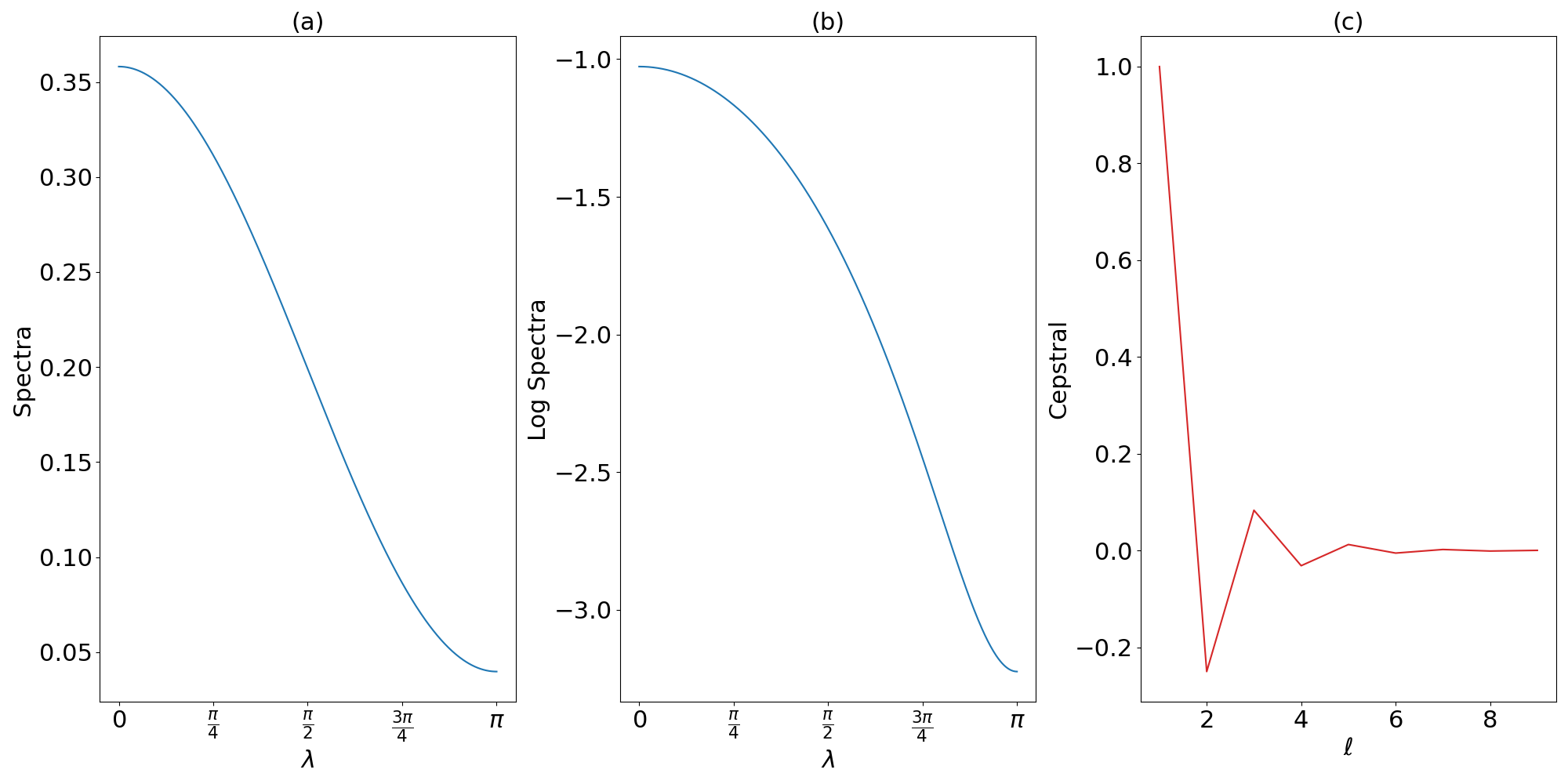}
            \caption{ $\theta = 0.5$ and $\sigma^2_{\epsilon\hspace{.1cm}} = 1$: (a) Spectra, (b) ln Spectra, (c) Cepstra}
            \label{fig3}   
        \end{figure}
        %%%%%%%%%%%%%%%%%%%%%%%%%%%%%%%%%%%%%%%%%%%
        \begin{figure}[H]
        \centering
            \includegraphics[width=.9\linewidth, height=7cm]{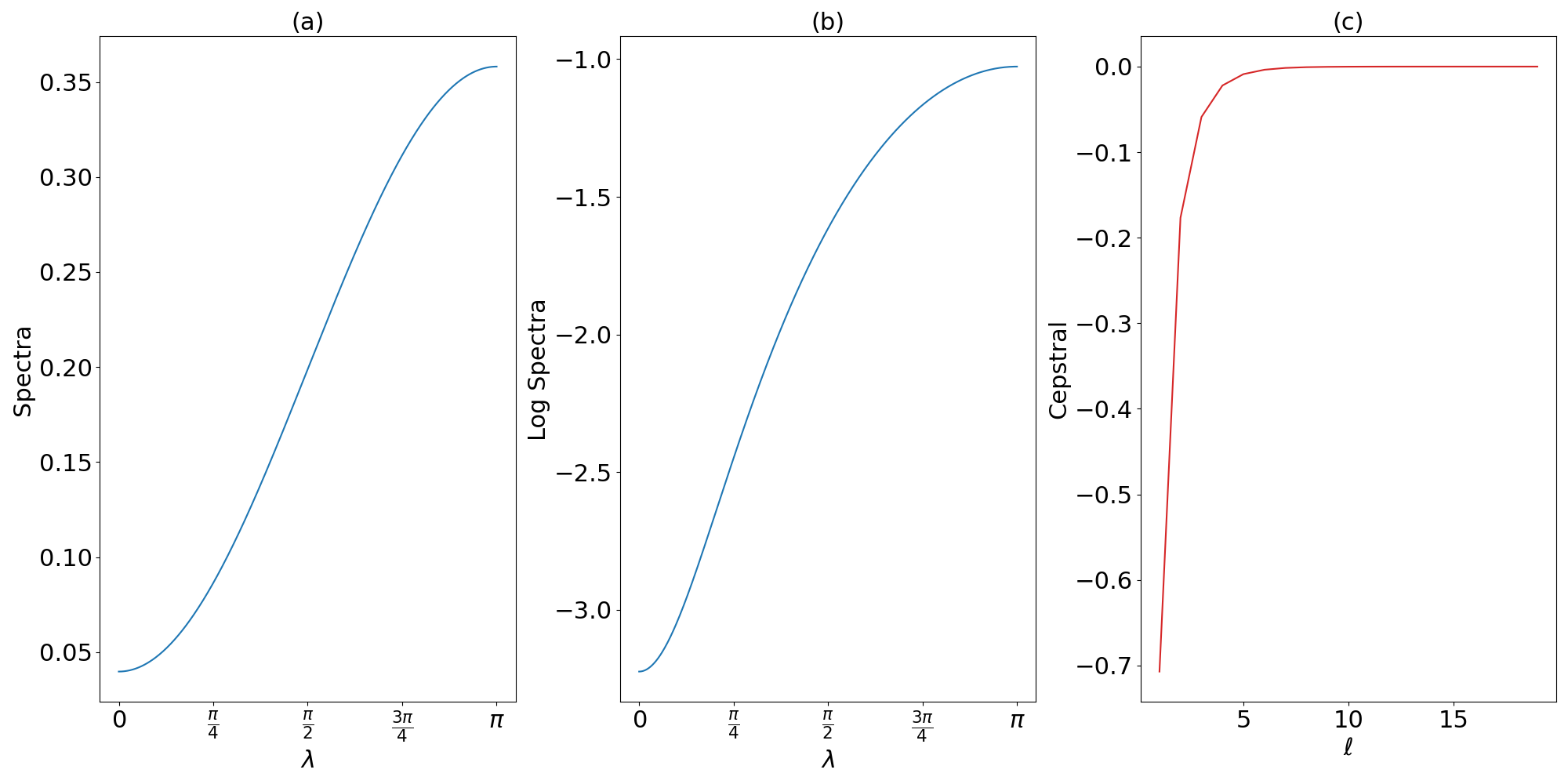}
            \caption{ $\theta = -0.5$ and $\sigma^2_{\epsilon\hspace{.1cm}} = 1$: (a) Spectra, (b) ln Spectra, (c) Cepstra}
            \label{fig4}   
        \end{figure}
        %%%%%%%%%%%%%%%%%%%%%%%%%%%%%%%%%%%%%%%%%%%
        %%%%% Corollary 3: ARMA(1,1) %%%%%%%%%%%%%%
        %%%%%%%%%%%%%%%%%%%%%%%%%%%%%%%%%%%%%%%%%%%
        \begin{corollary} \label{corollary 3}
        Let $X_{t} = \phi X_{(t-1)}+\epsilon_{t}+\theta\epsilon_{t-1}$ be an invertible stationary Gaussian $AR(1,1)$ process. Then:
        
        \begin{enumerate}[label=\roman*.]
            
            \item The ln spectra can be written as:

            \begin{equation} \tag{2.11}\label{eq:eq2.11}
                ln S_X(\lambda) = ln \left\{ \frac{\sigma_{\epsilon}^2}{2\pi} \left[ \frac{1+ \theta^2+2\theta cos(\lambda)} {1+ \phi^2-2\phi cos(\lambda)} \right] \right \}
            \end{equation}

            $$ = ln \left\{ \frac{\sigma_{\epsilon}^2}{2\pi}\right \}+ln\left[ 1+ \theta^2+2\theta cos(\lambda)\right] - ln\left[ 1+ \phi^2-2\phi cos(\lambda) \right]$$

            $$ = ln \left\{ \frac{\sigma_{\epsilon}^2}{2\pi}\right \} + 2\left\{\sum_{\ell=1}^{\infty} \frac{(-1)^{\ell+1}\theta^{\ell}}{\ell}cos(\lambda \ell)+\sum_{\ell=1}^{\infty} \frac{\phi^{\ell}}{\ell}cos(\lambda \ell)\right\}.$$

            \item   Using the ln spectra, it will result in cepstra coefficients for $ARMA(1,1)$, given by:
                \begin{equation}\tag{2.12}\label{eq:eq2.12}
                    c_{\ell}=    \left\{ 
                    \begin{array}{rcl}
                        \begin{matrix}    
                            ln(\frac{\sigma^2}{2\pi}) &,if & \ell = 0\\
                            &&\\
                            \frac{2\theta^{\ell}}{\ell}+ \frac{(-1)^{\ell+1}2\theta^{\ell}}{\ell} &,if & \ell \geq 1.
                        \end{matrix}    
                    \end{array}\right.
                \end{equation}
        \end{enumerate}
        \end{corollary}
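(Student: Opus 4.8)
\emph{Proof proposal.} The plan is to obtain Corollary~\ref{corollary 3} as the $p=q=1$ specialization of Proposition~\ref{proposition 1}, or equivalently by superposing the AR$(1)$ and MA$(1)$ computations of Corollaries~\ref{corollary 1} and~\ref{corollary 2}. First I would write down the spectral density of the ARMA$(1,1)$ model $X_t = \phi X_{t-1} + \epsilon_t + \theta\epsilon_{t-1}$: dividing the moving-average representation through by the autoregressive polynomial and taking the squared modulus of the transfer function on $[-\pi,\pi]$ gives
\[
S_X(\lambda) = \frac{\sigma_{\epsilon}^2}{2\pi}\,\frac{\lvert 1+\theta e^{-i\lambda}\rvert^2}{\lvert 1-\phi e^{-i\lambda}\rvert^2} = \frac{\sigma_{\epsilon}^2}{2\pi}\,\frac{1+\theta^2+2\theta\cos(\lambda)}{1+\phi^2-2\phi\cos(\lambda)},
\]
which is \eqref{eq:eq2.11}. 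Invertibility ($\lvert\theta\rvert<1$) and stationarity ($\lvert\phi\rvert<1$) keep the numerator and denominator strictly positive, so the logarithm is well defined and, using $\ln(ab/c)=\ln a+\ln b-\ln c$, splits into $\ln(\sigma_{\epsilon}^2/2\pi)$ plus $\ln\!\big(1+\theta^2+2\theta\cos(\lambda)\big)$ minus $\ln\!\big(1+\phi^2-2\phi\cos(\lambda)\big)$, i.e.\ the second line of \eqref{eq:eq2.11}.

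The one genuinely analytic ingredient is the Fourier expansion of each logarithmic factor. For a real $a$ with $\lvert a\rvert<1$,
\[
\ln\lvert 1+ae^{-i\lambda}\rvert^2 = 2\operatorname{Re}\ln(1+ae^{-i\lambda}) = 2\operatorname{Re}\sum_{\ell=1}^{\infty}\frac{(-1)^{\ell+1}a^{\ell}e^{-i\ell\lambda}}{\ell} = 2\sum_{\ell=1}^{\infty}\frac{(-1)^{\ell+1}a^{\ell}}{\ell}\cos(\lambda\ell),
\]
the series converging absolutely and uniformly on $[-\pi,\pi]$. Applying this with $a=\theta$ to the MA factor, and with $a=-\phi$ to the AR factor (so that $1+\phi^2-2\phi\cos(\lambda)=\lvert 1-\phi e^{-i\lambda}\rvert^2$ and $(-1)^{\ell+1}(-\phi)^{\ell}=-\phi^{\ell}$), the subtracted term $-\ln\!\big(1+\phi^2-2\phi\cos(\lambda)\big)$ turns into $+2\sum_{\ell\ge1}\tfrac{\phi^{\ell}}{\ell}\cos(\lambda\ell)$; collecting the pieces yields the third line of \eqref{eq:eq2.11}.

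For part~(ii): the cepstrum is by definition the inverse Fourier transform of $\ln S_X$, so each cepstral coefficient $c_\ell$ is precisely the coefficient of $\cos(\lambda\ell)$ in the expansion just obtained. Reading these off gives $c_0=\ln(\sigma_{\epsilon}^2/2\pi)$ and, for $\ell\ge1$, $c_\ell = \tfrac{2\phi^{\ell}}{\ell} + \tfrac{(-1)^{\ell+1}2\theta^{\ell}}{\ell}$, which exhibits the ARMA$(1,1)$ cepstrum as the superposition of the AR$(1)$ cepstrum of Corollary~\ref{corollary 1} and the MA$(1)$ cepstrum of Corollary~\ref{corollary 2} (with the common $c_0$ counted once), i.e.\ \eqref{eq:eq2.12}. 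I do not anticipate a real obstacle here; the only points requiring care are the sign bookkeeping in the $a=-\phi$ substitution and invoking the correct radius of convergence, both of which are controlled by the stationarity and invertibility assumptions.
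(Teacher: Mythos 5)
Your proposal is correct and follows essentially the same route as the paper: write the spectral density as a ratio of squared moduli, split the logarithm, factor each term via Euler's formula as $(1+ae^{i\lambda})(1+ae^{-i\lambda})$, expand with the Taylor series of $\ln(1+x)$, and recombine the conjugate exponentials into a cosine series --- which is exactly what the appendix does for the MA and AR factors separately before superposing them. One remark worth keeping: the coefficient you read off, $c_\ell = \frac{2\phi^{\ell}}{\ell}+\frac{(-1)^{\ell+1}2\theta^{\ell}}{\ell}$ for $\ell\ge 1$, is the one consistent with the third line of \eqref{eq:eq2.11}, whereas the first term of the paper's display \eqref{eq:eq2.12} writes $\theta$ where $\phi$ should appear, which is evidently a typo that your derivation correctly avoids.
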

        %%%%%%%%%%%%%%%%%%%%%%%%%%%%%%%%%%%%%%%%%%%

        \begin{corollary} \label{corollary 4} Under the Assumption of a stationary ARMA model, the cepstral coefficients are of order $c_{\ell} = \mathcal{O}(\ell^{-3/2})  = \operatorname{o}(\ell^{-1/2}) \quad \text{as} \quad \ell \to \infty$.
            
        \end{corollary}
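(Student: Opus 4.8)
The plan is to read the rate straight off the closed form for the cepstral coefficients established in Proposition~\ref{proposition 1}. Since the cepstrum depends only on the spectral density $S_X$, and every stationary ARMA spectral density admits a causal and invertible representation with the same $S_X$ (hence the same $\log S_X$ and the same cepstral coefficients), we may assume without loss of generality that the process is both stationary and invertible; this is also exactly the condition under which the logarithmic expansions in Proposition~\ref{proposition 1} converge. Under that assumption every $\eta_i$ and every $\zeta_r$ lies strictly inside the unit disc, so
\[
\rho:=\max\{|\eta_1|,\dots,|\eta_q|,|\zeta_1|,\dots,|\zeta_p|\}<1
\]
is well defined.

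Next I would start from equation~\eqref{eq:eq2.6}: for $\ell\ge 1$,
\[
c_\ell = \frac{2}{\ell}\left(\sum_{i=1}^{q}(-1)^{\ell+1}\eta_i^{\ell}+\sum_{r=1}^{p}\zeta_r^{\ell}\right),
\]
so by the triangle inequality $|c_\ell|\le \frac{2}{\ell}\bigl(\sum_{i=1}^{q}|\eta_i|^{\ell}+\sum_{r=1}^{p}|\zeta_r|^{\ell}\bigr)$, and therefore
\[
|c_\ell|\le \frac{2(p+q)}{\ell}\,\rho^{\ell},\qquad \ell\ge 1 .
\]
Both claimed statements then follow from this one inequality. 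Because exponential decay dominates any polynomial, $\ell^{3/2}\cdot\frac{\rho^{\ell}}{\ell}=\ell^{1/2}\rho^{\ell}\to 0$ as $\ell\to\infty$, which gives $c_\ell=\mathcal{O}(\ell^{-3/2})$; and the implication $\mathcal{O}(\ell^{-3/2})=\operatorname{o}(\ell^{-1/2})$ is immediate since $\ell^{-3/2}/\ell^{-1/2}=\ell^{-1}\to 0$. An equally short alternative, avoiding the explicit coefficient formula, is to note that for a stationary invertible ARMA process $S_X(\lambda)$ is a rational function of $\cos\lambda$ that is bounded and bounded away from $0$, so $\log S_X$ extends to a real-analytic $2\pi$-periodic function and its Fourier coefficients — which are, up to the usual factor $2$ and the special value at $\ell=0$, precisely the $c_\ell$ — decay exponentially, i.e. $|c_\ell|\le C\rho^{\ell}$.

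There is no genuine obstacle in the argument; the only points that need care are (i) invoking the correct root convention (or passing to the causal/invertible version) so that stationarity and invertibility really do yield $|\eta_i|,|\zeta_r|<1$, and (ii) being explicit that the stated $\ell^{-3/2}$ rate is far from sharp — the true decay is geometric — and is recorded in this weaker polynomial form only because that is the shape in which it is used in the subsequent asymptotic/consistency and truncation arguments (in particular it guarantees $\sum_{\ell\ge 1}c_\ell^2<\infty$ with room to spare). I would close by remarking that Corollaries~\ref{corollary 1}--\ref{corollary 3} are special cases, where $\rho$ equals $|\phi|$, $|\theta|$, or $\max\{|\phi|,|\theta|\}$ respectively.
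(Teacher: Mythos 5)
Your proof is correct and takes essentially the same route as the paper, which offers no separate argument for Corollary~\ref{corollary 4} beyond what is implicit in Proposition~\ref{proposition 1} and Remark~\ref{remark 1}: there the decay $c_\ell \sim \vartheta\left|\delta^{\ell}/\ell\right|$ with $|\delta|<1$ is recorded, which is exactly the geometric-over-$\ell$ bound you extract from Equation~\ref{eq:eq2.6}, and from which both $\mathcal{O}(\ell^{-3/2})$ and $\operatorname{o}(\ell^{-1/2})$ follow at once. Your two caveats --- that the roots must be taken in the convention making $|\eta_i|,|\zeta_r|<1$ (stationarity/invertibility), and that the stated polynomial rate is far from the sharp geometric one --- are both apt and worth keeping.
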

      %The above results clearly show that the cepstrum coefficients decay faster that in an exponential way such as the ACF function of ARMA models.\\ 

        \begin{remark} \label{remark 1}
            Note that the cepstral coefficient decays at the rate $c_\ell$ $\sim$  $\vartheta|\frac{\delta ^{\ell}}{\ell}|$, $-\infty < \ell < \infty$, where $\vartheta$ is a constant and $|\delta| <1$, while the autocorrelation function of the standard Box-Jenkins ARMA models decays approximately in the exponential form. For example, the ACF of the AR(1) model is $\phi^l$. The mathematical different behaviour between the cepstrum coefficients and the ACF values clearly show that the former displays most of the time-variation of the process with a smaller lag than the ACF function, indicating that most information is contained in lower-order cepstral coefficients.
        \end{remark}

        Figures \ref{fig1}, \ref{fig2}, \ref{fig3} and \ref{fig4} display the behaviour of spectra, ln spectra and cepstra of MA(1) and AR(1) models, respectively.  In both cases, we see the frequencies near zero, which contribute most to the process variance. The cepstrum shows that the initial lags contain all the information necessary to extract the features of the processes. For positive $\phi$, the series is positively correlated the spectrum and cepstrum are dominated by low frequency and small lags, respectively,   which means that the series is relatively smoothed. When $\phi <0$, the series is negatively correlated, thus the spectrum and cepstrum are dominated by high-frequencies and lags, respectively. This means that the data is more ragged series. 

        %%%%%%%%%%%%%%%%%%%%%%%%%%%%%%%%%%%%%%%%%%%%%%%%%%%%
        \subsection{Cepstral Linear Discriminant Analysis}
        %%%%%%%%%%%%%%%%%%%%%%%%%%%%%%%%%%%%%%%%%%%%%%%%%%%%

                Let $\pmb{c^{(\ell)}_{jk\ell}} = [c_{j1\ell} \quad c_{j2\ell} \quad \cdots \quad c_{jn_j\ell}]^{T}$ be the cepstral vector for a fixed $\ell$ of the $j$th population $\Pi_j$, which is centered in the mean  $\mathbb{E}(c_{jk\ell}  \mid \Pi=j ) =$ $\mu_{j,l}$ and  
                $\pmb{c^{(k)}_{jk\ell}} = \left[c_{jk0} \quad c_{jk1} \quad \cdots \quad c_{jk\ell}\right]^{T}$ be the cepstral vector for a fixed replication $k$. The centroids (the mean vector) are defined as $\pmb{\mu_{j\ell}^{(k)}} = [\mu_{j0} \quad \mu_{j1} \quad \cdots \quad \mu_{j\ell}]^{T}$ , $\pmb{\mu_{j\ell}^{(\ell)}} = [\mu_{1\ell} \quad \mu_{2\ell} \quad \cdots \quad \mu_{J1\ell}]^{T}$ being for a given population, the vector mean across replicates and the vector mean across population, given a fixed cepstra $\ell$, respectively.
    
               Therefore, let ($\mu_{j\ell}$) be the within-mean by cepstra, ($\pmb{\mu}$) be the overall mean centroid, ($\Omega_W$) be the within-variance and ($\Omega_B$) be the between-variance. Moreover, the operator $\left\langle\,\cdot,\cdot\right\rangle$ is the inner product in the Euclidean space $\mathbb{R}^{n}$. Then, defined as follows:
    
            \begin{enumerate}[label=\roman*.] 
                \item \textbf{Within-class mean by cepstra:}
                    \begin{equation}\tag{2.13}\label{eq:eq2.13}
                        \mu_{j\ell} = \mathbb{E}(\mu_{jk\ell}) = \mathbb{E}(\mathbf{c}^{(\ell)}_{jk\ell} \mid \Pi=j).
                    \end{equation}
            
                \item \textbf{Overall mean centroid:}
                    \begin{equation}\tag{2.14}\label{eq:eq2.14}
                        \mathbf{\mu} = \mathbb{E}(\mathbf{c}^{(\ell)}_{jk\ell}) = \mathbb{E}[\mathbb{E}(\mathbf{c}^{(\ell)}_{jk\ell} \mid \Pi=j )] = \left[\left\langle \mu_{j\ell}^{(0)}, \mathbf{f} \right\rangle \quad \left\langle \mu_{j\ell}^{(1)}, \mathbf{f} \right\rangle \cdots \left\langle \mu_{j\ell}^{(L-1)}, \mathbf{f} \right\rangle \right]^T.
                    \end{equation}

                \item \textbf{Between-class variance:}
                    \begin{equation}\tag{2.15}\label{eq:eq2.15}
                        \Omega_B = \text{Var}[\mathbb{E}(\mathbf{c}_{j\ell} \mid \Pi=j)] = \left\lVert \mu_j - \mu \right\rVert^2 \left\langle \mathbf{\iota_{n}}, \mathbf{\iota_{n}} \right\rangle^{-1} = d^2(\mu_j, \mu) \left\langle \mathbf{\iota_{n}}, \mathbf{\iota_{n}} \right\rangle^{-1}.
                    \end{equation}

                \item \textbf{Within-class variance:}
                    \begin{equation}\tag{2.16}\label{eq:eq2.16}
                        \Omega_W = \mathbb{E}[\text{Var}(\mathbf{c}^{(k)}_{jk} \mid \Pi=j)] = \mathbb{E}\left[ (\mathbf{c}^{(k)}_{jk} - \mu_j)(\mathbf{c}^{(k)}_{jk} - \mu_j)^{T} \mid \Pi=j) \right].
                    \end{equation}

            \end{enumerate}

            Since $\Omega_W$ and $\Omega_B$ are  positive definite matrices,  the discriminant coefficients are derived from the following generalization of the maximization of the quadratic form on the unit sphere, usually called Generalized Rayleigh Quotient \citep[p. 1192]{shin08}
    
            \begin{equation}\tag{2.17}\label{eq:eq2.17}
                \pmb{p_1} = \underset{\pmb{p} \in \mathbb{R}^{\mathbb{L}}}{\operatorname{argmax}}
                    \left\{
                        \frac{Var[\mathbb{E}(\pmb{p}^{T}\pmb{c_{jk}}|\Pi{j}=j)]}{\mathbb{E}[Var(\pmb{p}^{T}\pmb{c_{jk}}|\Pi{j}=j)]}
                    \right\} 
                    = \underset{\pmb{p} \in \mathbb{R}^{\mathbb{L}}}{\operatorname{argmax}}
                    \left\{
                    \frac{\pmb{p}^T\Omega_B \pmb{p}}{\pmb{p}^T\Omega_w \pmb{p}} 
                    \right\}.
            \end{equation}
    
            or equivalently
    
            \begin{equation}\tag{2.18}\label{eq:eq2.18}
            \begin{aligned}
            \max_{\pmb{p}\in \R^{\mathbb{L}}} \quad & \pmb{p}^{T}\Omega_{B}\pmb{p}\\
            \text{subject to} \quad &
                    \pmb{p}^{T}\Omega_{w}\pmb{p} = 1. & \\ 
                \end{aligned}
            \end{equation}
            
            which provide a given $\pmb{p_1} = [p_{10},\cdots,p_{1(L-1)}]^T \neq 0$.
            
            The optimization problem can be achieved     sequentially by Lagrange multiplier
    
            \begin{equation}\tag{2.19}\label{eq:eq2.19}
                \mathcal{L}(\pmb{p}) =  \pmb{p}^{T}\Omega_{B}\pmb{p} - \lambda(\pmb{p}^{T}\Omega_{w}\pmb{p}-1).
            \end{equation}
    
            The first order condition is given by the gradient vector equals to zero, i.e., $\nabla \mathbb{L}(\pmb{p}) = 0$, then
    
            \begin{equation}\tag{2.20}\label{eq:eq2.20}
                2\Omega_{B}\pmb{p} - 2\lambda\Omega_{w}\pmb{p} = 0 \Longleftrightarrow{} (\Omega_{B} - \lambda\Omega_{w})\pmb{p}=0  \Longleftrightarrow{}(\Omega_{w}^{-1}\Omega_{B} - \lambda I)\pmb{p}=0.
            \end{equation}
    
            \noindent where $I$ is the identity matrix.
    
            The solution of Equation \ref{eq:eq2.18} (\ref{eq:eq2.20}) gives the generalized eigenvector  $\pmb{p}$  see, for example, \cite[p. 497]{golub}. Based on the Extended Cauchy-Schwartz Inequality,  it is easy to show that the first orthonormal vector, say $\pmb{p_1}$, is given from the largest eigenvalue $\lambda_1$ of the matrix $\Omega_{w}^{-1}\Omega_{B}$. Then, it provides the rate of the two quadratic forms where the numerator takes in account the variability matrix between groups ($\Omega_{B}$) and the denominator corresponds to variability within group ($\Omega_{w}$). That is, the above solutions can also be derived from     
            
            \begin{equation}\tag{2.21}\label{eq:eq2.21}
                \pmb{p^{*}_1} = \underset{\pmb{p} \in \mathbb{R}^{\mathbb{L}}}{\operatorname{argmax}}
                    \left\{
                        \frac{\pmb{p}^T \Omega_w^{-1} \Omega_B \pmb{p}}{\pmb{p}^T \pmb{p}}
                    \right\}.
            \end{equation}
    
            \noindent which leads to the following proposition.
    
            \begin{proposition} \label{proposition 2} Let $A$ and $B$ both be symmetric positive definite $(L \times L)$ and $\pmb{x}$ and $\pmb{y}$ are  orthonormal $(L\times1)$ vectors, then $\pmb{x_1} = \pmb{y_1}$, where 
    
            \begin{equation*}
                \pmb{x_1} = \underset{\pmb{x} \in \mathbb{R}^{\mathbb{L}}}{\operatorname{argmax}}
                    \left\{
                    \frac{\pmb{x}^T A \pmb{x}}{\pmb{x}^T B \pmb{x}} 
                    \right\}. 
                \end{equation*}
    
                and
    
                \begin{equation*}
                \pmb{y_1} = \underset{\pmb{y} \in \mathbb{R}^{\mathbb{L}}}{\operatorname{argmax}}
                    \left\{
                    \frac{\pmb{y}^T B^{-1}A \pmb{y}}{\pmb{y}^T\pmb{y}} 
                    \right\}. 
                \end{equation*}
                
            \end{proposition}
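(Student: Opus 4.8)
\textit{Proof idea.} The plan is to collapse both variational problems into a single ordinary symmetric eigenvalue problem by a square-root change of coordinates, and then carry the maximizer back and forth. Since $B$ is symmetric positive definite it has a unique symmetric positive definite square root $B^{1/2}$ with inverse $B^{-1/2}$; set $C := B^{-1/2} A B^{-1/2}$, which is again symmetric positive definite.

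First I would dispose of $\pmb{x_1}$. The substitution $\pmb{x} = B^{-1/2}\pmb{u}$ is a bijection of $\mathbb{R}^{L}\setminus\{0\}$ onto itself, and under it the generalized Rayleigh quotient becomes an ordinary one:
\[
\frac{\pmb{x}^{T} A \pmb{x}}{\pmb{x}^{T} B \pmb{x}} \;=\; \frac{\pmb{u}^{T} C \pmb{u}}{\pmb{u}^{T}\pmb{u}}.
\]
By the Rayleigh--Ritz characterization of a symmetric matrix — equivalently, by the Extended Cauchy--Schwarz inequality invoked above — the supremum of the right-hand side over $\pmb{u}\neq 0$ equals the largest eigenvalue $\lambda_1$ of $C$ and is attained precisely on the associated eigenspace; choosing a unit dominant eigenvector $\pmb{u_1}$ of $C$ gives $\pmb{x_1} = B^{-1/2}\pmb{u_1}$ up to scaling and sign. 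Then, from $C\pmb{u_1} = \lambda_1 \pmb{u_1}$, i.e. $B^{-1/2}AB^{-1/2}\pmb{u_1} = \lambda_1\pmb{u_1}$, left-multiplication by $B^{-1/2}$ yields $B^{-1}A\,\pmb{x_1} = \lambda_1 \pmb{x_1}$; and since $B^{-1}A = B^{-1/2} C B^{1/2}$ is similar to $C$, it shares its (real, positive) spectrum, so $\lambda_1$ is its largest eigenvalue as well.

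It remains to identify $\pmb{y_1}$ with the same direction. The maximizer of the Rayleigh quotient attached to $B^{-1}A$ is, by the first-order analysis leading to Equation \ref{eq:eq2.20}, the solution of $(B^{-1}A-\lambda I)\pmb{y}=0$ at $\lambda=\lambda_1$, which is exactly the dominant eigenvector of $B^{-1}A$; and we have just shown that eigenvector is $B^{-1/2}\pmb{u_1}=\pmb{x_1}$. Hence $\pmb{y_1}=\pmb{x_1}$ (up to sign and normalization), which is the claim.

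The step I expect to be the main obstacle is precisely this last identification: $B^{-1}A$ is in general \emph{not} symmetric, so the equivalence ``maximizer of the Rayleigh quotient $=$ dominant eigenvector'' is not automatic and must be argued carefully — either through the similarity $B^{-1}A = B^{-1/2}CB^{1/2}$, which shows the eigenvectors of $B^{-1}A$ are exactly $B^{-1/2}$ times those of the symmetric matrix $C$, or by observing that $B^{-1}A$ is self-adjoint for the inner product $\langle \pmb{u},\pmb{v}\rangle_{B}=\pmb{u}^{T}B\pmb{v}$ and recasting the quotient in that geometry. A minor companion point is uniqueness of the $\operatorname{argmax}$: when $\lambda_1$ is simple it is a single unit direction, while in the degenerate case one only gets agreement of the top eigenspaces; I would record the simple-$\lambda_1$ hypothesis (or the ``up to sign and normalization'' caveat) explicitly in the statement.
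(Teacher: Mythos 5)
Your change of coordinates $\pmb{x}=B^{-1/2}\pmb{u}$ with $C=B^{-1/2}AB^{-1/2}$ is exactly the route the paper takes, and the first half of your argument (that $\pmb{x_1}=B^{-1/2}\pmb{u_1}$ for $\pmb{u_1}$ the dominant eigenvector of $C$, hence $B^{-1}A\,\pmb{x_1}=\lambda_1\pmb{x_1}$) is correct and cleaner than the paper's version. But the obstacle you flag at the end is not a technicality to be ``argued carefully'' --- it is a genuine gap that neither of your two suggested repairs closes, and the proposition as stated is false without an extra hypothesis. The Euclidean Rayleigh quotient $\pmb{y}^{T}M\pmb{y}/\pmb{y}^{T}\pmb{y}$ of a non-symmetric $M$ depends only on the symmetric part $\tfrac12(M+M^{T})$, so $\pmb{y_1}$ is the dominant eigenvector of $\tfrac12(B^{-1}A+AB^{-1})$, not of $B^{-1}A$. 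The similarity $B^{-1}A=B^{-1/2}CB^{1/2}$ tells you where the eigenvectors of $B^{-1}A$ live, but says nothing about the maximizer of its \emph{Euclidean} quotient; and self-adjointness with respect to $\langle\pmb{u},\pmb{v}\rangle_{B}$ would identify the maximizer of a quotient whose denominator is $\pmb{y}^{T}B\pmb{y}$, whereas the denominator here is $\pmb{y}^{T}\pmb{y}$. Concretely, $B=\operatorname{diag}(1,4)$ and $A=\left(\begin{smallmatrix}2&1\\1&2\end{smallmatrix}\right)$ give a dominant eigenvector of $B^{-1}A$ in direction roughly $(1,0.15)$ but a maximizer of $\pmb{y}^{T}B^{-1}A\pmb{y}/\pmb{y}^{T}\pmb{y}$ in direction roughly $(1,0.36)$.

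The paper's own proof gets around this only by additionally supposing $P_A=P_B$, i.e.\ that $A$ and $B$ are simultaneously diagonalizable; under that hypothesis $B^{-1/2}AB^{-1/2}=B^{-1}A$ is symmetric and the two quotients literally coincide, so the claim is immediate. That assumption appears nowhere in the statement and is not satisfied by generic $\Omega_B,\Omega_W$. So the precise failure point in your write-up is the sentence invoking ``the first-order analysis leading to Equation \ref{eq:eq2.20}'': that stationarity condition $(B^{-1}A-\lambda I)\pmb{y}=0$ belongs to the generalized quotient with denominator $\pmb{y}^{T}B\pmb{y}$, not to the Euclidean quotient of the non-symmetric matrix $B^{-1}A$, whose stationary points instead satisfy $\tfrac12(B^{-1}A+AB^{-1})\pmb{y}=\lambda\pmb{y}$. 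To salvage the statement one must either add the commutativity hypothesis explicitly or restate the second $\operatorname{argmax}$ with denominator $\pmb{y}^{T}B\pmb{y}$ (equivalently, work in the $B$-inner product throughout). Your remarks on simplicity of $\lambda_1$ and the ``up to sign and scale'' caveat are correct but secondary to this.
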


            In the context of Equation \ref{eq:eq2.19}, the optimization process is a component principal analysis of $\Omega_w^{-1}\Omega_B$. Therefore, we seek for the discriminant function, which is a linear combination of the cepstral such that the conditional variance-between is as large as possible, relative to the conditional variance-within \cite[p. 431]{wichern07}. Then, the first linear combination (discriminants) gives the maximum relative conditional variance.
    
            \begin{remark}Note that the denominator of the Equation \ref{eq:eq2.19} $\pmb{p^T}\pmb{p}$ is actually a condition that $\pmb{p^T}\pmb{p} = 1$,i.e., unitary norm. It is a necessary condition to the vector $p$ that is going to be found for non increasing relative conditional variance of the discriminant function $Var[\mathbb{E}(\pmb{d_{jk}}|\Pi{j}=j)] = Var[\mathbb{E}(\pmb{p_1^*}^{T}c_{jk}|\Pi{j}=j)] = \pmb{p_1^*}^{T} \Omega_w^{-1} \Omega_B \pmb{p_1^*}$. 
            \end{remark}
    
            Note that $\lambda_1$ generated by the above results corresponds to the largest variability of the hyperellipsoids of $\Omega_{w}^{-1}\Omega_{B}$, in which  is related to the axis generated by $\pmb{p_1}$. In the sequence, $\lambda_2> \quad \lambda_3 \quad >\cdots > \quad \lambda_q$ are the correspondent axis generated by the orthonormal basis space $\pmb{p_2},..., \pmb{p_q}$  , where $q = min\{J-1,rank(A),rank(B)\}$ is the dimension of the subspace. Thus, the set of vectors $\pmb{p}_{1}, \pmb{p}_{2}, \cdots, \pmb{p}_{q}$ will be a basis of a subspace that best separates cepstra centroids from each other related to within-variance. Then, all replicates cepstra should be projected into the new subspace by a linear combination, which gives the discriminant functions (coordinates)
    
            \begin{equation}\tag{2.22}\label{eq:eq2.22}
                d_{jkq} = \left\langle\,\pmb{p}_{\ell},\pmb{c^{(k)}_{jk\ell}} \right\rangle =  \sum_{\ell=0}^{\infty} p_{q\ell}c_{jk\ell}.
            \end{equation}
    
            %%%%%%%%%%%%%%%%%%%%%%%%%%%%%%%%%%%%%%%%%%%%%
            \subsubsection{Decision Rules}
            %%%%%%%%%%%%%%%%%%%%%%%%%%%%%%%%%%%%%%%%%%%%%
    
            To establish the discriminant  boundaries of the hyperelliposids, that is, the decision area, this is computed using the loss (in making a decision) function based on square error loss given by
    
            \begin{equation}\tag{2.23}\label{eq:eq2.23}
                F(d_{*1},\cdots,d_{*q};\mu_j) = \left[\left\lVert \pmb{d}-\pmb{\mu_j} \right\rVert^2 - 2\ln(f_j) \right]=\left[\sum_{s=1}^{q} (d_{*s}-\mu_{js})^2 - 2\ln(f_j) \right]. 
            \end{equation}
            
            \noindent where $\pmb{d}{} = \left(d{*1}, \cdots, d_{*q}\right)^{T}$ stands as the discriminant vector for the unknown population time series, and $2\ln(f_j)$ is a correction factor for the number of time series in  population $j=1,...,J$.
    
            %%%%%% TEXT INCLUDE AFTER 30 JUN %%%%%%%%
            
            Once each replicated time series is classified, the classification rate can be conducted by each population and can be represented in a confusion matrix, which is a tool used to evaluate the performance of a classification model. It presents a summary of the prediction results on a classification problem. The matrix itself is a Table with two dimensions: "population" and "predicted population," and it is used to visualize the performance of the algorithm.
    
            In the context of classifying time series into $J$ populations using cepstral coefficients from stationary ARMA models, the confusion matrix provides insights into how well your Linear Discriminant Analysis (LDA) model is performing by comparing the actual population labels of the time series against the predicted labels.
            
            \begin{table}[H]
                \centering
                \caption{Normalized Confusion Matrix for $J$ Populations}
                \label{tab:1}
                \begin{adjustbox}{totalheight={3.9cm}}
                    \begin{tabular}{c|ccccc} 
                    \hline
                    \diagbox{\textbf{Predicted Population}}{\textbf{Actual Population}} & \textbf{1} & \textbf{2} & \textbf{3} & \textbf{...} & \textbf{J} \\ 
                    \hline
                    \textbf{1} & $\rho_{11}$ & $\rho_{12}$ & $\rho_{13}$ & ... & $\rho_{1J}$ \\
                    \textbf{2} & $\rho_{21}$ & $\rho_{22}$ & $\rho_{23}$ & ... & $\rho_{2J}$ \\
                    \textbf{3} & $\rho_{31}$ & $\rho_{32}$ & $\rho_{33}$ & ... & $\rho_{3J}$ \\
                    \textbf{...} & ... & ... & ... & ... & ... \\
                    \textbf{J} & $\rho_{J1}$ & $\rho_{J2}$ & $\rho_{J3}$ & ... & $\rho_{JJ}$ \\
                    \hline
                    total & 1 & 1 & 1 & ... & 1 \\
                    \hline\hline
                    \end{tabular}
                \end{adjustbox}
            \end{table}
            In this matrix, $\rho_{ij}$ represents the proportion of time series from the actual population $i$ that have been classified as belonging to the predicted population $j$. Each cell $\rho_{ij}$ is the ratio of the number of time series correctly or incorrectly classified into population $j$ to the total number of time series in the actual population $i$.
            
            The sum of $\rho_{ij}$ for each column (i.e., for each predicted population) equals 1 because all time series from the actual populations must be classified into one of the predicted populations, covering all possible classification outcomes. This reflects the total distribution of possible classifications and ensures that all predictions are accounted for, allowing for an accurate analysis of the classification model's performance.
            
            When using LDA to classify time series data into $J$ populations based on cepstral coefficients from stationary ARMA models, each time series is represented by its cepstral coefficients. The LDA is trained on the cepstral coefficients to learn the discriminant functions that separate the populations. The trained LDA model is used to predict the population for each time series in the test set. The confusion matrix is generated by comparing the actual population labels of the time series in the test set with the predicted labels.
            
            The steps to generate and use a confusion matrix are as follows: First, train the LDA model using the training set of time series data with known population labels. Then, apply the trained LDA model to the test set to obtain predicted population labels. Compare the actual labels with the predicted labels to populate the confusion matrix. Finally, identify which populations are often confused with each other and adjust the model or preprocessing steps to improve classification performance based on the insights obtained.
            
            The benefits of using a confusion matrix include detailed error analysis, which helps identify specific classes that are being misclassified and understand the nature of the misclassifications. It also aids in model improvement by providing insights into how to adjust the model or preprocessing steps to enhance classification performance. By using the confusion matrix in our classification analysis with LDA, we can gain a deep understanding of how well our model distinguishes between the populations and identify areas for improvement.
    %%%%%%%%%%%%%%%%%%%%%%%%%%%%%%%%%%%%%%%%%%%%%%%%
    \section{Estimation Methods}
    %%%%%%%%%%%%%%%%%%%%%%%%%%%%%%%%%%%%%%%%%%%%%%%
    
            Let $\Pi_j$ be the  $j$th population under the study and $\{X_{jk0},\cdots, X_{jk(N-1)}\}$ the $k$th realization of the process $\{X_{jkt}\}$ in the population $j=1,\cdots,J$. The total number of realizations corresponds to $n = \sum_{j=1}^J n_j$. The proportion of time series in each population is $f_j = n_j/n = \left\langle\,\pmb{\iota_n},\pmb{\iota_n} \right\rangle^{-1} \left\langle\,\pmb{\iota_{n_j}},\pmb{\iota_{n_j}} \right\rangle = \left\langle\,\pmb{\iota_n},\pmb{\iota_n} \right\rangle^{-1} \left\langle\,\pmb{\iota_{n_j}},\pmb{\iota_{n_j}} \right\rangle$, whose vector is given by $\pmb{f} = [f_1 \quad f_2 \quad \cdots \quad f_J]^{T}$, the vectors $\pmb{\iota_{n}}$ and $\pmb{\iota_{n_j}}$ are vectors  of ones with size $(n\times1)$ and $(n_j\times1)$, respectively. 
    
            As stated previously, the cepstral estimated coefficients are the result of the inverse Fourier transform of the logarithm of the spectra, i.e., the inverse Fourier transform of \( \ln \left( S_X(\lambda) \right ) \) for a fixed \( \lambda \). The logarithm is important to decompose the product caused by the convolution to generate the spectra, separating the contribution of the white noise and the autocorrelations to the total variance of the process into sum terms. Since all processes are considered to be stationary, the Fourier transform needed is just using the cosine series \( \cos(\lambda \ell) \), \( \ell = 1, 2, \ldots \), generating the general cepstral coefficients given by 

            \begin{equation}\tag{3.1}\label{eq:eq3.1}
                    \hat{c}_{jk\ell}=    \left\{ 
                    \begin{array}{rcl}
                        \begin{matrix}    
                            N^{-1}\sum_{m=0}^{N-1}ln \left(\hat{S}_{jk}(\lambda_m) \right) &,if & \ell = 0\\
                            &&\\
                            N^{-1}\sum_{m=0}^{N-1} ln \left(\hat{S}_{jk}(\lambda_m)\right)cos \left( \lambda_m \ell \right) &,if & \ell \geq 1.
                        \end{matrix}    
                    \end{array}\right.
                \end{equation}

            \noindent where $\hat{S}_{jk}(\lambda_m)$ is an estimator of the spectra for the $k$th replicates of the population $j$ and $\lambda_m = 2\pi m /N$, $m \in \mathbb{Z}$. Any estimator of spectra is appropriate to estimate cepstral coefficients described in Equation \ref{eq:eq2.6}, once for a fixed $\ell$, $Var \left( \hat{c}_{jk\ell} - c_{jk\ell} \right) \rightarrow 0$ when $N \rightarrow \infty$.
            
            Based on Corollary \ref{corollary 4} and Remark \ref{remark 1}, these coefficients decay fast for ARMA$(p,q)$ process. Therefore, the $ln \left( S_X(\lambda) \right )$ can be approximated by using  finite numbers fixed of the coefficients, that is, hereafter $\ell= 0, 1, 2, ...., L-1$, where $L$ is an integer positive value such that  $L < N, n$.
            
            Once the cepstral coefficients are estimated, for each $j$th population $\Pi_j$, the cepstral vectors can be estimated by  $\pmb{\hat{c}^{(\ell)}_{jk\ell}} = [\hat{c}_{j1\ell} \quad \hat{c}_{j2\ell} \quad \cdots \quad \hat{c}_{jn_j(L-1)}]^{T}$ for each fixed $\ell = 0,...,(L-1)$ vector centered in the mean  $\mathbb{E}(\hat{c}_{jk\ell}  \mid \Pi=j ) =$ $\mu_{j,l}$ and  
            $\pmb{\hat{c}^{(k)}_{jk\ell}} = \left[\hat{c}_{jk0} \quad \hat{c}_{jk1} \quad \cdots \quad \hat{c}_{jkL-1}\right]^{T}$ cepstral vector for a fixed replication $k$. The centroids (mean vector) are defined as $\pmb{\hat{\mu}_{j\ell}^{(k)}} = [\hat{\mu}_{j0} \quad \hat{\mu}_{j1} \quad \cdots \quad \hat{\mu}_{j(L-1)}]^{T}$ , $\pmb{\hat{\mu}_{j\ell}^{(\ell)}} = [\hat{\mu}_{1\ell} \quad \hat{\mu}_{2\ell} \quad \cdots \quad \hat{\mu}_{J1\ell}]^{T}$ being for a given population, the vector mean across replicates and the vector mean across population, given a fixed cepstra $\ell$, respectively. Then, all position and scale measures for estimated cepstral can be defined as:

            \begin{enumerate}[label=\roman*.] 
                \item \textbf{Within-class mean by cepstra:}

                \begin{equation}\tag{3.2}\label{eq:eq3.2}
                    \hat{\mu}_{j\ell} = \frac{1}{n_j} \sum_{k=1}^{n_j} \hat{c}^{(\ell)}_{jk\ell} = \left\langle \mathbf{\iota_{n_j}}, \hat{\mathbf{c}}^{(\ell)}_{jk\ell} \right\rangle \left\langle \mathbf{\iota_{n_j}}, \mathbf{\iota_{n_j}} \right\rangle^{-1}.
                \end{equation}
                
                \item \textbf{Overall mean centroid:}
                    \begin{equation}\tag{3.3}\label{eq:eq3.3}
                        \hat{\mathbf{\mu}} = \left[\left\langle \mu_{j\ell}^{(0)}, \mathbf{f} \right\rangle \quad \left\langle \mu_{j\ell}^{(1)}, \mathbf{f} \right\rangle \cdots \left\langle \mu_{j\ell}^{(L-1)}, \mathbf{f} \right\rangle
                    \right]^T.
                    \end{equation}
            
                \item \textbf{Between-class variance:}
                    \begin{equation}\tag{3.4}\label{eq:eq3.4}
                        \hat{\Omega}_B = \left\lVert \hat{\mu}_j - \hat{\mu} \right\rVert^2 \left\langle \mathbf{\iota_{n}}, \mathbf{\iota_{n}} \right\rangle^{-1} = d^2(\hat{\mu}_j, \hat{\mu}) \left\langle \mathbf{\iota_{n}}, \mathbf{\iota_{n}} \right\rangle^{-1}.
                    \end{equation}
            
                \item \textbf{Within-class variance:}
                \begin{equation}\tag{3.5}\label{eq:eq3.5}
                        \hat{\Omega}_W = (n_j)^{-1} \sum_{k=1}^{n_j}\left[ (\hat{\mathbf{c}}^{(k)}_{jk} - \hat{\mu}_j)(\hat{\mathbf{c}}^{(k)}_{jk} - \hat{\mu}_j)^{T} \right].
                    \end{equation}
            \end{enumerate}

        Once the variance of the cepstral coefficients given the population is calculated, it should have a minimum superior cote. Let it be defined as follows

        \begin{equation}\tag{3.6}\label{eq:eq3.6}
            \hat{\sigma}_L = \sup \left\{ E[(\hat{c}_{jk(L-1)}-\hat{\mu}_j)(\hat{c}_{jk(L-1)}-\hat{\mu}_j)^T] \right\}
        \end{equation}

        Once the cepstral coefficients are calculated, the discriminant function can also be estimated as 

        \begin{equation}\tag{3.7}\label{eq:eq3.7}
            \hat{d}_{jks} = \left\langle\,\pmb{\hat{p}}_{\ell},\pmb{\hat{c}^{(k)}_{jk\ell}} \right\rangle =  \sum_{\ell=0}^{L-1} \hat{p}_{s\ell}\hat{c}_{jk\ell}.
            \end{equation}

        Then, the decision rule in Equation Equation \ref{eq:eq2.23} can be estimated $\hat{F}(\hat{d}_{*1},\cdots,\hat{d}_{*q};\hat{\mu}_j)$, as well as the confusion matrix described in Table \ref{tab:1}.
            
        The estimation methods based on the periodograms will be discussed in next sections.

        %##################################################
        \subsection{Cepstral Estimation based on Classical Periodogram}
        %##################################################

            A natural estimator of the spectral density \ref{eq:eq2.1} is the periodogram function 
    
            \begin{equation}\tag{3.8}\label{eq:eq3.8}
                I_{jk} (\lambda_{m}) = \frac{1}{2\pi N} \left\lVert\sum_{t=0}^{N-1} X_{jkt}e^{-i \lambda_m t}\right\rVert^2.
            \end{equation}

            \textcolor{black}{\begin{equation*}
                =\frac{1}{2\pi}\sum_{\nu=-(N-1)}^{N-1} \hat{\gamma}_{jk}(\tau)e^{-i\lambda_{m} \nu}
            \end{equation*}}

            \noindent \textcolor{black}{where}

            \textcolor{black}{\begin{equation*}\tag{3.9}\label{eq:eq3.9}
                \hat{\gamma}_{jk}(\tau) = \frac{1}{N} \sum_{t=1}^{N-|\nu|} X_{jkt}X_{jk (t-|\nu|)}
            \end{equation*}}

            \noindent \textcolor{black}{is the sample autocovariance function}, $\lambda_{m} = 2\pi m/N$, $m = 1, \ldots, \lfloor N/2 \rfloor$, are the Fourier frequencies $\left\lVert \cdot \right\rVert$ means Euclidean norm. As well known, the periodogram is an asymptotically unbiased estimator for the spectral density, however, its asymptotic variance is of order $\mathcal{O}$ $(1)$, that is, the periodogram is not a consistent estimator of the spectral density.
    
            As shown in \cite[p. 206]{percival}, the bias in periodogram is due to the sidelobes of Ferjér's kernel, which leads to leakage of the information of one frequency to another. To reduce the sidelobes, and consequent bias, \cite{percival} suggest the use of tapering the data as originally proposed by \cite{thomson82}. The idea is form another data tapered $h_tX_{jkt}$, where $h_t$ is called a taper funcion. The main idea is construct another data base in order to $X_{jkt}=0$ for $t$ near to $0$ and $N$. It will reduce the leakage and the bias. Then, a new periodogram, called \textit{modified periodogram} can be achieved for the new data. The taper function is, actually, a window function, but different from classical approach in the linterature, it is computationally more efficient and has high resolution if it is used the sine tepers, defined as
    
            \begin{equation}\tag{3.10}\label{eq:eq3.10}
                h_{rt}= \left(\frac{2}{N+1}\right)^{1/2}sin\left(\pi t \frac{r}{N+1}\right).
            \end{equation}
        
            \noindent where $r =1,\cdots,R$ and $h_{rt}$ is the $r$th weight taper of $X_{jkt}$.  In this case, the $r$th taper will be multiplied by the time series realization and a periodogram is obtained. Then, it should be repeated $R$ times and an average periodogram can be calculated by

            \begin{equation}\tag{3.11}\label{eq:eq3.11}
                I_{jk}^{r} (\lambda_{m}) =\frac{1}{2\pi N}\left\lVert\sum_{t=1}^N h_{rt}X_{jkt}e^{-i \lambda_{m} t}\right\rVert^2
            \end{equation}
    
            \begin{equation}\tag{3.12}\label{eq:eq3.12}
                I^{R}_{jk}(\lambda_m) = \frac{1}{R} \sum_{r=1}^R I^{r}_{jk}(\lambda_m).
            \end{equation}
   
           \begin{remark}\label{remark 3}
              Note that Equation \ref{eq:eq3.10} is a type of smoothed periodogram, that is, it preserves the properties of being asymptotically unbiased and consistent estimator of the spectral density. In addition, for each fixed $\lambda_m$, $j,k$ and $r=1,...,R$,  the variables $I_{jk}^r(\lambda_m)$ are independent. 
           \end{remark}
            
         \begin{remark} \label{remark 4}
            Under the Assumption that $R$ is a function of $N$ ($R_N$) such $\frac{R_N}{N} \rightarrow 0$ as $R_N$ and N go to $\infty$, $I^{R}_{jk}(\lambda_m)$ also corresponds to a window spectral estimator, i.e. it is asymptotically unbiased and consistent estimator see, for example, \cite[p. 358]{brockwell91}.
         \end{remark}

          Now some additional Assumption made to guarantee the consistency property of the cepstral estimator. 

            \textit{Assumptions}

                \begin{assumptions} \label{A.1}
                    $\sum_{\tau=1}^{\infty} \lvert \upsilon_{\tau}\rvert \lvert \tau^{1/2} \rvert < \infty$ and $E \lvert \epsilon_{jkt} \rvert^4 < \infty$.
                \end{assumptions}

                \begin{assumptions} \label{A.2}

                    If $\hat{\sigma}_L < \infty$ $\Rightarrow$ $n^{-1/2}\hat{\sigma}_L^{-1} \rightarrow 0$, $N^{-1/2}\hat{\sigma}_L^{-1} \rightarrow 0$ and $n^{-1/2}L \rightarrow 0$ when $n,N,L \rightarrow \infty$.
                \end{assumptions}\

                Assumption A1 is the standard Assumption made for a smoothed ( window) periodogram estimator (\cite[Theorem 10.4.1]{brockwell91}.  It is also possible to prove that $\hat{f_j}$ that Assumption A2 converge in probability of a binomial distribution  with probability $p$ estimated by $n_j/n$. Assumption 3 is the regularity condition of the $\Omega_w$ to be non singular.

            \textit{Asymptotic Results}
            
            \begin{proposition}
                In \ref{eq:eq3.1}, let $\hat{S}_{jk}(\lambda_m)=$ $I^{R_N}_{jk}(\lambda_m)$ (\ref{eq:eq3.11}), where $R=R_N$ is defined in Remark \ref{remark 4}. Thus,  $\hat{c}_{jk\ell} \xrightarrow{p} c_{jk\ell}$ when $\frac{R_N}{N} \rightarrow 0$ and $N,R_N \rightarrow \infty$.
            \end{proposition}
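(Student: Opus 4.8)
\medskip
\noindent\textbf{Proof plan.} The idea is to split the error of the $\ell$-th estimated cepstral coefficient into a deterministic discretisation term and a stochastic term and to treat each one separately. Fix $j,k$ and $\ell\ge1$ (the case $\ell=0$ is identical with $\cos(\lambda_m\ell)\equiv1$), write $S(\lambda):=S_{jk}(\lambda)$ and $\hat S_N(\lambda_m):=I^{R_N}_{jk}(\lambda_m)$ with $\lambda_m=2\pi m/N$, and decompose
\begin{equation*}
\hat c_{jk\ell}-c_{jk\ell}
=\underbrace{\Big[\tfrac1N\sum_{m=0}^{N-1}\ln S(\lambda_m)\cos(\lambda_m\ell)-c_{jk\ell}\Big]}_{A_N}
+\underbrace{\tfrac1N\sum_{m=0}^{N-1}\big[\ln\hat S_N(\lambda_m)-\ln S(\lambda_m)\big]\cos(\lambda_m\ell)}_{B_N}.
\end{equation*}
It then suffices to prove $A_N\to0$ deterministically and $B_N\xrightarrow{p}0$.

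\medskip
\noindent\textbf{The deterministic term.} For $A_N$ I would use that, under the stated stationary (and invertible) ARMA assumptions, $S$ is a rational function of $\cos\lambda$ with no poles or zeros on $[-\pi,\pi]$, so $0<\underline S:=\min_{\lambda}S(\lambda)\le\max_{\lambda}S(\lambda)=:\overline S<\infty$ and $\lambda\mapsto\ln S(\lambda)\cos(\lambda\ell)$ is real-analytic, hence Lipschitz, on the compact interval. Then $\tfrac1N\sum_{m=0}^{N-1}\ln S(\lambda_m)\cos(\lambda_m\ell)$ is a Riemann sum converging to $\tfrac1{2\pi}\int_0^{2\pi}\ln S(\lambda)\cos(\lambda\ell)\,d\lambda$, which is the (population) cepstral coefficient $c_{jk\ell}$ of Proposition~\ref{proposition 1}; hence $A_N\to0$ (indeed $O(N^{-1})$ by Euler--Maclaurin, although only $A_N\to0$ is needed).

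\medskip
\noindent\textbf{The stochastic term.} Here I would bound the first absolute moment: by $|\cos|\le1$ and the triangle inequality,
\begin{equation*}
E|B_N|\le\frac1N\sum_{m=0}^{N-1}E\big|\ln\hat S_N(\lambda_m)-\ln S(\lambda_m)\big|\le\max_{0\le m\le N-1}E\big|\ln\big(\hat S_N(\lambda_m)/S(\lambda_m)\big)\big|,
\end{equation*}
which reduces the whole problem to the lemma $\max_{m}E\big|\ln\big(\hat S_N(\lambda_m)/S(\lambda_m)\big)\big|\to0$ as $N,R_N\to\infty$ with $R_N/N\to0$. For this lemma I would combine two ingredients. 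First, convergence in probability: by Remark~\ref{remark 4} and \cite[p.~358]{brockwell91} the multitaper estimator is asymptotically unbiased and consistent, and its bias $|E\hat S_N(\lambda_m)-S(\lambda_m)|=o(1)$ and variance $O(R_N^{-1}\overline S^2)$ bounds hold uniformly in $m$, so $\hat S_N(\lambda_m)/S(\lambda_m)\xrightarrow{p}1$ uniformly over $m$. Second, uniform integrability of $\ln(\hat S_N(\lambda_m)/S(\lambda_m))$: the positive part is controlled because the periodogram ordinates have fourth moments bounded uniformly in $m,N$ (Assumption~A1 gives $E|\epsilon_{jkt}|^4<\infty$) and $\ln$ grows slowly; the negative part is controlled because, by Remark~\ref{remark 3}, $\hat S_N(\lambda_m)=R_N^{-1}\sum_{r=1}^{R_N}I^r_{jk}(\lambda_m)$ is an average of $R_N$ independent nonnegative ordinates with common mean close to $S(\lambda_m)$, so it concentrates and its negative log-moments stay bounded in $m,N$ and vanish as $R_N\to\infty$. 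Convergence in probability plus uniform integrability gives the lemma, hence $E|B_N|\to0$ and $B_N\xrightarrow{p}0$; together with $A_N\to0$ this yields $\hat c_{jk\ell}\xrightarrow{p}c_{jk\ell}$. (Alternatively, strengthening the rate to $R_N/\sqrt{N\log N}\to\infty$ lets a Markov-plus-union-bound argument give $\max_m|\hat S_N(\lambda_m)-S(\lambda_m)|\xrightarrow{p}0$, after which the local Lipschitz property of $\ln$ on $[\underline S/2,\overline S+\underline S/2]$, valid on an event of probability $\to1$, transfers this to $B_N$ directly.)

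\medskip
\noindent\textbf{Main obstacle.} The hard part is the behaviour of $\ln$ near the origin in the second ingredient above: a single tapered ordinate $I^r_{jk}(\lambda_m)$ has $E|\ln I^r_{jk}(\lambda_m)|<\infty$ but $E\big|\ln\big(I^r_{jk}(\lambda_m)/S(\lambda_m)\big)\big|$ does \emph{not} tend to $0$, so consistency of the spectral estimator alone is not enough; it is the averaging of $R_N$ independent ordinates (Remark~\ref{remark 3}) together with $R_N\to\infty$ that drives the log-error to zero, and this has to be made quantitative and uniform in $m$ --- including the $O(1)$ endpoint frequencies $\lambda_m\in\{0,\pi\}$, which behave slightly differently but enter $E|B_N|$ with weight $O(N^{-1})$ and hence need no special treatment. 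Everything else --- the Riemann-sum estimate for $A_N$ and, granted the uniform-integrability lemma, the step $E|B_N|\to0\Rightarrow B_N\xrightarrow{p}0$ --- is routine.
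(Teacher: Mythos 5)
Your argument is correct in outline, but it takes a genuinely different and considerably more explicit route than the paper. The paper's entire proof is a one-sentence appeal to Theorem 9.4.2 of Anderson (a continuous-mapping argument), implicitly resting on the asymptotic unbiasedness and consistency of the multitaper estimator $I^{R_N}_{jk}(\lambda_m)$ from Remark 4. That citation delivers $\ln \hat S_N(\lambda_m)\xrightarrow{p}\ln S(\lambda_m)$ at each \emph{fixed} frequency, but it does not by itself justify passing to the average $N^{-1}\sum_{m=0}^{N-1}$ when the number of summands grows with $N$, nor does it address the integrability of $\ln$ near the origin. Your decomposition into the deterministic Riemann-sum term $A_N$ and the stochastic term $B_N$, with the moment bound $E|B_N|\le\max_m E\bigl|\ln\bigl(\hat S_N(\lambda_m)/S(\lambda_m)\bigr)\bigr|$ and the uniform-integrability lemma driven by the averaging of $R_N\to\infty$ independent tapered ordinates (Remark 3), supplies exactly the uniformity that the paper's citation leaves implicit; your observation that $E\bigl|\ln\bigl(I^{r}_{jk}(\lambda_m)/S(\lambda_m)\bigr)\bigr|$ does \emph{not} vanish for a single ordinate, so that consistency of the spectral estimator alone is insufficient, is the real content of the result and is nowhere acknowledged in the paper. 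What the paper's approach buys is brevity; what yours buys is an actual proof, at the cost of having to make the negative-log-moment control quantitative. One small bookkeeping point to settle when writing this up: the limit of your Riemann sum is $\tfrac{1}{2\pi}\int_0^{2\pi}\ln S(\lambda)\cos(\lambda\ell)\,d\lambda$, which equals $c_{jk\ell}/2$ rather than $c_{jk\ell}$ under the normalization of Equation (2.6) (since $\int_0^{2\pi}\cos^2(\lambda\ell)\,d\lambda=\pi$); this factor-of-two mismatch between the estimator (3.1) and the population coefficients (2.6) is present in the paper itself and does not affect the convergence statement, but you should fix a convention before asserting $A_N\to0$.
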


            \begin{proof}
                The proof is straightforward obtained using Theorem 9.4.2 in \cite[p. 541]{anderson76}.
            \end{proof}

            \begin{lemma} For $n\rightarrow \infty$ and for a fixed $j=1,\cdots,J$ $\hat{f}_j \xrightarrow{p} f_j$.
            \end{lemma}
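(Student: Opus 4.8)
The plan is first to make explicit the sampling model implicit in Section~3 and in Assumption~A2: the $n = \sum_{j=1}^{J} n_j$ observed realizations are drawn independently, and each realization belongs to population $\Pi_j$ with probability $f_j$. Under this reading $n_j$ is genuinely random, and I would write $\hat f_j = n_j/n = \frac{1}{n}\sum_{i=1}^{n} Z_i$, where $Z_i = \I\{\text{realization } i \in \Pi_j\}$ are i.i.d.\ Bernoulli$(f_j)$ variables, so that $n_j \sim \mathrm{Binomial}(n,f_j)$ and $\hat f_j$ is just the sample mean of the $Z_i$.

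Next I would compute the first two moments of this sample mean: $\esp(\hat f_j) = f_j$ and $\var(\hat f_j) = f_j(1-f_j)/n \le 1/(4n)$, which tends to $0$ as $n\to\infty$. The conclusion then follows from Chebyshev's inequality: for every $\epsilon>0$,
\[
\P\bigl(\lvert \hat f_j - f_j\rvert > \epsilon\bigr) \;\le\; \frac{\var(\hat f_j)}{\epsilon^2} \;=\; \frac{f_j(1-f_j)}{n\,\epsilon^2}\;\xrightarrow[n\to\infty]{}\;0 ,
\]
which is precisely $\hat f_j \xrightarrow{p} f_j$. Equivalently, this is Khinchin's weak law of large numbers applied to the i.i.d.\ integrable sequence $\{Z_i\}$. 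Since the argument is valid for each fixed $j=1,\dots,J$ and convergence in probability of finitely many marginals implies joint convergence, one also obtains $\hat{\pmb f} = (\hat f_1,\dots,\hat f_J)^{T} \xrightarrow{p} \pmb f$, as used elsewhere in the paper.

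I do not expect a genuine obstacle here; the only point requiring care is the interpretation of $n_j$. If the $n_j$ are treated as a fixed design (as stated in Section~2), then $\hat f_j = f_j$ identically and the statement is vacuous, so the lemma is meaningful only under the random-membership model suggested by Assumption~A2 and the remark preceding it. Once that model is fixed, the result is an immediate consequence of the binomial variance bound together with Chebyshev's inequality, so the ``hard part'' is purely expository rather than technical.
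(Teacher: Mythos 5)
Your proof is correct and follows the only route the paper gestures at: the paper in fact gives no proof of this lemma, offering just the remark that $\hat f_j$ ``converge[s] in probability of a binomial distribution with probability $p$ estimated by $n_j/n$'', which is exactly the Bernoulli sampling model and Chebyshev / weak-law-of-large-numbers argument you spell out. Your caveat that the statement is vacuous under the fixed-design reading of Section~2, and only meaningful under the random-membership model, is also well taken and is left unaddressed in the paper.
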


            Based on the above assumptions, \cite{krafty16} establishes the following theorem.

            \begin{theorem} \label{theorem1} \textbf{($\Omega_w$-norm consistent)}
                For every infinite vectors $\pmb{p_1}, \pmb{p_2},\cdots$, there exist a series of a finite eigenvectors $\pmb{\hat{p}_1},\pmb{\hat{p}_2},\cdots,\pmb{\hat{p}_q}$ such that 

            \begin{equation}\tag{3.13}\label{eq:eq3.13}
                \lvert \lvert (\pmb{\hat{p}_{q1}} \quad \pmb{\hat{p}_{q2}} \quad  \cdots \pmb{\hat{p}_q} \quad 0 \cdots)^T - \left(\pmb{p_{q1}} \quad \pmb{p_{q2}} \quad \cdots \right)^T \rvert \rvert_{\Omega_w} \xrightarrow{p} 0 
            \end{equation}
            \end{theorem}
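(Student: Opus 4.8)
Since Theorem \ref{theorem1} is quoted from \cite{krafty16}, the plan is to check that the two ingredients needed by that argument are available in the present, more general setting — consistency of the estimated cepstral coefficients and of the class proportions — and then to reassemble the proof around them. I would split the error in \eqref{eq:eq3.13} as
$$
\big\|(\pmb{\hat p}_q,\mathbf{0})-\pmb{p}_q\big\|_{\Omega_w}\le\big\|(\pmb{\hat p}_q,\mathbf{0})-(\pmb{p}^{(L)}_q,\mathbf{0})\big\|_{\Omega_w}+\big\|(\pmb{p}^{(L)}_q,\mathbf{0})-\pmb{p}_q\big\|_{\Omega_w},
$$
where $(\pmb{\hat p}_q,\mathbf{0})$ is the estimated length-$L$ eigenvector padded with zeros and $\pmb{p}^{(L)}_q$ is the true infinite eigenvector with all coordinates of index $\ge L$ set to zero, and control the two terms separately. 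A preliminary point is that the $\Omega_w$-norm of an infinite sequence is well defined: by Corollary \ref{corollary 4}, $c_{jk\ell}=\mathcal{O}(\ell^{-3/2})$, so the within-class variances are summable, $\Omega_w$ is trace class, and the off-diagonal entries satisfy $|(\Omega_w)_{\ell\ell'}|\le(\Omega_w)_{\ell\ell}^{1/2}(\Omega_w)_{\ell'\ell'}^{1/2}$ by Cauchy--Schwarz.

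The truncation term is handled deterministically. Since $\pmb{p}_q$ is $\Omega_w$-orthonormal it lies in $\ell^2$, so
$$
\big\|(\pmb{p}^{(L)}_q,\mathbf{0})-\pmb{p}_q\big\|_{\Omega_w}^2=\sum_{\ell,\ell'\ge L}(\Omega_w)_{\ell\ell'}\,p_{q\ell}p_{q\ell'}\le\Big(\sum_{\ell\ge L}(\Omega_w)_{\ell\ell}^{1/2}|p_{q\ell}|\Big)^{2}\longrightarrow 0
$$
as $L\to\infty$, using the summability of the diagonal of $\Omega_w$ established above. Since the theorem is stated with $L=L_N\to\infty$, this term is asymptotically negligible.

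For the estimation term I would work at a fixed truncation level and then let $n,N,L\to\infty$ jointly. By the Proposition immediately preceding Theorem \ref{theorem1}, $\hat c_{jk\ell}\xrightarrow{p}c_{jk\ell}$ for each fixed $\ell$, with the elementwise rate supplied by Theorem 9.4.2 of \cite{anderson76}; together with the preceding lemma ($\hat f_j\xrightarrow{p}f_j$) and the continuous mapping theorem applied to \eqref{eq:eq3.4}--\eqref{eq:eq3.5} this gives $\hat\Omega_W\xrightarrow{p}\Omega_W^{(L)}$ and $\hat\Omega_B\xrightarrow{p}\Omega_B^{(L)}$, the superscript denoting the leading $L\times L$ block. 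As $\Omega_W$ is positive definite (the regularity condition underlying the assumptions) and $\hat\Omega_W$ is eventually invertible, $\hat\Omega_W^{-1}\hat\Omega_B\xrightarrow{p}(\Omega_W^{(L)})^{-1}\Omega_B^{(L)}$. Because the leading generalized eigenvalues $\lambda_1>\cdots>\lambda_q$ are simple with gaps converging to the strictly positive gaps of the infinite-dimensional operator, a Davis--Kahan $\sin\Theta$ bound yields, for each $s$ and up to a sign,
$$
\big\|\pmb{\hat p}_s-\pmb{p}^{(L)}_s\big\|_{\Omega_w^{(L)}}=O_p\!\left(\frac{\big\|\hat\Omega_W^{-1}\hat\Omega_B-(\Omega_W^{(L)})^{-1}\Omega_B^{(L)}\big\|_{\mathrm{op}}}{\min_s(\lambda_s-\lambda_{s+1})}\right),
$$
and the operator norm there is at most $L$ times the worst entrywise error, hence $O_p(L\,r_{n,N})$ with $r_{n,N}$ of order $(n^{-1/2}\vee N^{-1/2})\hat\sigma_L^{-1}$ coming from \eqref{eq:eq3.6}. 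The first term in the decomposition equals $\big\|\pmb{\hat p}_q-\pmb{p}^{(L)}_q\big\|_{\Omega_w^{(L)}}$ up to a cross-block term that is $o(1)$ by the preliminary point, so it is $O_p(L\,r_{n,N})$, and Assumptions A1--A2 ($n^{-1/2}L\to0$, $N^{-1/2}\hat\sigma_L^{-1}\to0$, $n^{-1/2}\hat\sigma_L^{-1}\to0$) force $L\,r_{n,N}\xrightarrow{p}0$. Combined with the truncation bound this yields \eqref{eq:eq3.13}.

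The main obstacle is the step where the three indices must be handled jointly: the truncation dimension $L=L_N$ grows with the sample, the passage from the entrywise rate to the operator-norm rate of $\hat\Omega_W^{-1}\hat\Omega_B$ inflates the error by a factor of order $L$, and one must verify both that the eigenvalue gaps do not degenerate as $L\to\infty$ — they converge to the positive gaps of the infinite operator, so this is fine — and that the inflated estimation error remains dominated. That last point is precisely what the coupling of $n$, $N$ and $L$ in Assumption A2 is designed to deliver; keeping those indices in lock-step through the Davis--Kahan bound, rather than sending them to infinity one at a time, is the crux of the argument.
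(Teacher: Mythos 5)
The paper does not actually prove this theorem: it is imported verbatim from \cite{krafty16} (``Based on the above assumptions, \cite{krafty16} establishes the following theorem''), with the norm notation borrowed from Theorem 2.1 of \cite{shin08}, and no argument appears in the appendix. So there is no in-paper proof to compare against; your sketch is an attempt to reconstruct the external argument. The overall architecture --- split the $\Omega_w$-error into a deterministic truncation term controlled by the decay of the cepstral coefficients (Corollary \ref{corollary 4} giving a trace-class $\Omega_w$) plus an estimation term at truncation level $L$ controlled by consistency of $\hat\Omega_W$ and $\hat\Omega_B$ and an eigenvector perturbation bound --- is indeed the right shape and matches the structure of the cited results.

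There are, however, two concrete gaps in the estimation half. First, the Davis--Kahan $\sin\Theta$ theorem applies to symmetric (or normal) matrices, and $\hat\Omega_W^{-1}\hat\Omega_B$ is not symmetric; you need to pass to the symmetrized pencil $\hat\Omega_W^{-1/2}\hat\Omega_B\hat\Omega_W^{-1/2}$ (which is exactly what Proposition \ref{proposition 2} of the paper licenses) before invoking any such bound, and you must also note that the $\Omega_w$-norm on eigenvectors of the pencil corresponds to the Euclidean norm on eigenvectors of the symmetrized matrix --- otherwise the norm in your displayed bound is not the one the perturbation theorem controls. Second, and more seriously, your final domination step does not follow from Assumption A2 as stated. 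You bound the operator-norm error by $L$ times the entrywise rate and then claim $L\,r_{n,N}\to 0$ with $r_{n,N}$ of order $(n^{-1/2}\vee N^{-1/2})\hat\sigma_L^{-1}$; but A2 only gives $n^{-1/2}L\to 0$, $n^{-1/2}\hat\sigma_L^{-1}\to 0$ and $N^{-1/2}\hat\sigma_L^{-1}\to 0$ \emph{separately}, and none of these (nor their conjunction) implies $L\,n^{-1/2}\hat\sigma_L^{-1}\to 0$. Either the entrywise rate must be sharpened so that the $L$-inflation is absorbed, or an additional coupling condition on $(n,N,L,\hat\sigma_L)$ must be assumed. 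You correctly identify this joint-limit step as the crux, but as written the crux is asserted rather than established. You should also state explicitly the simple-eigenvalue (positive eigengap) hypothesis on the limiting operator, which the paper never records but which your perturbation bound requires.
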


            \noindent where $\lvert \lvert \cdot \rvert \rvert_\mathcal{K}$ is the reproduced Hilbert space with kernel $\mathcal{K}$, where for a positive definite matrix $\mathcal{K}$. This notation follows the theorem 2.1 of \cite{shin08}.

            Based on Theorem 1, we have

            \begin{corollary} As $n$, $N$, $R_N$,  $L\to\infty$, $\frac{R_N}{N} \rightarrow 0$,
                \begin{enumerate} [label=\roman*.]
                    \item $\hat{d}^{F}_{jkq} \xrightarrow{p}d_{jkq}.$
                    \item $\widehat{F}(\hat{\pmb{c}}_{*}) \xrightarrow{p}F(\pmb{c}_{*}).$
                \end{enumerate}
            \end{corollary}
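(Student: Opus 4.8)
The plan is to read off both limits from three facts already in place: (a) $\hat{c}_{jk\ell}\xrightarrow{p}c_{jk\ell}$ for each fixed $\ell$ (the Proposition preceding the Lemma), (b) the $\Omega_w$-norm consistency of Theorem \ref{theorem1}, i.e. $\lVert\hat{\pmb{p}}_{q}-\pmb{p}_{q}\rVert_{\Omega_w}\xrightarrow{p}0$ after padding $\hat{\pmb{p}}_q$ with zeros, and (c) $\hat{f}_j\xrightarrow{p}f_j$ (the Lemma); the glue is a triangle-inequality decomposition for (i) and the continuous mapping theorem for (ii). For (i) I would fix $q,j,k$ and, writing $d_{jkq}=\sum_{\ell\ge0}p_{q\ell}c_{jk\ell}$ (\ref{eq:eq2.22}) and $\hat{d}^{F}_{jkq}=\sum_{\ell=0}^{L-1}\hat{p}_{q\ell}\hat{c}_{jk\ell}$ (\ref{eq:eq3.7}), decompose
\begin{equation*}
\hat{d}^{F}_{jkq}-d_{jkq}=\underbrace{\sum_{\ell=0}^{L-1}\hat{p}_{q\ell}\big(\hat{c}_{jk\ell}-c_{jk\ell}\big)}_{(\mathrm{I})}+\underbrace{\sum_{\ell=0}^{L-1}\big(\hat{p}_{q\ell}-p_{q\ell}\big)c_{jk\ell}}_{(\mathrm{II})}-\underbrace{\sum_{\ell\ge L}p_{q\ell}c_{jk\ell}}_{(\mathrm{III})},
\end{equation*}
and show each piece tends to $0$ in probability.

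The tail $(\mathrm{III})$ is deterministic: Corollary \ref{corollary 4} gives $c_{jk\ell}=\mathcal{O}(\ell^{-3/2})$, hence $\sum_{\ell}c_{jk\ell}^{2}<\infty$, and since $\sum_{\ell}p_{q\ell}^{2}<\infty$, Cauchy--Schwarz yields $|(\mathrm{III})|\le(\sum_{\ell\ge L}p_{q\ell}^{2})^{1/2}(\sum_{\ell\ge L}c_{jk\ell}^{2})^{1/2}\to0$ as $L\to\infty$. For $(\mathrm{II})$, Cauchy--Schwarz gives $|(\mathrm{II})|\le\lVert\hat{\pmb{p}}_{q}-\pmb{p}_{q}\rVert_{2}\,\lVert\pmb{c}^{(k)}_{jk}\rVert_{2}$; the second factor is finite by Corollary \ref{corollary 4}, and because $\Omega_w$ is positive definite and boundedly invertible (its non-singularity is imposed as a regularity condition alongside Assumptions A1--A2), the $\Omega_w$-norm is equivalent to the Euclidean norm, so Theorem \ref{theorem1} gives $\lVert\hat{\pmb{p}}_{q}-\pmb{p}_{q}\rVert_{2}\xrightarrow{p}0$ and hence $(\mathrm{II})\xrightarrow{p}0$; the same comparison also gives $\lVert\hat{\pmb{p}}_{q}\rVert_{2}=O_{p}(1)$. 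For $(\mathrm{I})$, $|(\mathrm{I})|\le\lVert\hat{\pmb{p}}_{q}\rVert_{2}\big(\sum_{\ell=0}^{L-1}(\hat{c}_{jk\ell}-c_{jk\ell})^{2}\big)^{1/2}$, so it suffices to prove $\sum_{\ell=0}^{L-1}(\hat{c}_{jk\ell}-c_{jk\ell})^{2}\xrightarrow{p}0$. Since $|\cos(\lambda_m\ell)|\le1$ in the definition (\ref{eq:eq3.1}) of $\hat{c}_{jk\ell}$, Assumption A1 (notably $E|\epsilon_{jkt}|^{4}<\infty$ and $\sum_{\tau}|\upsilon_\tau||\tau|^{1/2}<\infty$) together with the consistency of the multitaper log-periodogram yields a mean-square bound $\mathbb{E}[(\hat{c}_{jk\ell}-c_{jk\ell})^{2}]\le\rho_{N}$ uniform in $\ell$, with $\rho_{N}\to0$ and $L\rho_{N}\to0$ under $R_{N}/N\to0$ and the growth restrictions of Assumption A2; Markov's inequality then finishes $(\mathrm{I})$. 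Adding up, $\hat{d}^{F}_{jkq}\xrightarrow{p}d_{jkq}$, which is (i).

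For (ii), write $\widehat{F}(\hat{\pmb{c}}_{*})=\sum_{s=1}^{q}(\hat{d}_{*s}-\hat{\mu}_{js})^{2}-2\ln\hat{f}_j$ from (\ref{eq:eq2.23}). Part (i), applied to the discriminant coordinates of the unlabelled series, gives $\hat{d}_{*s}\xrightarrow{p}d_{*s}$. For the estimated centroid $\hat{\mu}_{js}=\langle\hat{\pmb{p}}_{s},\hat{\pmb{\mu}}_{j}\rangle$ with $\hat{\pmb{\mu}}_{j\ell}=n_j^{-1}\sum_{k=1}^{n_j}\hat{\pmb{c}}^{(\ell)}_{jk\ell}$, I would split $\hat{\mu}_{js}-\mu_{js}$ into (i$'$) an averaged cepstral-estimation error, bounded as in $(\mathrm{I})$ uniformly over $k$; (ii$'$) an eigenvector perturbation, bounded as in $(\mathrm{II})$; and (iii$'$) the centred replicate average $n_j^{-1}\sum_{k=1}^{n_j}\langle\pmb{p}_s,\pmb{c}^{(\ell)}_{jk\ell}-\pmb{\mu}_{j\ell}\rangle$, which vanishes in probability by the weak law of large numbers over replicates as $n_j\to\infty$ (with the uniform variance bound $\hat{\sigma}_L<\infty$ of (\ref{eq:eq3.6})); hence $\hat{\mu}_{js}\xrightarrow{p}\mu_{js}$. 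Finally $\hat{f}_j\xrightarrow{p}f_j>0$ by the Lemma. Since $(\pmb{d},\pmb{\mu},f)\mapsto\lVert\pmb{d}-\pmb{\mu}\rVert^{2}-2\ln f$ is continuous on $\mathbb{R}^{q}\times\mathbb{R}^{q}\times(0,\infty)$, the continuous mapping theorem gives $\widehat{F}(\hat{\pmb{c}}_{*})\xrightarrow{p}F(\pmb{c}_{*})$.

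The crux is term $(\mathrm{I})$ and its analogue (i$'$) in the centroid estimate: the preceding Proposition only certifies consistency of $\hat{c}_{jk\ell}$ for each fixed $\ell$, whereas here $L\to\infty$, so the argument genuinely needs a mean-square rate for $\hat{c}_{jk\ell}$ that is \emph{uniform} in $\ell$ and compatible with the joint asymptotics $n,N,R_N,L\to\infty$, $R_N/N\to0$ --- this is precisely where the fourth-moment and weighted-summability parts of Assumption A1, and the scaling of $L$ in Assumption A2, do the real work, via standard bounds on the bias and covariance structure of the tapered log-periodogram ordinates. A secondary technical point is making rigorous the equivalence between the reproducing-kernel $\Omega_w$-norm of Theorem \ref{theorem1} and the $\ell^{2}$-norm used in the Cauchy--Schwarz steps, which rests on $\Omega_w$ being bounded above and bounded below away from $0$.
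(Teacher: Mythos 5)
Your decomposition is essentially the argument the paper intends: the corollary is stated as following ``Based on Theorem 1,'' and the paper's only written-out version of this argument (the appendix proof for the M-discriminant function) uses exactly your add-and-subtract split into an estimation-error term $\sum\hat{p}_{q\ell}(\hat{c}_{jk\ell}-c_{jk\ell})$, an eigenvector-perturbation term, and a truncation tail, controlled by Theorem \ref{theorem1} together with the consistency of the cepstral estimates and of $\hat{f}_j$. Your write-up is more careful than the paper's --- the uniform-in-$\ell$ mean-square bound needed because $L\to\infty$, the equivalence of the $\Omega_w$-norm with the Euclidean norm, and the continuous-mapping step for part (ii) are all left implicit there --- so this is a correct proof along the same route.
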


            %##################################################
            \subsection{Cepstral Estimator based on M-Periodogram}
            %##################################################
                Most non-parametric estimators of the spectral density are based on the periodogram, which is not resistant against outliers or heavy-tailed distributions \textcolor{black}{as explained for instance in \cite{fox1972} and \cite{li2008}}. This issue has been recently discussed by many authors who have introduced new \textcolor{black}{types of periodograms} that display superiority over the classical one when the data has some atypical observations, for example, \cite{Reisen:levy:taqqu:2017,reisen2018,li2008laplace}, among others. Even the averaged periodogram described in Equation \ref{eq:eq3.11}, with small bias compared to the classical periodogram, is still affected by additive outliers, which are the most dangerous type of outlier in time series, drastically affecting the statistical quantities of the sample mean, variance, and periodogram \cite{reisen2018}, and consequently, the estimators derived directly from these quantities, such as Equation \ref{eq:eq3.11}.
            
                Given this undesirable sampling property of these estimators, we propose here a multitaper based on the $M-$periodogram estimator that can cope well with the impact of additive outliers or heavy-tailed distributions on the discriminant analysis tools.

                In this context, we consider here the $M-$periodogram by \cite{katkovnik98} and extended by \cite{reisen2018,li2008} to suggest the discriminant cepstrum  tool based on this robust periodogram, denoted here by $M$-cepstral estimator.
    
                Following the notation in \cite{reisen2018} $I_{jk}^{M} (\lambda_{m})$ is defined as
                
                \begin{equation}\tag{3.14}\label{eq:eq3.14}
                    I^{M}_{jk}(\lambda_{m}) = \frac{N}{8\pi}\left\lVert \pmb{\hat{\beta}_{jk}^{M} (\lambda_{m})} \right\rVert^{2} = \frac{N}{8\pi}\left (\hat{\beta}_{jk}^{(c)} (\lambda_{m})^{2} + \hat{\beta}_{jk}^{ (s)} (\lambda_{m})^{2}\right).
                \end{equation}

            \noindent $\pmb{\hat{\beta}_{jk}^{M}(\lambda_{m})} =  [\beta_{jk}^{(c)}(\lambda_{m}),\beta_{jk}^{(s)}(\lambda_{m})]^{T}$ is the solution of 
                
                \begin{equation}\tag{3.15}\label{eq:eq3.15}
                    \pmb{\beta^{M}_{jk}(\lambda_{m})} = \underset{\beta \in \mathbb{R}^2}{\operatorname{argmin}} \sum_{t=1}^N \psi\left[X_{jkt}-\pmb{S^{T}_{jk}(\lambda_{m})}\beta_{jk}(\lambda_{m}).\right]
                \end{equation}

                \noindent where $\pmb{S_{jk}(\lambda_{m})} = \left [\cos(t\lambda_{m}), \sin(t\lambda_{m}) \right]^T$ and $\psi(\cdot)$ is the Huber \citep{huber,mukherjee08} influence function defined as
        
                \begin{equation}\tag{3.16}\label{eq:eq3.16}
                    \psi_H (x) = \begin{cases} x, & \lvert x \rvert \leq c \\
                    c\,\text{sign}(x), & \text{ otherwise.}
                    \end{cases}
                \end{equation}

                \begin{assumptions} \label{A.3}
                    The sequence $\{\epsilon_{jkt}\}$ is now assumed to be a Gaussian White Noise process, ie.  $\{\epsilon_{jkt}\} \sim WN(0,\sigma^2)$.
                \end{assumptions}

                \begin{remark} \label{remark 5}
                  Under  Assumption 3, \cite{celine22} showed that:
                  for each $\tau$, $\hat{\gamma}^M_N(\tau) \xrightarrow{p} \frac{\gamma(\tau)}{a^2}$, where $a = 2\Phi(c)-1$, $\Phi$ is the cumulative distribution function of a standard Gaussian random variable and c is from Equation \ref{eq:eq3.16}. In addition, the authors show that $\gamma_{\psi}(\tau) \leq c \left \lvert \frac{\gamma(\tau)}{\gamma(0)} \right \rvert$, where $\gamma_{\psi}(\tau)$ and $\hat{\gamma}^M_N(\tau)$ are the theoretical ACF and its estimator derived from the use of Huber loss function, respectively. Also, $\gamma_(\tau)$ is the covariance of the process.
                \end{remark}

                \begin{remark} \label{remark 6}
                    \cite{celine22} defined the spectral density of the process using $\gamma_{\psi}(\tau)$ as

                \begin{equation}\tag{3.17}\label{eq:eq3.17}
                    S_{\psi,jk}(\lambda) = \frac{1}{2\pi} \sum_{\tau =-\infty}^{\infty} \gamma_{\psi,jk}(\tau) cos(\lambda \tau), \quad \text{for all }\lambda\in [-\pi,\pi].
                \end{equation}
                \end{remark}

                As previously, let $R=R_N$ (see Remark \ref{remark 4}), thus, multitaper M-periodogram ($I^{RM}_{jk}(\lambda_{m})$) is now  defined as
    
                \begin{equation}\tag{3.18}\label{eq:eq3.18}
                    I^{RM}_{jk}(\lambda_{m}) = \frac{1}{R_N} \sum_{r=1}^R I_{jk}^{r M} (\lambda_{m})
                \end{equation}
    
                \noindent where 
                
                \begin{equation}\tag{3.19}\label{eq:eq3.19}
                    I^{rM}_{jk}(\lambda_{m}) = \frac{N}{8\pi} \left\lVert \pmb{\hat{\beta}_{jk}^{r \hspace{.1cm} M} (\lambda_{m})} \right\rVert^{2} = \frac{N}{8\pi} \left (\hat{\beta}_{jk}^{r (c) \hspace{.1cm}} (\lambda_{m})^{2} + \hat{\beta}_{jk}^{ r (s) \hspace{.1cm}} (\lambda_{m})^{2}\right)
                \end{equation}
    
                \noindent and $\pmb{\beta_{jk}^{r \hspace{.1cm} M}(\lambda_{m})} =  [\beta_{jk}^{r(c)}(\lambda_{m}),\beta_{jk}^{r(s)}(\lambda_{m})]^{T}$ is the solution of 
                
                \begin{equation}\tag{3.20}\label{eq:eq3.20}
                    \pmb{\beta^{rM}_{jk}(\lambda_{m})} = \underset{\beta \in \mathbb{R}^2}{\operatorname{argmin}} \sum_{t=1}^N \psi\left[h_{rt}X_{jkt}-\pmb{S^{T}_{jk}(\lambda_{m})}\pmb{\beta_{jk}(\lambda_{m})}\right]
                \end{equation}
                
                To derive asymptotic properties for short and long-memory times series,  the authors \cite{reisen2018},  among others, consider   the Huber loss-function with the constant $c_{H}=1.345$. An additional Assumptions made in this paper as follows
                
                As a result, the M-discriminant function can be estimated based on $M-$cepstrum coefficients ($\hat{c}^{M}_{jk\ell}$), that can be defined as follows:
    
                \begin{equation}\tag{3.21}\label{eq:eq3.21}
                    \hat{d}^{M}_{jks} = \left\langle\,\pmb{\hat{p}}^{M}_{\ell},\pmb{\hat{c}^{(k)}_{jk\ell}}^M \right\rangle =  \sum_{\ell=0}^{L-1} \hat{p}^{M}_{s\ell}\hat{c}^{M}_{jk\ell},
                \end{equation}
    
                \noindent where $\hat{p}^{M}_{s\ell}$ are truncated estimator of the true $p_{s\ell}$.
    
                \begin{equation}\tag{3.22}\label{eq:eq3.22}
                    \hat{F}^{M}(\hat{d}^{M}_{*1},\cdots,\hat{d}^{M}_{*q};\hat{\mu}^{M}_j) = \min_{j}\left\{ \sum_{s=1}^{q}\left[ \left ( \hat{\pmb{d}}^{M}_{*s} - \hat{\pmb{\mu_{js}}^{M}} \right)\right]^{2} -2\ln (\hat{f}_{j}) \right\},
                \end{equation}
    
                \noindent where $\hat{\pmb{d}}_{jks}^{M}$ and $\hat{F}^{M}$ are the estimators of \ref{eq:eq2.20} and \ref{eq:eq2.21}, respectively. 
        
        \begin{proposition}\label{proposition 4}
            Under Assumptions\ref{A.1} and \ref{A.2}, $\hat{\gamma}^M_N(\tau) \overset{\text{p}}{\longrightarrow} \hat{\gamma}_N(\tau)$, as $N \rightarrow \infty$. 
        \end{proposition}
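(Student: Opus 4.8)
The plan is to show that the two estimators have the \emph{same} probability limit and then finish by the triangle inequality. Fix the lag $\tau$ (and the indices $j,k$). On the classical side, the ordinary sample autocovariance $\hat{\gamma}_N(\tau)$ of \ref{eq:eq3.9} is weakly consistent under Assumption \ref{A.1}: the condition $\sum_{\tau}|\upsilon_{jk\tau}|\,|\tau|^{1/2}<\infty$ together with $E|\epsilon_{jkt}|^{4}<\infty$ is exactly the hypothesis of the classical weak law of large numbers for autocovariances of linear processes, so $\hat{\gamma}_N(\tau)\xrightarrow{p}\gamma(\tau)$ (see \cite[Theorem 10.4.1]{brockwell91}, also \cite[Theorem 9.4.2]{anderson76}, which is precisely why Assumption \ref{A.1} is the ``standard'' condition referred to in the text).

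Next I would show that $\hat{\gamma}^M_N(\tau)$ shares this limit. By Remark \ref{remark 5} (the result of \cite{celine22}), the Huber-based autocovariance satisfies $\hat{\gamma}^M_N(\tau)\xrightarrow{p}\gamma(\tau)/a^{2}$ with $a=2\Phi(c)-1$; with the Fisher-consistency factor $a^{-2}$ absorbed into the normalization of the robust estimator — the same normalization that makes $I^{M}_{jk}$ asymptotically unbiased for $S_{jk}$ in \ref{eq:eq3.14} — this reads $\hat{\gamma}^M_N(\tau)\xrightarrow{p}\gamma(\tau)$. Since Remark \ref{remark 5} is stated under the Gaussian white-noise Assumption \ref{A.3} while the present proposition is asserted under \ref{A.1}--\ref{A.2}, I would extend it to the linear-process regime: writing $X_{jkt}=\sum_{\tau}\upsilon_{jk\tau}\epsilon_{jk(t-\tau)}$, the estimating equation \ref{eq:eq3.15} defining $\pmb{\hat{\beta}}^{M}_{jk}(\lambda_m)$ is built from a bounded, Lipschitz $\psi$; a first-order expansion of that equation around its population solution, combined with the ergodic theorem for the linear filter and the $|\tau|^{1/2}$-weighted summability in \ref{A.1}, transfers consistency of the robust Fourier coefficients — hence of $\hat{\gamma}^M_N(\tau)$, recovered from them by inversion — from the innovations to $X_{jkt}$, with Assumption \ref{A.2} supplying the rate control needed to pass from the ordinates $\{\lambda_m\}$ back to the lag-$\tau$ autocovariance.

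Finally I would conclude by decomposing
\begin{equation*}
\hat{\gamma}^M_N(\tau)-\hat{\gamma}_N(\tau)=\bigl(\hat{\gamma}^M_N(\tau)-\gamma(\tau)\bigr)-\bigl(\hat{\gamma}_N(\tau)-\gamma(\tau)\bigr).
\end{equation*}
Each bracket is $o_p(1)$ by the two steps above, so by the triangle inequality and Slutsky's theorem the difference is $o_p(1)$; that is, $\hat{\gamma}^M_N(\tau)\xrightarrow{p}\hat{\gamma}_N(\tau)$ for each fixed $\tau$, which is the assertion.

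\textbf{Main obstacle.} The delicate part is the second step. Remark \ref{remark 5} only delivers the limit $\gamma(\tau)/a^{2}$ under Gaussian white noise, so making Proposition \ref{proposition 4} rigorous under \ref{A.1}--\ref{A.2} requires (i) verifying that the population Huber estimating equation at frequency $\lambda_m$ identifies precisely a constant multiple of the true cosine/sine coefficients — so that the probability limit of $\hat{\gamma}^M_N(\tau)$ is genuinely a scalar multiple of $\gamma(\tau)$ and not some other functional of the marginal law — and (ii) obtaining this convergence uniformly enough in $\lambda_m$ for the Cesàro-type average over frequencies that reconstructs $\hat{\gamma}^M_N(\tau)$ to converge; this is exactly where the stronger $|\tau|^{1/2}$-weighted summability of \ref{A.1} and the joint rate conditions of \ref{A.2} are genuinely used. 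Once the common limit $\gamma(\tau)$ is secured for both estimators, the remainder is routine.
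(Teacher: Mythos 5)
Your proof takes essentially the same route as the paper's: both add and subtract the population autocovariance $\gamma(\tau)$, establish that $\hat{\gamma}_N(\tau)$ and $\hat{\gamma}^M_N(\tau)$ share that common probability limit, and conclude by the triangle inequality. You are in fact more careful than the paper on the one delicate point — the paper's own proof bounds $\left\lvert \gamma_{\psi}(\tau)-\gamma(\tau)\right\rvert$ by $\left\lvert (1-a)\gamma(\tau)\right\rvert$ and declares it zero ``when $a\to 1$'', whereas you correctly observe that Remark \ref{remark 5} gives the limit $\gamma(\tau)/a^{2}$ only under Assumption \ref{A.3} and that a Fisher-consistency renormalization (and an extension beyond Gaussian white noise) is needed to genuinely close the argument.
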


        \begin{proposition}\label{proposition 5}
                In \ref{eq:eq3.1}, let $\hat{S}_{\psi,jk}(\lambda_m)=$ $I^{MR_N}_{jk}(\lambda_m)$ (\ref{eq:eq3.11}), where $R=R_N$ is defined in Remark \ref{remark 4}. Thus,  $\hat{c}^M_{jk\ell} \xrightarrow{p} c_{jk\ell}$ and $\hat{d}^M_{jk\ell} \xrightarrow{p} d_{jk\ell}$ when $\frac{R_N}{N} \rightarrow 0$ and $N,R_N \rightarrow \infty$.
            \end{proposition}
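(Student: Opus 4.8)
\emph{Approach.} The statement bundles two consistency claims, and the plan is to reduce both to results already in the excerpt: Proposition~\ref{proposition 4} (the Huber--$M$ sample autocovariances are asymptotically equivalent to the ordinary sample autocovariances), Remarks~\ref{remark 3} and~\ref{remark 4} together with Theorem~9.4.2 of \cite{anderson76} (the multitaper window periodogram is an asymptotically unbiased, consistent estimator of the spectral density), Proposition~\ref{proposition 1} (the true cepstral coefficients are the cosine--Fourier coefficients of $\ln S_X$), and Theorem~\ref{theorem1} (the $\Omega_w$-norm consistency of the truncated discriminant eigenvectors of \cite{krafty16,shin08}).

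\emph{Step 1: consistency of $\hat c^{M}_{jk\ell}$.} Writing $I^{rM}_{jk}(\lambda_m)$ through the tapered Huber autocovariances and applying the argument of Proposition~\ref{proposition 4} to the tapered data $h_{rt}X_{jkt}$, each $I^{rM}_{jk}(\lambda_m)$ differs from the corresponding tapered classical periodogram by a term that is $o_p(1)$; since by Remark~\ref{remark 3} the $I^{rM}_{jk}(\lambda_m)$ are (asymptotically) independent across $r$ and $R_N/N\to 0$, averaging over $r=1,\dots,R_N$ preserves this, so $I^{MR_N}_{jk}(\lambda_m)$ is asymptotically equivalent to $I^{R_N}_{jk}(\lambda_m)$, which is a consistent estimator of $S_X(\lambda_m)$ by Remark~\ref{remark 4} and Theorem~9.4.2 of \cite{anderson76}. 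Since $X_{jkt}$ is a stationary ARMA process with $\sum_\tau|\upsilon_\tau|<\infty$, $S_X$ is continuous and bounded away from zero on $[-\pi,\pi]$, so the continuous mapping theorem gives $\ln I^{MR_N}_{jk}(\lambda_m)\xrightarrow{p}\ln S_X(\lambda_m)$, and the log-periodogram moment bounds used in \cite{anderson76,reisen2018,celine22} upgrade this to $N^{-1}\sum_{m=0}^{N-1}\bigl|\ln I^{MR_N}_{jk}(\lambda_m)-\ln S_X(\lambda_m)\bigr|\xrightarrow{p}0$. Decomposing $\hat c^{M}_{jk\ell}-c_{jk\ell}$ as the sum of $N^{-1}\sum_m\bigl(\ln I^{MR_N}_{jk}(\lambda_m)-\ln S_X(\lambda_m)\bigr)\cos(\lambda_m\ell)$ and the deterministic Riemann-sum error $N^{-1}\sum_m\ln S_X(\lambda_m)\cos(\lambda_m\ell)-c_{jk\ell}$, the first piece is $o_p(1)$ by the preceding bound and the second vanishes as $N\to\infty$ by smoothness of $\ln S_X$, its limit being exactly $c_{jk\ell}$ by Proposition~\ref{proposition 1}. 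Hence $\hat c^{M}_{jk\ell}\xrightarrow{p}c_{jk\ell}$, jointly over the finitely many $\ell=0,\dots,L-1$.

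\emph{Step 2: consistency of $\hat d^{M}_{jks}$.} The estimated centroids $\hat\mu^{M}_{j\ell}$ and the matrices $\hat\Omega^{M}_B,\hat\Omega^{M}_W$ are continuous functions of the $\hat c^{M}_{jk\ell}$ and of $\hat f_j$, so combining Step~1 with Lemma~1 ($\hat f_j\xrightarrow{p}f_j$), the law of large numbers over the $n_j\to\infty$ replicates, and Assumption~\ref{A.2} (which keeps $\Omega_W$ nonsingular in the limit), $\hat\Omega^{M}_B\xrightarrow{p}\Omega_B$ and $\hat\Omega^{M}_W\xrightarrow{p}\Omega_W$. Theorem~\ref{theorem1} then produces truncated eigenvectors with $\|\hat p^{M}_s-p_s\|_{\Omega_w}\xrightarrow{p}0$, hence $\hat p^{M}_s\xrightarrow{p}p_s$, and Slutsky's theorem applied to $\hat d^{M}_{jks}=\sum_{\ell=0}^{L-1}\hat p^{M}_{s\ell}\hat c^{M}_{jk\ell}$ gives $\hat d^{M}_{jks}\xrightarrow{p}d_{jks}$.

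\emph{Main obstacle.} The delicate part is Step~1: I need the Huber--$M$ multitaper periodogram to be negligibly close to the classical multitaper periodogram \emph{after} passing through the nonlinear $\ln(\cdot)$ and the $N^{-1}\sum_m$ averaging, under the joint regime $N,R_N,L\to\infty$ with $R_N/N\to 0$ --- that is, an averaged (or uniform-in-$m$) bound on $|\ln I^{MR_N}_{jk}(\lambda_m)-\ln S_X(\lambda_m)|$, not merely pointwise convergence. The heavier tails of the log-periodogram at frequencies near $0$ and $\pi$, and the precise rate furnished by Proposition~\ref{proposition 4} for the tapered series, are where the real work lies; everything downstream is continuous mapping, Slutsky, and an appeal to Theorem~\ref{theorem1}.
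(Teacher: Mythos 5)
Your proposal follows essentially the same route as the paper's own proof, which simply invokes Proposition~\ref{proposition 4}, the continuity of the logarithm, and the continuous mapping theorem (Theorem~9.4.2 of \cite{anderson76}), with the discriminant part resting on Theorem~\ref{theorem1}. Your write-up is considerably more explicit than the paper's two-line argument --- in particular you correctly isolate the averaged log-periodogram bound and the Riemann-sum error as the steps needing real work, which the paper's proof passes over in silence --- but the underlying decomposition and the lemmas appealed to are the same.
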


        \begin{proof}
            Using the results of Proposition 4, the stationarity of the process allows us to use the cosine function for the Fourier transform, and the logarithm is a continuous function. Then, by applying the mapping theorem (Theorem 9.4.2 from \cite[p. 541]{anderson76}), the results follow directly.
        \end{proof}

        %\textcolor{red}{\textbf{The results below are under revision.}}
        
        %According to \cite{Anderson1962}, when a reasonable spectral estimator is chosen, we have $\hat{c}_{jk\ell} \approx c_{jk\ell}, \ell=1,\cdots,L-1$, for small $L$ relative to $N$. Additionally, the main statistical properties of estimators follow from classical results for estimated Fisher’s discriminant analysis \citep{fisher36}.

        %\textcolor{red}{\begin{corollary}
        %    Under the Assumptionsabove and the eingenvectors being $\Omega_w$-norm consistent,  and , as 
        %\end{corollary}}

        \begin{proposition} As $n$, $N$, $R_N$,  $L\to\infty$, $\frac{R_N}{N} \rightarrow 0$, \label{proposition 6}
            $\left\lVert  \hat{d}^M_{jks}-\hat{d}_{jks} \right\rVert = o_p(1)$
        \end{proposition}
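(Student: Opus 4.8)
The plan is to show that $\hat{d}^{M}_{jks}$ and $\hat{d}_{jks}$ share the same probability limit, namely the population discriminant coordinate $d_{jks}$, so that their difference is $o_p(1)$ by the triangle inequality together with Slutsky's theorem. First, from Proposition \ref{proposition 5} (which itself rests on Proposition \ref{proposition 4} and the continuous mapping theorem applied to the logarithm and the cosine transform), one has $\hat{c}^{M}_{jk\ell}\xrightarrow{p}c_{jk\ell}$ and $\hat{d}^{M}_{jk\ell}\xrightarrow{p}d_{jk\ell}$ under $R_N/N\to 0$ and $N,R_N\to\infty$; hence $\lVert\hat{d}^{M}_{jks}-d_{jks}\rVert=o_p(1)$. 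Second, from the Corollary following Theorem \ref{theorem1}, the classical estimator satisfies $\hat{d}^{F}_{jks}=\hat{d}_{jks}\xrightarrow{p}d_{jks}$ under the same growth conditions together with $n,L\to\infty$ and Assumptions \ref{A.1}--\ref{A.2}; hence $\lVert\hat{d}_{jks}-d_{jks}\rVert=o_p(1)$. Combining, $\lVert\hat{d}^{M}_{jks}-\hat{d}_{jks}\rVert\le\lVert\hat{d}^{M}_{jks}-d_{jks}\rVert+\lVert d_{jks}-\hat{d}_{jks}\rVert=o_p(1)$.

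For a more quantitative route that makes visible where the assumptions enter, I would split the difference into an eigenvector term and a cepstral term,
\[
\hat{d}^{M}_{jks}-\hat{d}_{jks}=\sum_{\ell=0}^{L-1}\left(\hat{p}^{M}_{s\ell}-\hat{p}_{s\ell}\right)\hat{c}^{M}_{jk\ell}+\sum_{\ell=0}^{L-1}\hat{p}_{s\ell}\left(\hat{c}^{M}_{jk\ell}-\hat{c}_{jk\ell}\right).
\]
The second sum is bounded, by Cauchy--Schwarz, by $\lVert\pmb{\hat{p}}_{s}\rVert\,\lVert\pmb{\hat{c}}^{M}_{jk}-\pmb{\hat{c}}_{jk}\rVert$; the eigenvector has unit norm, and the two cepstral vectors differ by $o_p(1)$ in Euclidean norm because the underlying $M$- and classical autocovariance estimators differ by $o_p(1)$ uniformly over lags (Proposition \ref{proposition 4} and Remark \ref{remark 5}), the logarithm and cosine maps are Lipschitz on the relevant range, and the $\ell^{-3/2}$ decay of the cepstral coefficients (Corollary \ref{corollary 4}) together with the balance condition $n^{-1/2}L\to 0$ of Assumption \ref{A.2} controls the growing number of summands. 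The first sum is bounded by $\lVert\pmb{\hat{p}}^{M}_{s}-\pmb{\hat{p}}_{s}\rVert\,\lVert\pmb{\hat{c}}^{M}_{jk}\rVert$ with $\lVert\pmb{\hat{c}}^{M}_{jk}\rVert=O_p(1)$; that $\pmb{\hat{p}}^{M}_{s}-\pmb{\hat{p}}_{s}=o_p(1)$ follows from Theorem \ref{theorem1} applied to both estimators, since $\hat{\Omega}^{M}_{B},\hat{\Omega}^{M}_{w}$ and $\hat{\Omega}_{B},\hat{\Omega}_{w}$ converge to the common limits $\Omega_{B},\Omega_{w}$ (the $M$-cepstral coefficients share the limit $c_{jk\ell}$), so the $\Omega_{w}$-norm limits of the two eigenvector sequences coincide, and an eigenvector-perturbation (Davis--Kahan type) argument under the eigenvalue separation $\lambda_1>\lambda_2>\cdots$ delivers the convergence.

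The main obstacle is the uniform control of these two sums as the truncation level $L$ diverges: pointwise-in-$\ell$ convergence of the cepstral coefficients is immediate from Propositions \ref{proposition 4}--\ref{proposition 5}, but upgrading it to convergence of the $L$-dimensional cepstral vectors, and of the induced eigenvectors of $\Omega_{w}^{-1}\Omega_{B}$, requires the summability Assumption \ref{A.1}, the rate condition $n^{-1/2}L\to 0$ of Assumption \ref{A.2}, and the $\mathcal{O}(\ell^{-3/2})$ tail bound of Corollary \ref{corollary 4}; the eigenvector step additionally needs a gap between consecutive eigenvalues of $\Omega_{w}^{-1}\Omega_{B}$, which is implicit in the present framework.
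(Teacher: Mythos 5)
Your proposal is correct and follows essentially the same route as the paper: the paper's appendix argument for the discriminant estimators uses exactly your add-and-subtract decomposition into a cepstral-error term $\sum_{\ell}\hat{p}^{M}_{s\ell}\left(\hat{c}^{M}_{jk\ell}-\hat{c}_{jk\ell}\right)$ and a weight-error term $\sum_{\ell}\left(\hat{p}^{M}_{s\ell}-\hat{p}_{s\ell}\right)\hat{c}_{jk\ell}$, each handled by the triangle inequality via Propositions 4--5, the corollary to Theorem 1, and the boundedness of the truncated weights. Your closing caveats about uniformity in $L$ and the eigenvalue-gap requirement for the eigenvector step are, if anything, more explicit than the paper, which simply asserts that the $\hat{p}_{s\ell}$ are bounded in probability.
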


        \begin{theorem} \label{theorem2}
            Under the Assumption\ref{A.1} and using sine tapers as defined in Equation \ref{eq:eq2.3}, each data taper are orthogonal and the periodogram are independent for each $r$ taper. Then, $I^{RM}_{jk}(\lambda_{m})\overset{\text{p}}{\longrightarrow} S_x(\lambda_m)$ for all $\lambda_m$, as $R_N\rightarrow \infty$.
        \end{theorem}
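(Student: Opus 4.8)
The plan is to write $I^{RM}_{jk}(\lambda_m)-S_x(\lambda_m)$ as a bias term plus a stochastic fluctuation, and to kill each piece separately; the main leverage is the independence of the single-taper $M$-periodograms provided by the orthogonality of the sine tapers. Fix $\lambda_m$ and the indices $j,k$ and decompose
\[
I^{RM}_{jk}(\lambda_m)-S_x(\lambda_m)=\Big(I^{RM}_{jk}(\lambda_m)-\mathbb{E}\,I^{RM}_{jk}(\lambda_m)\Big)+\Big(\mathbb{E}\,I^{RM}_{jk}(\lambda_m)-S_x(\lambda_m)\Big).
\]

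For the bias term I would show that each single-taper $M$-periodogram $I^{rM}_{jk}(\lambda_m)$ of Equation \ref{eq:eq3.19} is asymptotically unbiased for $S_x(\lambda_m)$ as $N\to\infty$. This combines two facts already available: by Remark \ref{remark 5} and Proposition \ref{proposition 4} the $M$-autocovariances reproduce the ordinary autocovariances (after the scale correction by $a^{2}=(2\Phi(c)-1)^{2}$, absorbed in the normalization of the robust periodogram), so $I^{rM}_{jk}(\lambda_m)$ is, to first order, a tapered finite Fourier transform of a series with the same second-order structure as $\{X_{jkt}\}$; and the sine-taper spectral window attached to $h_{rt}$ in Equation \ref{eq:eq3.10} is an approximate identity concentrating its mass at $\lambda_m$ as $N\to\infty$, uniformly for $r\le R_N$ since $R_N/N\to 0$ (the $M$-analogue of Remarks \ref{remark 3}–\ref{remark 4}). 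Averaging then gives $\mathbb{E}\,I^{RM}_{jk}(\lambda_m)=R_N^{-1}\sum_{r=1}^{R_N}\mathbb{E}\,I^{rM}_{jk}(\lambda_m)\to S_x(\lambda_m)$.

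For the fluctuation term, the hypothesis that the taper vectors are orthogonal makes $\{I^{rM}_{jk}(\lambda_m)\}_{r=1}^{R_N}$ mutually independent (the robust counterpart of Remark \ref{remark 3}). Under Assumption \ref{A.1}, in particular the finite fourth moment $\mathbb{E}|\epsilon_{jkt}|^{4}<\infty$ that controls the fourth-order cumulants entering a periodogram variance, each $\operatorname{Var}\big(I^{rM}_{jk}(\lambda_m)\big)$ is bounded by a constant $C(\lambda_m)$, uniformly in $r\le R_N$ and in $N$ large — the robust version of the classical $\mathcal{O}(1)$-variance property of a tapered periodogram ordinate. Hence, by independence,
\[
\operatorname{Var}\big(I^{RM}_{jk}(\lambda_m)\big)=\frac{1}{R_N^{2}}\sum_{r=1}^{R_N}\operatorname{Var}\big(I^{rM}_{jk}(\lambda_m)\big)\le\frac{C(\lambda_m)}{R_N}\longrightarrow 0
\]
as $R_N\to\infty$ (recall $R_N$ is a function of $N$ with $R_N/N\to 0$, so $N\to\infty$ as well), and Chebyshev's inequality gives $I^{RM}_{jk}(\lambda_m)-\mathbb{E}\,I^{RM}_{jk}(\lambda_m)\xrightarrow{p}0$. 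Combining the two pieces yields $I^{RM}_{jk}(\lambda_m)\xrightarrow{p}S_x(\lambda_m)$.

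The main obstacle is the pair of single-taper claims — asymptotic unbiasedness and the uniform $\mathcal{O}(1)$ variance bound — because $I^{rM}_{jk}(\lambda_m)$ is a nonlinear functional of the data (the squared norm of the $M$-regression coefficient vector from a Huber regression of the tapered series on $\cos$ and $\sin$), not a quadratic form. The route I would take is a Bahadur-type linearization of $\pmb{\hat\beta}^{rM}_{jk}(\lambda_m)$ around the root of the population estimating equation: to leading order the robust coefficients are a weighted average of $\psi\big(h_{rt}X_{jkt}\big)$ against the trigonometric regressors, i.e.\ a tapered finite Fourier transform of the transformed series $\psi(h_{rt}X_{jkt})$, to which the classical second- and fourth-order spectral arguments apply, with the linearization remainder $o_p(1)$ under Assumption \ref{A.1}. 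Here the Gaussian white-noise Assumption \ref{A.3} behind the $M$-periodogram asymptotics of \cite{celine22} — invoked through Remarks \ref{remark 5}–\ref{remark 6} and Proposition \ref{proposition 4} — does the heavy lifting, and the factor $a^{2}$ is what reconciles the limit $S_x(\lambda_m)$ with the $\psi$-spectrum $S_{\psi,jk}(\lambda_m)$ of Remark \ref{remark 6}.
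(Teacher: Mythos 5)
Your argument runs on the same engine as the paper's proof: orthogonality of the sine tapers gives independence of the single-taper $M$-periodograms, hence $\operatorname{Var}\bigl(I^{RM}_{jk}(\lambda_m)\bigr)=O(1/R_N)$, and Chebyshev's inequality finishes. Where you genuinely diverge --- and improve on the paper --- is in the treatment of the bias. The paper applies Chebyshev centered directly at $S_x(\lambda_m)$, writing $\mathbb{P}\bigl(|I^{RM}_{jk}(\lambda_m)-S_x(\lambda_m)|>\epsilon\bigr)\le \operatorname{Var}\bigl(I^{RM}_{jk}(\lambda_m)\bigr)/\epsilon^{2}$, a step that is only legitimate if $\mathbb{E}\,I^{RM}_{jk}(\lambda_m)=S_x(\lambda_m)$; the unbiasedness is relegated to a separate one-line lemma asserting $\mathbb{E}(I^{MR}_{jk})\approx \mathbb{E}(I^{R}_{jk})=S_X(\lambda_m)$ ``as a direct consequence of Proposition 4,'' with no limit argument. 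Your explicit bias/fluctuation decomposition, plus the sketch of why $\mathbb{E}\,I^{rM}_{jk}(\lambda_m)\to S_x(\lambda_m)$ via the $M$-autocovariance results and the concentration of the sine-taper window, makes that step honest rather than silent; your uniform bound $C(\lambda_m)$ also repairs the paper's sloppy final line, in which a single $\sigma_r^2$ (which in principle depends on $r$) stands in for the averaged variance. Both proofs share the same substantive gap, which you at least name and the paper does not: the finiteness (and, for your version, uniformity in $r\le R_N$ and $N$) of $\operatorname{Var}\bigl(I^{rM}_{jk}(\lambda_m)\bigr)$ does not follow from $\mathbb{E}|\epsilon_{jkt}|^{4}<\infty$ by the classical quadratic-form computation, because the $M$-periodogram is the squared norm of a Huber regression estimator --- a nonlinear functional of the data. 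Your proposed Bahadur-type linearization is the right instrument to close this, but it is carried out neither in your sketch nor in the paper, so the theorem as proved in both places still rests on that unestablished variance bound.
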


        \begin{lemma} \label{lemma2}
            $cov\left(I_{jk}^{MR}(\lambda_m),I_{jk}^{MR}(\lambda_k) \right) \rightarrow 0$, for all $k\neq m$ and $R_N \rightarrow \infty$. 
        \end{lemma}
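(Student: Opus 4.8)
The plan is to reduce the claim to a statement about the individual tapered $M$-periodogram ordinates and then exploit the orthonormality of the sine tapers. First I would use \eqref{eq:eq3.18} to write $I^{MR}_{jk}(\lambda_m)=R_N^{-1}\sum_{r=1}^{R_N}I^{rM}_{jk}(\lambda_m)$, so that by bilinearity of the covariance
\begin{equation*}
\operatorname{cov}\!\left(I^{MR}_{jk}(\lambda_m),\,I^{MR}_{jk}(\lambda_k)\right)
 =\frac{1}{R_N^{2}}\sum_{r=1}^{R_N}\sum_{r'=1}^{R_N}
   \operatorname{cov}\!\left(I^{rM}_{jk}(\lambda_m),\,I^{r'M}_{jk}(\lambda_k)\right),
\end{equation*}
and I would split the double sum into the diagonal part ($r=r'$) and the off-diagonal part ($r\neq r'$). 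The goal is to show that the diagonal summands are $O(1)$ (in fact $O(N^{-1})$) uniformly in $r,m,k$ and that the off-diagonal summands are $o(1)$ uniformly in $(r,r')$; dividing by $R_N^{2}$ then leaves a quantity of order $O(R_N^{-1})+o(1)\to 0$ as $R_N\to\infty$ (hence $N\to\infty$).

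The technical heart is the nonlinearity of the Huber estimator, since $\hat{\pmb{\beta}}^{rM}_{jk}(\lambda_m)$ in \eqref{eq:eq3.19}--\eqref{eq:eq3.20} is defined through an implicit estimating equation and one cannot argue directly with taper orthogonality. I would first invoke the Bahadur/Huber-type expansion for the $M$-regression estimator used in \cite{reisen2018} and \cite{celine22} under Assumption \ref{A.1} with $\psi=\psi_H$ bounded: up to a remainder that is $o_p$ uniformly in the frequency and in the taper index $r$, $\hat{\pmb{\beta}}^{rM}_{jk}(\lambda_m)$ is a linear functional of $\{\psi_H(\cdot)\}$ with weights $h_{rt}\pmb{S}_{jk}(\lambda_m)$; hence $I^{rM}_{jk}(\lambda_m)$ agrees, up to uniformly $o_p(1)$ terms, with a robustified modified periodogram ordinate that is quadratic in a tapered linear statistic whose second-order structure is governed by the $\psi$-autocovariances $\gamma_{\psi,jk}(\tau)$ of Remarks \ref{remark 5} and \ref{remark 6} (with Proposition \ref{proposition 4} ensuring these behave asymptotically like the ordinary sample autocovariances). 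With this linear surrogate in hand, the orthonormality $\sum_{t=1}^{N}h_{rt}h_{r't}=\delta_{rr'}$ from \eqref{eq:eq3.10} makes the contributions of two different tapers uncorrelated when the innovations are white --- precisely Assumption \ref{A.3} in this section --- and uncorrelated up to $O(N^{-1})$ for a general linear process under Assumption \ref{A.1}; this gives the off-diagonal bound. For the diagonal terms, each $I^{rM}_{jk}(\lambda_m)$ is (through the same linearization) a modified periodogram ordinate of a linear process, so it has variance $O(1)$ uniformly in $r$ and, at distinct Fourier frequencies, covariance $O(N^{-1})$ (cf.\ \cite[Prop.\ 10.3.2]{brockwell91}, under Assumption \ref{A.1} and the fourth-moment condition); in either case $R_N^{-2}\sum_{r}\operatorname{cov}(I^{rM}_{jk}(\lambda_m),I^{rM}_{jk}(\lambda_k))=O(R_N^{-1})\to 0$.

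As a sanity check and shortcut I would also note that Theorem \ref{theorem2}, together with the averaging over $R_N$ orthogonal tapers, forces $\operatorname{Var}(I^{MR}_{jk}(\lambda_m))=O(R_N^{-1})\to 0$ for each fixed $m$, so the Cauchy--Schwarz inequality $\lvert\operatorname{cov}(I^{MR}_{jk}(\lambda_m),I^{MR}_{jk}(\lambda_k))\rvert\le\operatorname{Var}(I^{MR}_{jk}(\lambda_m))^{1/2}\operatorname{Var}(I^{MR}_{jk}(\lambda_k))^{1/2}$ already closes the argument; the decomposition above is the version I would write out in full because it also yields the rate $O(R_N^{-1})$.

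The step I expect to be the main obstacle is making the linearization \emph{uniform}: I need the Bahadur remainder of the Huber $M$-regression estimator to be negligible simultaneously over all $R_N$ tapers and all Fourier frequencies, so that replacing the nonlinear $I^{rM}_{jk}$ by its linear surrogate introduces an error that still vanishes after being summed over $r,r'$ and divided by $R_N^{2}$. Once that reduction is secured, everything else --- taper orthogonality and the $O(N^{-1})$ covariance bounds for periodogram ordinates of a linear process --- is standard.
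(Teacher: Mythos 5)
Your proposal is correct in substance, but it bears essentially no relation to what the paper records as the proof of Lemma \ref{lemma2}: the appendix entry titled ``Proof of Lemma 2'' consists of the single line $E[I^M(\lambda_m)]\to S_X(\lambda_m)\Longrightarrow E\{\ln[I^M(\lambda_m)]\}\to\ln[S_X(\lambda_m)]$, which concerns the expectation of the log M-periodogram at a single frequency and never touches the covariance between two distinct ordinates --- as written, the paper does not actually prove this lemma. Of your two arguments, the Cauchy--Schwarz shortcut is the one that genuinely closes the gap using material the paper already has: the proof of Theorem \ref{theorem2} derives $\operatorname{Var}\left(I^{MR}_{jk}(\lambda_m)\right)=\sigma_r^2/R_N\to 0$ from the asserted independence of the tapered ordinates across $r$, and the bound $\lvert\operatorname{cov}(I^{MR}_{jk}(\lambda_m),I^{MR}_{jk}(\lambda_k))\rvert\le\operatorname{Var}(I^{MR}_{jk}(\lambda_m))^{1/2}\operatorname{Var}(I^{MR}_{jk}(\lambda_k))^{1/2}$ then gives the claim in two lines, with the rate $O(R_N^{-1})$ for free and without ever having to compare the two frequencies. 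The longer diagonal/off-diagonal decomposition is more informative about where the decorrelation comes from, but it rests on a Bahadur-type linearization of the Huber $M$-regression estimator that must hold uniformly over all $R_N$ tapers and all Fourier frequencies; neither \cite{reisen2018} nor \cite{celine22} supplies that uniformity, so writing out that version would require you to prove the uniform remainder bound yourself --- exactly the obstacle you flag. I would present the Cauchy--Schwarz argument as the proof and, if the off-diagonal covariance structure across tapers is needed elsewhere, isolate the linearization as a separately stated lemma or assumption rather than folding it into this proof.
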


        %\begin{theorem}  \label{theorem2}
        %Let $\{X_{jkt}\}$ be the process defined in Equation \ref{eq:eq2.1} and under Assumptions\ref{A.1}-\ref{A.3}, then $\pmb{\hat{p}_{s(L)}} = [\hat{p_s1} \quad \hat{p_s2}\cdots \hat{p_s(L-1)}]^T$ is a truncated estimator vector of the true $\pmb{p_s} = [p_{s1} \quad p_{s2} \cdots]^T$ and $\pmb{\hat{p}_{sL}} \overset{\text{p}}{\longrightarrow} \pmb{p_s}$.
        %\end{theorem}

    %%%%%%%%%%%%%%%%%%%%%%%%%%%%%%%%%%%%%%%%%%%%%%%%%%%%
    \section{Simulations}
    %%%%%%%%%%%%%%%%%%%%%%%%%%%%%%%%%%%%%%%%%%%%%%%%%%%%

    %%%%%%%%%%%%%%%%%%%%%%%%%%%%%%%%%%%%%%%%%%%%%%%%%%%%
    \subsection{$AR(1)$ simulation with fixed parameters}
    %%%%%%%%%%%%%%%%%%%%%%%%%%%%%%%%%%%%%%%%%%%%%%%%%%%%

        Before show the results of Monte Carlo simulation, an example is conducted of the empirical cepstral function for the AR$(1)$ model $X_{jkt} = \phi_{jk}X_{jk(t-1)} + \epsilon_{jkt}$, where $\epsilon_{jkt} \sim \mathcal{N}(0,\sigma^2_{\epsilon \hspace{.1cm}jk})$.
        We considered in the simulation  $J=3$, $n_j=1$, $N = 100$, the AR parameters $\phi \in \{0.25,0.50,0.75\}$ and $\sigma^2_{\epsilon \hspace{.1cm} jk} = 1$.
        The classical periodogram was used in this example as the estimator of the spectral density and the results are presented in Figure \ref{fig:fig5}. There are three panels for the time domain: (a) time series replicates; (b) ACF; and (c) PACF; and three panels for the frequency domain: (d) Periodogram; (e) ln-Periodogram; and (f) estimated Cepstra. 

        \begin{figure}[H]
            \centering
            \includegraphics[width=1\linewidth, height=10cm]{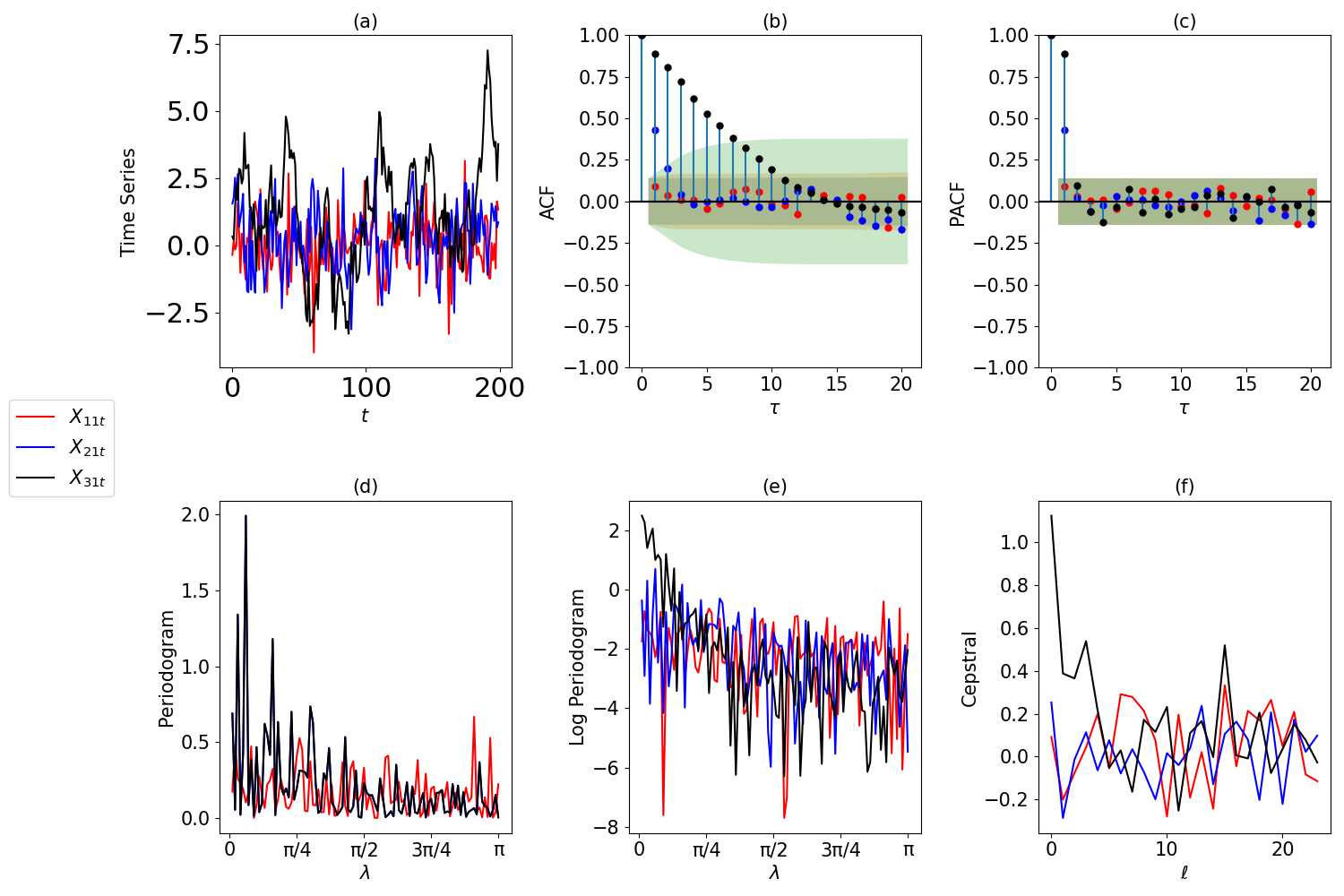}
            \caption{ $\phi \in [0.1,0.5,0.9]$ and $\sigma^2_{\epsilon\hspace{.1cm}jk} = 1$: (a) Time Series, (b) ACF, (c) PACF, (d) Periodogram, (e) ln-Periodogram, and (f) Cepstra}
            \label{fig:fig5}   
        \end{figure}

        \begin{figure}[H]
            \centering
            \includegraphics[width=1\linewidth, height=7cm]{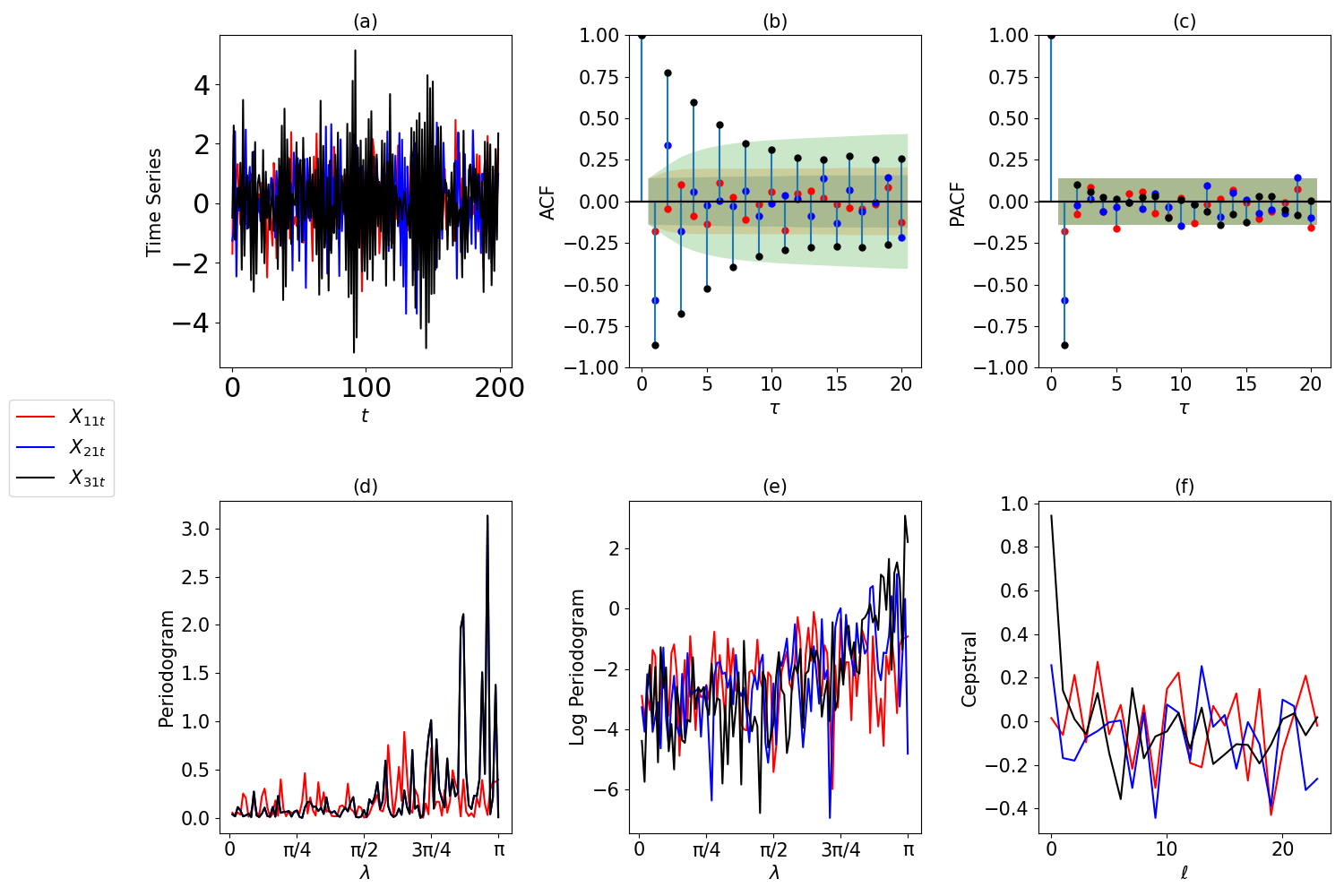}
            \caption{ $\phi \in [-0.1,-0.5,-0.9]$ and $\sigma^2_{\epsilon\hspace{.1cm}jk} = 1$: (a) Time Series, (b) ACF, (c) PACF, (d) Periodogram, (e) ln-Periodogram, and (f) Cepstra}
            \label{fig:fig6}   
        \end{figure}

        \begin{figure}[H]
            \centering
            \includegraphics[width=1\linewidth, height=10cm]{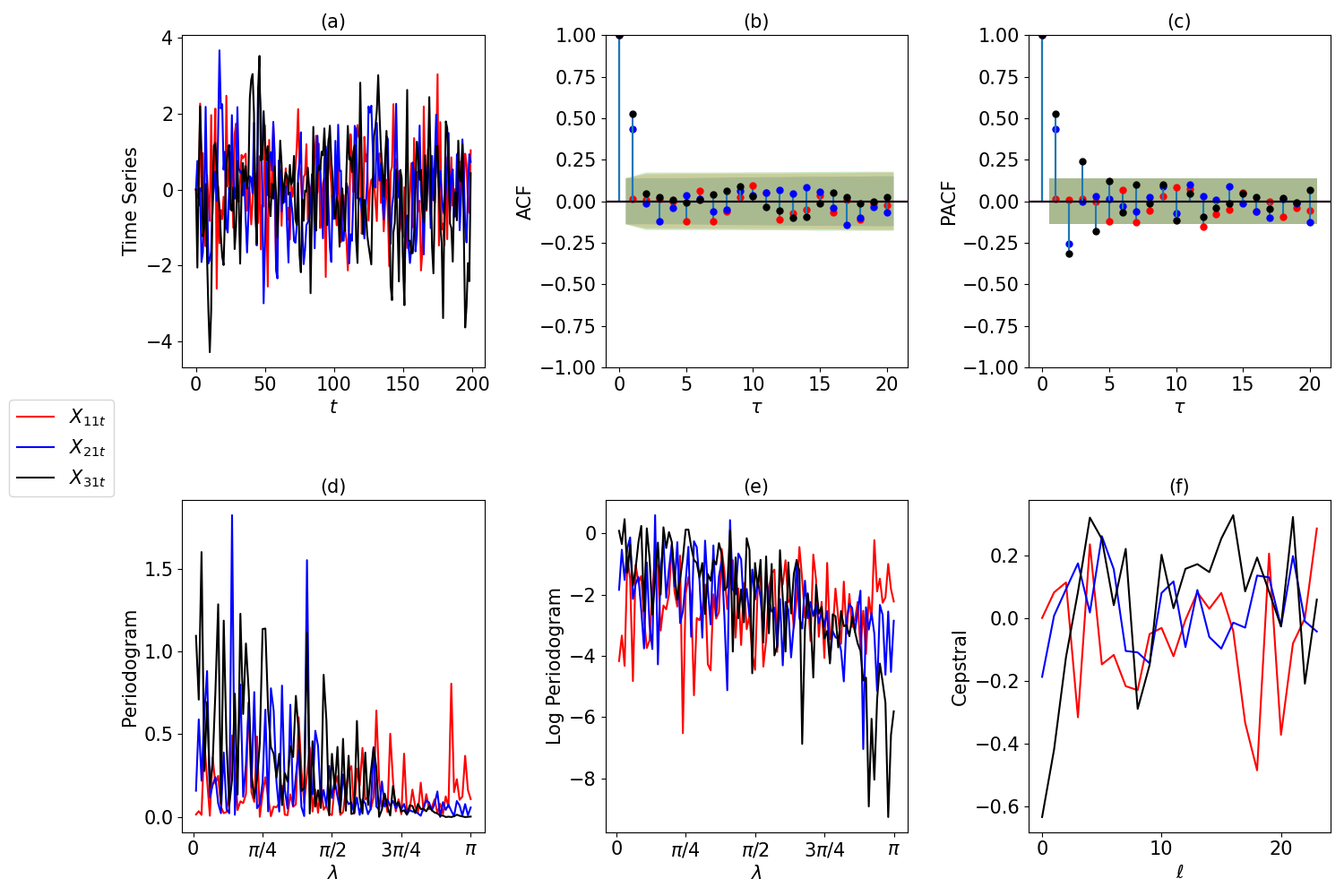}
            \caption{ $\theta \in [0.1,0.5,0.9]$ and $\sigma^2_{\epsilon\hspace{.1cm}jk} = 1$: (a) Time Series, (b) ACF, (c) PACF, (d) Periodogram, (e) ln-Periodogram, and (f) Cepstra}
            \label{fig:fig7}   
        \end{figure}

        \begin{figure}[H]
            \centering
            \includegraphics[width=1\linewidth, height=10cm]{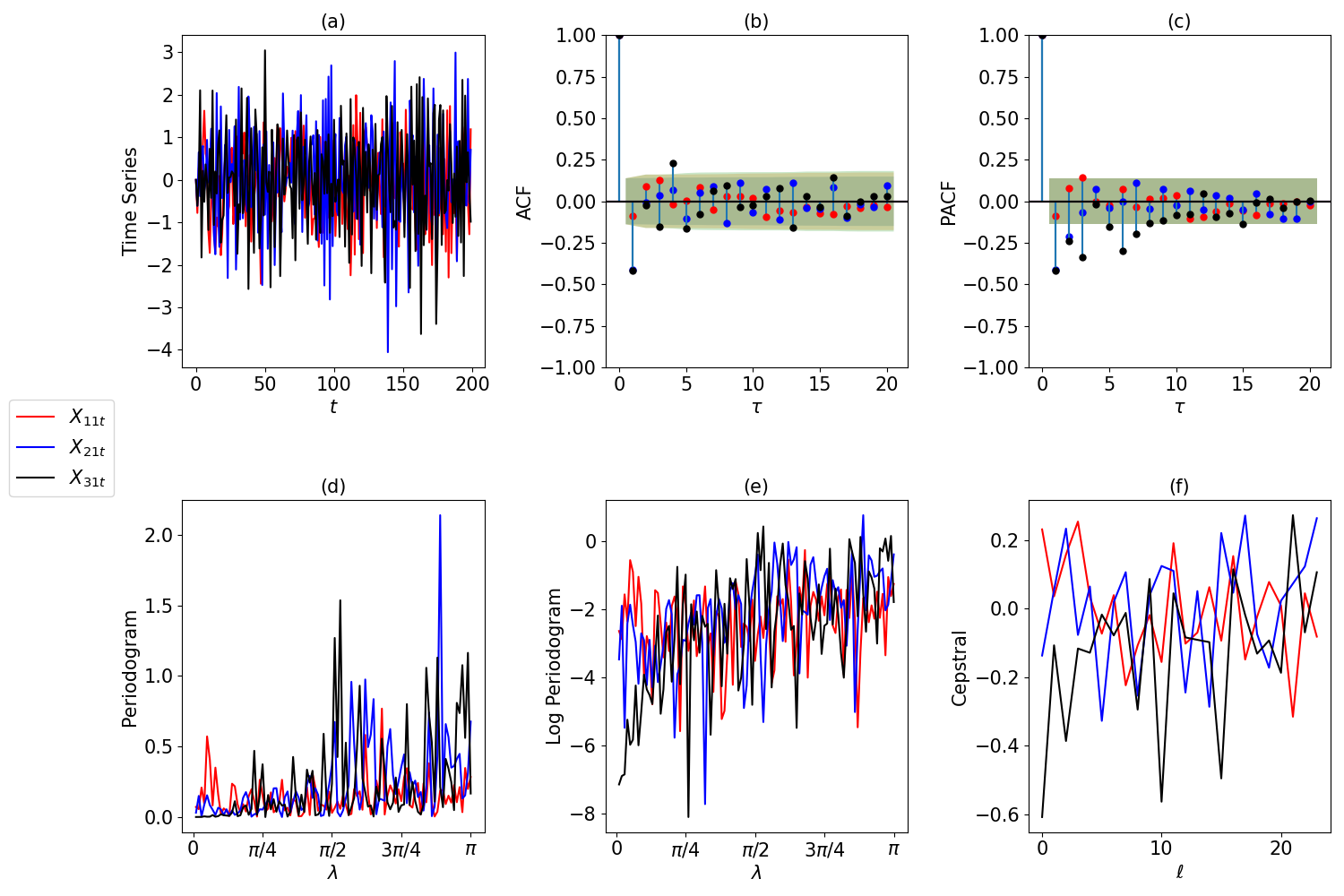}
            \caption{ $\theta \in [-0.1,-0.5,-0.9]$ and $\sigma^2_{\epsilon\hspace{.1cm}jk} = 1$: (a) Time Series, (b) ACF, (c) PACF, (d) Periodogram, (e) ln-Periodogram, and (f) Cepstra}
            \label{fig:fig8}   
        \end{figure}

        %\begin{figure}[H]
        %    \centering
        %    \includegraphics[width=1\linewidth, height=10cm]{figs/fig_2.1.3.png}
        %    \caption{ $\theta \in [0.1,0.5,0.9]$ and $\sigma^2_{\epsilon\hspace{.1cm}jk} = 1$: (a) Time Series, (b) ACF, (c) PACF, (d) Periodogram, (e) ln-Periodogram, and (f) Cepstra}
        %    \label{fig:fig7}   
        %\end{figure}

        %\begin{figure}[H]
        %    \centering
        %    \includegraphics[width=1\linewidth, height=10cm]{figs/fig_2.1.4.png}
        %    \caption{ $\theta \in [-0.1,-0.5,-0.9]$ and $\sigma^2_{\epsilon\hspace{.1cm}jk} = 1$: (a) Time Series, (b) ACF, (c) PACF, (d) Periodogram, (e) ln-Periodogram, and (f) Cepstra}
        %    \label{fig:fig8}   
        %\end{figure}

        On the one hand, the time domain is shown in panels (a), (b), and (c). It is possible to note that increasing the parameter will increase the variability for constant variance noise. This also influences the autocorrelations, which become more persistent over time as the coefficient increases. All coefficients can be seen in the partial ACF with just one significant peak.
                
        On the other hand, the frequency domain is represented in panels (d) and (e). It can be noted that the estimated spectra of an AR$(1)$ process exhibit lower frequency components. This means that the most variability of the process is due to the frequencies next to zero. Another important feature is the fact that for frequencies greater than 1, the time series cannot be differentiated. According to \cite{krafty16}, there is a contribution of this kind of frequencies that have amplitudes next to zero.
                
        To observe more variability and to discriminate better, the logarithm is calculated on the periodogram, generating the graphs presented in panel (e). Clearly, they are now more easily distinguishable. More complicated processes are even better taken into account with this variability.

    %%%%%%%%%%%%%%%%%%%%%%%%%%%%%%%%%%%%%%%%%%%%%%%%%%%%
    \subsection{Monte Carlo Simulation using AR$(2)$}
    %%%%%%%%%%%%%%%%%%%%%%%%%%%%%%%%%%%%%%%%%%%%%%%%%%%%

        A Monte Carlo simulation was performed to analyze the sample properties of the M-cepstra and its impact on classification rates in the context of within spectral variability with and without contamination of abrupt observations that behave as additive outliers. For the simulations, $J=3$ populations are considered in 6 different situations: $n_j = \{15,50,100\}$ and $N=\{1000,2000\}$. In each scenario, the classification rate is calculated and the scenarios are repeated 100 times to obtain the mean and standard deviation of the classifications. In order to compare with \cite{krafty16} and due to the characteristics of the data application presented in Section 5, the simulations are conducted using a conditional (on population) AR$(2)$ model that can be represented by
        
        \begin{equation}\tag{4.1}\label{eq:eq4.1}
            \begin{matrix}
                X_{jkt} =\phi_{jk1}X_{jkt-1} + \phi_{jk2}X_{jkt-2} + \epsilon_{jkt} & \epsilon_{jkt} \sim N(0,\sigma_{jk}^2)
            \end{matrix}
        \end{equation}
        
        \noindent where the parameter for each population is a random variable presented in Table \ref{tab:tab1}.

        \begin{table}[H]
            \centering
            \caption{Conditional Parameters Distribution}
            \label{tab:tab1}
            \begin{adjustbox}{width=\textwidth}
                \begin{tabular}{c|c|c|c|c} 
                \hline
                \diagbox{\textbf{Population}}{\textbf{Parameters}} & $\phi_{jk1}$                        & $\phi_{jk2}$                       & \textbf{White Noise}                    & \textbf{White Noise's Variance}    \\ 
                \hline
                $j=1$                                              & $(\phi_{1k1} |j=1) \sim UNI(0.05,0.7)$ & $(\phi_{1k2}  |j=1) \sim UNI(-0.12,-0.06)$ & $\epsilon_{1k} \sim N(0,\sigma_{1k}^2)$ & $ (\sigma_{1k}^2  |j=1) \sim UNI(0.1,10)$   \\
                $j=2$                                              & $(\phi_{2k1} |j=2) \sim UNI(0.01,1.2)$ & $(\phi_{2k2}  |j=2) \sim UNI(-0.36,-0.25)$ & $\epsilon_{2k} \sim N(0,\sigma_{2k}^2)$ & $(\sigma_{2k}^2 |j=2)  \sim UNI(0.3,3.0)$    \\
                $j=3$                                              & $(\phi_{3k1} |j=3) \sim UNI(0.12,1.5)$ & $(\phi_{3k2} |j=3) \sim UNI(-0.75,-0.56)$ & $\epsilon_{3k} \sim N(0,\sigma_{3k}^2)$ & $(\sigma_{3k}^2 |j=3) \sim UNI(0.9,1.1)$  \\
                \hline\hline
                \end{tabular}
            \end{adjustbox}
        \end{table}
        
        The conditional AR$(2)$ model is chosen with parameters being random variables with uniform distribution in overlapping intervals. Additionally, the disturbances are modeled as normal distribution with zero mean and their variance as a uniform distribution with different increasing intervals. This characteristic is important to create time series that can be mixed and create difficulty for discrimination and classification, as would occur in a real application context.
        
        \sloppy
        Each scenario includes all three populations for each time series and replicas. All transformations are performed, including multitaper periodogram estimation for both classical and $M-$periodogram. The cepstral vector is then obtained for each replicate within its own population. An LDA parameter is estimated based on the cepstra and used for the classification test. For this test, a new group of time series for each population is simulated with the same parameters. For comparison with \cite{krafty16}, $N=50$ is chosen for the testing data. These new time series undergo the same transformations as the original replicas until achieving the cepstral vector, which is then used for testing.
        
        Finally, the simulations are conducted in two situations: without outliers, where the model is fitted without extreme observations, and with outliers, where the replicas are contaminated with an additive outlier following \cite{huber,reisen20} with
        
        \begin{equation}\tag{4.2}\label{eq:eq4.2}
        Z_{jkt}=X_{jkt}+ \omega I_{jkt},
        \end{equation}
        
        \noindent where $\omega$ is the magnitude of the outlier, $\{X_{jkt}\}$ and $\{I_{jkt}\}$ are independent, and $\{I_{jkt}\}$ is a sequence of IID random variables with $P\left(I_{jkt}=-1\right)=P\left(I_{jkt}=1\right)=p/2$ and $P\left(I_{jkt}=0\right)=1-p$, $p \in (0,1)$. Here, we take $p=0.01$ and $\omega=7$. The uncontaminated case corresponds to $\omega=0$. The proposed robust cepstral discriminant analysis was compared to the cepstral multitaper discriminant analysis, both utilizing $R=7$ tapers. The tuning constant in the Huber model is set to $1.345$ as the asymptotically optimal value defined by \cite{reisen-2020-cyclost}. A test data set of 50 time series per group is simulated for each random sample to evaluate the classification rate of the training data. The correct classification rate is calculated in each scenario using the LDA model with cepstral coefficients. This scenario was repeated 100 times and the mean and standard deviation of the correct classification rates are calculated and presented in Table \ref{tab:tab2}.
        
        The performance of the two classification methods - Classical and Huber - are evaluated in different scenarios, with and without the presence of outliers. It is noted that for samples without outliers, the results show that both classification methods exhibit high correct classification rates. In $\Pi_1$, for instance, with $n_j = 15$, the average correct classification rates were 89.64\% and 90.09\% for the Classical method with $N = \{1000,2000\}$, respectively. For the Huber method, the rates were slightly lower, with 85.37\% and 89.85\% for $N =\{1000,2000\}$, respectively.
        
        As the number of replicas per population increases, an improvement in correct classification rates is observed for both methods. With $n_j$ = 100, the Classical and Huber methods achieved 91.02\% and 93.00\% correct classification for $N = 1000$, respectively. The same trend is observed for $N = 2000$, where the rates reach 93.00\% and 92.19\%.
        
        In the presence of outliers, a decrease in correct classification rates is noted for both methods. For $\Pi_1$ with $n_j$ = 15, the average correct classification rates dropped to 67.06\% and 68.52\% for the Classical method with N = 1000 and 2000, respectively. However, the Huber method proved to be   more robust, with rates of 79.13\% and 83.64\% for N = 1000 and 2000, respectively.
        
        The robustness of the Huber method is more evident as the number of replicas per population increases. With $n_j$ = 100, the correct classification rates for the Classical method were 76.18\% and 82.78\%, while the Huber method achieved 81.06\% and 88.03\% for N = 1000 and 2000, respectively. This indicates that the Huber method is more effective in the presence of outliers when compared to the Classical method.
        
        Comparing the different variance scenarios ($\sigma^2$), it is observed that for $\Pi_1$ (0.01 - 10), the correct classification rates are generally lower than for $\Pi_2$ (0.3 - 3) and $\Pi_3$ (0.9 - 1.1). For example, in $\Pi_3$, for $n_j$ = 15, the correct classification rates without outliers were 93.67\% and 93.88\% for the Classical method with $N = \{1000,2000\}$, respectively, showing better performance compared to $\Pi_1$ and $\Pi_2$.
        
        With the inclusion of outliers, $\Pi_3$ also showed overall better robustness. The correct classification rates for the Huber method in $\Pi_3$ were consistently higher compared to $\Pi_1$ and $\Pi_2$. This suggests that classifier performance can be strongly influenced by the variance of the data.
        
        \begin{table}[H]
            \centering
            \caption{Monte Carlo simulation: mean and standard deviation rate for correct classification}
            \label{tab:tab2}
            \begin{adjustbox}{width=\textwidth}
                \begin{tabular}{c|c|c|cccc|cc|cccc} 
                \hline
                \diagbox{}{$\sigma_{jk}^2$}  & $n_j$ & \multicolumn{6}{c|}{WITHOUT OUTLIER} & \multicolumn{4}{c}{WITH OUTLIER} \\ 
                \hline
                $I_{jk}$  & ~- & \multicolumn{4}{c|}{Classic} & \multicolumn{2}{c|}{M} & \multicolumn{2}{c}{Classic} & \multicolumn{2}{c}{M} \\ 
                \hline
                $N$ & ~- & 250 & 500 & 1000 & 2000 & 1000 & 2000 & 1000 & 2000 & 1000 & 2000 \\ 
                \hline
                \multirow{6}{*}{$\Pi_1$} & \multirow{6}{*}{(0.01 - 10)} & \multirow{2}{*}{15} & 83.94\% & 86.97\% & 89.64\% & 90.09\% & 85.37\% & 89.85\% & 67.06\% & 68.52\% & 79.13\% & 83.64\% \\
                                          &                              &                      & (6.72\%) & (6.40\%) & (5.80\%) & (5.96\%) & (6.10\%) & (5.99\%) & (9.10\%) & (8.45\%) & (7.45\%) & (7.91\%) \\
                                          &                              & \multirow{2}{*}{50} & 85.55\% & 88.34\% & 90.85\% & 91.02\% & 86.40\% & 90.13\% & 75.08\% & 81.10\% & 80.34\% & 84.74\% \\
                                          &                              &                      & (5.58\%) & (5.17\%) & (4.47\%) & (4.63\%) & (5.20\%) & (4.53\%) & (4.97\%) & (3.89\%) & (4.89\%) & (4.48\%) \\
                                          &                              & \multirow{2}{*}{100} & 86.00\% & 88.71\% & 91.02\% & 93.00\% & 89.02\% & 92.19\% & 76.18\% & 82.78\% & 81.06\% & 88.03\% \\
                                          &                              &                      & (5.40\%) & (4.98\%) & (4.19\%) & (3.75\%) & (6.28\%) & (3.32\%) & (3.77\%) & (3.12\%) & (3.39\%) & (3.18\%) \\ 
                \hline
                \multicolumn{1}{c}{~} & \multicolumn{1}{c}{~} & \multicolumn{1}{c}{~} & ~ & ~ & ~ & \multicolumn{1}{c}{~} & ~ & \multicolumn{1}{c}{~} & ~ & ~ & ~ & ~ \\ 
                \hline
                \multirow{6}{*}{$\Pi_2$} & \multirow{6}{*}{(0.3 - 3)} & \multirow{2}{*}{15} & 79.25\% & 84.79\% & 88.80\% & 91.40\% & 83.10\% & 90.01\% & 72.28\% & 83.02\% & 85.26\% & 87.83\% \\
                                          &                              &                      & (8.09\%) & (6.89\%) & (6.07\%) & (5.27\%) & (4.20\%) & (4.63\%) & (6.44\%) & (5.70\%) & (5.01\%) & (4.73\%) \\
                                          &                              & \multirow{2}{*}{50} & 82.92\% & 88.02\% & 91.98\% & 92.33\% & 89.60\% & 92.02\% & 84.78\% & 86.66\% & 87.12\% & 88.43\% \\
                                          &                              &                      & (6.01\%) & (5.16\%) & (4.17\%) & (4.17\%) & (4.60\%) & (5.12\%) & (4.77\%) & (5.00\%) & (4.23\%) & (4.22\%) \\
                                          &                              & \multirow{2}{*}{100} & 83.76\% & 88.69\% & 92.53\% & 95.27\% & 91.14\% & 95.03\% & 71.04\% & 87.44\% & 89.02\% & 90.95\% \\
                                          &                              &                      & (5.44\%) & (4.76\%) & (3.91\%) & (3.21\%) & (1.74\%) & (2.59\%) & (2.97\%) & (2.84\%) & (1.93\%) & (2.71\%) \\ 
                \hline
                \multicolumn{1}{c}{~} & \multicolumn{1}{c}{~} & \multicolumn{1}{c}{~} & ~ & ~ & ~ & \multicolumn{1}{c}{~} & ~ & \multicolumn{1}{c}{~} & ~ & ~ & ~ & ~ \\ 
                \hline
                \multirow{6}{*}{$\Pi_3$} & \multirow{6}{*}{(0.9 - 1.1)} & \multirow{2}{*}{15} & 87.73\% & 91.01\% & 93.67\% & 93.88\% & 91.15\% & 93.45\% & 81.94\% & 81.08\% & 89.43\% & 91.14\% \\
                                          &                              &                      & (6.90\%) & (6.47\%) & (5.22\%) & (6.15\%) & (5.40\%) & (4.33\%) & (6.61\%) & (6.63\%) & (5.81\%) & (4.91\%) \\
                                          &                              & \multirow{2}{*}{50} & 90.38\% & 93.11\% & 95.22\% & 95.83\% & 93.10\% & 95.16\% & 84.10\% & 88.52\% & 90.56\% & 93.26\% \\
                                          &                              &                      & (4.92\%) & (4.18\%) & (3.44\%) & (4.57\%) & (3.57\%) & (4.23\%) & (3.09\%) & (4.57\%) & (3.89\%) & (2.81\%) \\
                                          &                              & \multirow{2}{*}{100} & 90.48\% & 93.55\% & 95.30\% & 96.51\% & 94.98\% & 95.71\% & 84.22\% & 93.22\% & 91.90\% & 95.23\% \\
                                          &                              &                      & (4.33\%) & (3.61\%) & (3.14\%) & (2.74\%) & (2.86\%) & (2.93\%) & (2.44\%) & (2.91\%) & (1.93\%) & (1.95\%) \\
                \hline\hline
                \end{tabular}
            \end{adjustbox}
        \end{table}

        The results of the Monte Carlo simulations indicate that the Huber method is generally more robust in the presence of outliers compared to the Classical method. Furthermore, the performance of the classification methods is influenced by the sample size and the number of replicas per population. In high variance scenarios ($\Pi_1$), both methods exhibit lower correct classification rates, while in low variance scenarios ($\Pi_3$), the classification rates are higher. These findings highlight the importance of considering the robustness of classification methods in different contexts of variance and the presence of outliers.

    \newpage

    %%%%%%%%%%%%%%%%%%%%%%%%%%%%%%%%%%%%%%%%%%%%%%%%%%%%
    \section{The Real Data Application: Neurodegenerative Diseases}
    %%%%%%%%%%%%%%%%%%%%%%%%%%%%%%%%%%%%%%%%%%%%%%%%%%%%

        According to \cite{hausdoff00}, gait patterns can provide tools to aid in the detection and monitoring of neurodegenerative diseases. For example, Amyotrophic Lateral Sclerosis (ALS) is a disease caused by the loss of motor neurons that directly affect balance. Another disease commonly known in the literature is Huntington's disease, which degenerates parts of the brain responsible for controlling movement smoothness. As a consequence, individuals with these diseases tend to have more abrupt movements, affecting their gait pattern.
        
        All these diseases affect gait and mobility, which can be identified by the cycle of walking patterns recorded by sensors in patients' feet. \cite{hausdoff00} conducted a study for this purpose, involving 49 participants who were asked to walk. Among all the participants, 16 were healthy control individuals, 20 had Huntington's disease, and 13 had ALS \footnote{There are still 16 individuals with Parkinson's disease not considered in the sample by \cite{krafty16}. For this reason, they were not analyzed in this study either for comparison purposes.}. The recorded information includes the time spent for each individual to complete a stride cycle per second \footnote{And because the data are given in seconds, the recorded data can be expressed in Hertz, representing here a measure of the frequency of strides per second.}.
        
        The participants walked for exactly 6 minutes, resulting in a total of 360 observations in seconds. The objective is to capture how much time each person takes per second to complete a stride cycle, that is, how much time they take to place their foot on the ground and do so again during walking. Therefore, the data are unit of time per unit of time, in the sense that it evaluates how many milliseconds an individual took to complete a walking cycle (pushing one foot against the ground and doing so again) for each second. The first 20 observations (first 20 seconds) were excluded from the sample by \cite{hausdoff00} as the author considers them merely a start-up effect. The original data can be found at \textcolor{blue}{https://www.physionet.org/content/gaitndd/1.0.0/}.
        
        According to \cite{krafty16}, the original data required some treatment. Firstly, the author decided to work with 3.5 minutes, which gives 210 observations out of the total 360 minutes, due to the slower pace of sick individuals compared to healthy ones, resulting in fewer samples. As this was still insufficient, two patients each from ALS and Huntington's were eliminated due to their missing observations or abrupt ones among those who remained, leaving 11 and 18 patients, respectively. Additionally, in order to avoid further eliminations, the author opted to fill in the missing observations using cubic interpolation. These observations were also used to smooth spline sampled at 2Hz, effectively doubling the amount of information to 420 observations. Finally, the analyzed dataset contains 420 observations and 45 remaining participants \citep[p. 445]{krafty16}.
        
        Note that several transformations are needed to model could be used in frequency domain and to bring it back to the time domain with the main information in order to be used in linear discriminant analysis. In this work, the main supposition is that less transformation can be used for better classification, especially not using filter for extreme observations.
        
        The main idea of this work centers on the fact that so many transformations can significantly influence the results, masking or generating false success rates. Furthermore, the fact of eliminating sick individuals seems to be a convenience choice once the healthy ones are more sTable and there are more observations.
        
        Therefore, in this work, it is chosen not to filter the original data for abrupt observations, arguing that this information only behaves as additive outliers, but is in fact relevant information for identifying sick individuals. Additionally, all individuals are considered giving $n=49$ participants and $N=120$ observations per individual (representing $2.05$ minutes). Subsequently, for comparison reasons, two cases are considered: i) \textbf{modified} data extracting abrupt observations by filtering using 3 standard deviation median exactly as done by \cite{krafty16}; ii) \textbf{non-modified} data without extracting the extreme observations. Next, in both cases, the resulting data are detrended.
        
        In this context, $J=3$ populations (1: control; 2: ALS; 3: Huntington) are being considered, with $n_1=16$, $n_2=13$, and $n_3=20$ being their respective numbers of replicates. Figures \ref{fig:9}, \ref{fig:10}, and \ref{fig:11} present the time series, the boxplot, and ACF for 9 individuals, 3 for control healthy, 3 for ALS, and 3 for Huntington's with data \textbf{modified}. Similarly, figures \ref{fig:12}, \ref{fig:13}, and \ref{fig:14} are the data for \textbf{non-modified} data.
        
        \begin{figure}[H]
        \centering
        \includegraphics[width=\textwidth]{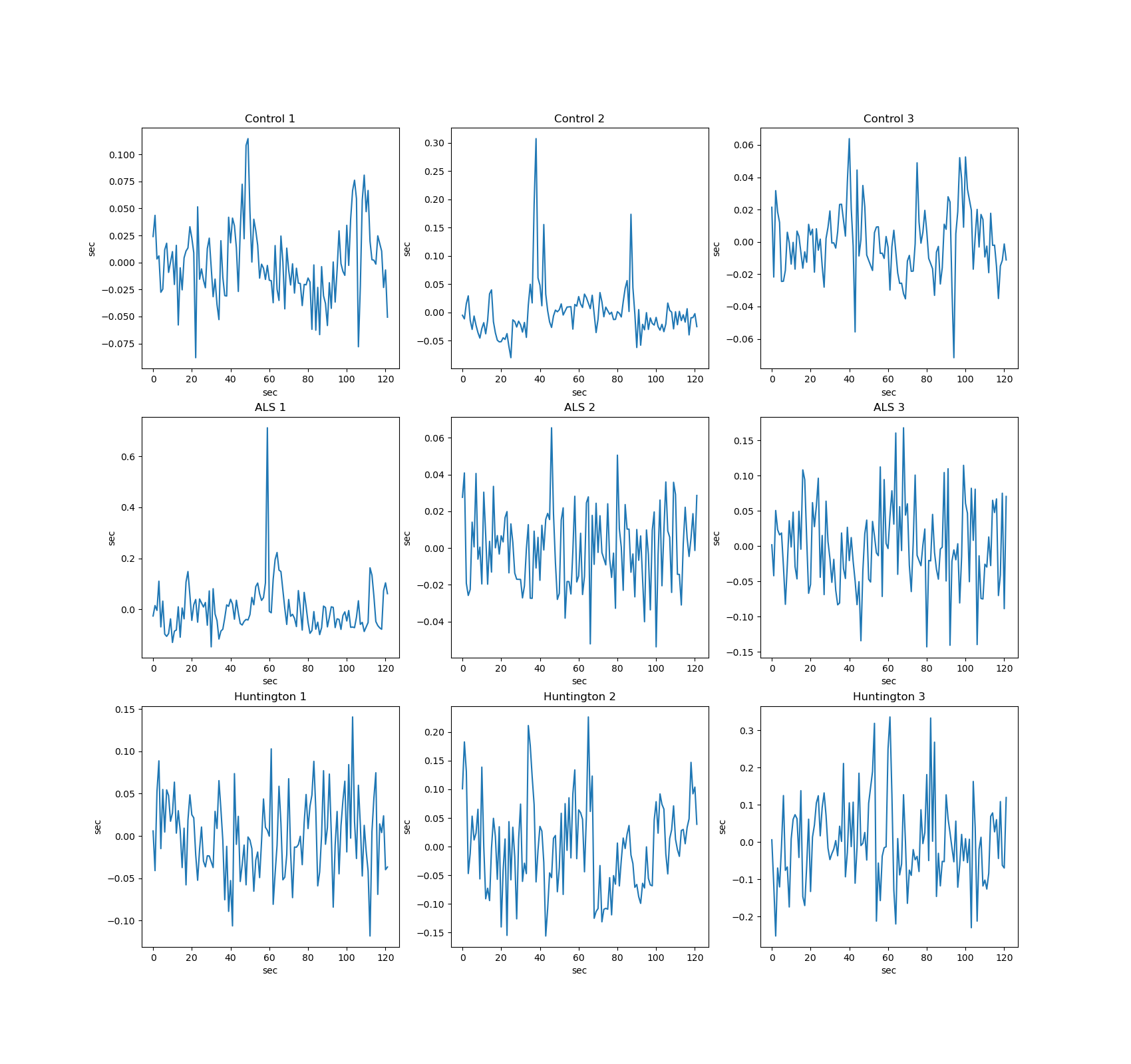}
            \caption{\textbf{Modified} detrended stride interval time series from 9 participants in neurodegenerative disease study with 3 healthy controls, 3 ALS and 3     Huntington's} \label{fig:9}
        \end{figure}
        
        \begin{figure}[H]
        \centering
            \includegraphics[width=\textwidth]{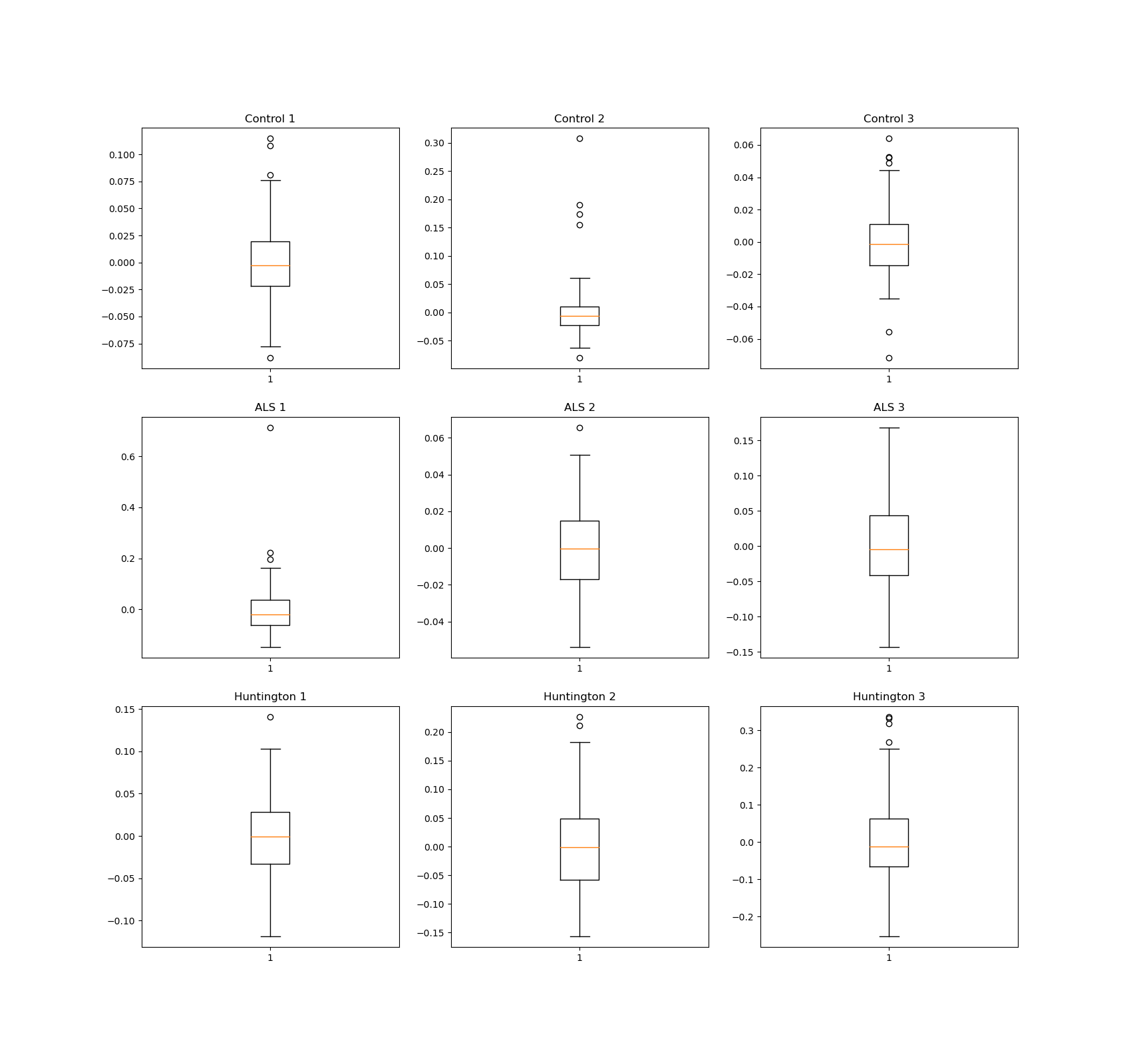}
            \caption{Box plot of \textbf{modified} detrended stride interval time series from 9 participants in neurodegenerative disease study with 3 healthy controls, 3 ALS and 3 Huntington's}
            \label{fig:10}
        \end{figure}
        
        \begin{figure}[H]
        \centering
        \includegraphics[width=\textwidth]{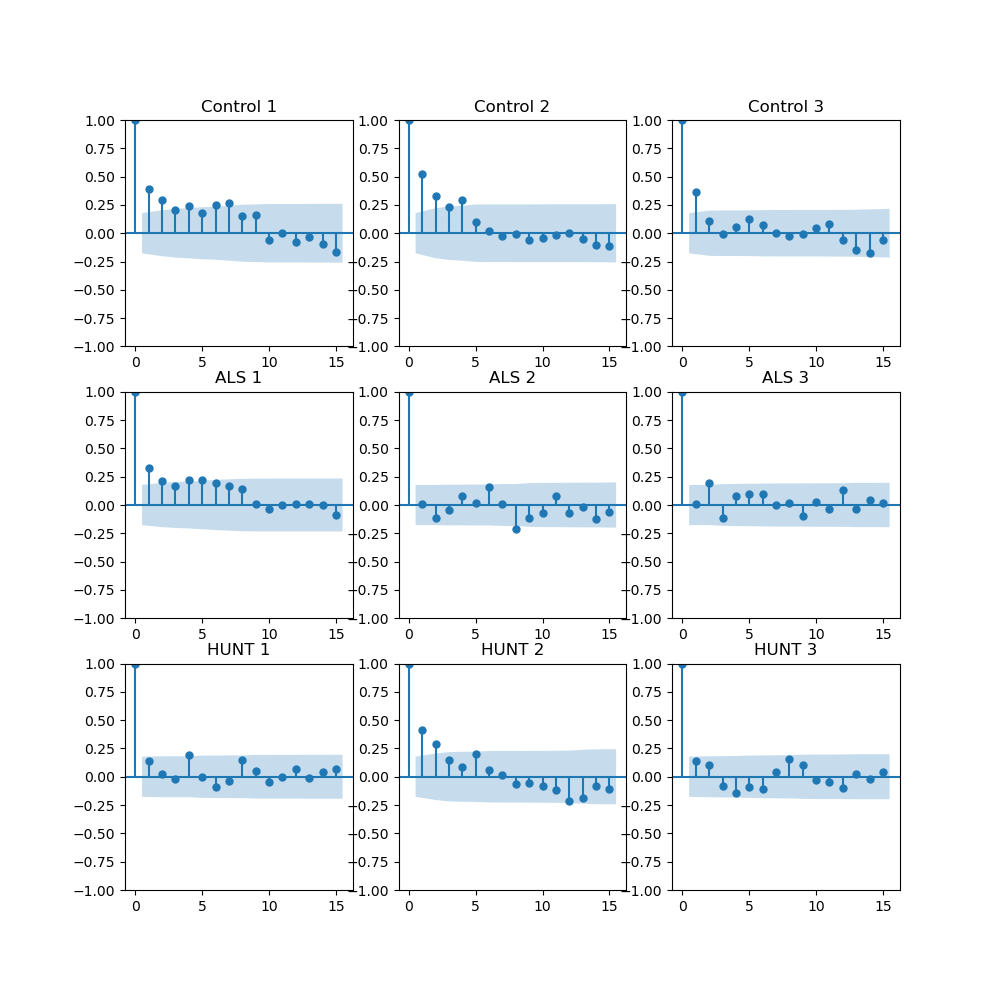}
            \caption{ACF of \textbf{modified} detrended stride interval time series from 9 participants in neurodegenerative disease study with 3 healthy controls, 3 ALS and 3 Huntington's}\label{fig:11}
        \end{figure}
        
        Comparing Figure \ref{fig:9} with \ref{fig:12}, the modified data still have abrupt observations after filtering, as can be seen in Figure \ref{fig:10}. Furthermore, healthy control individual 2, ALS 1, and ALS 3 exhibit values so extreme that they significantly affect the visualization of the data structure. Additionally, as observed in Figures \ref{fig:11}, the autocorrelation structure of the same time series differs. In other words, the mechanical removal of extreme observations affects the autocorrelation structures, which can lead to spurious conclusions when using the periodogram based on such information.
        
        \begin{figure}[H]
        \centering
        \includegraphics[width=\textwidth]{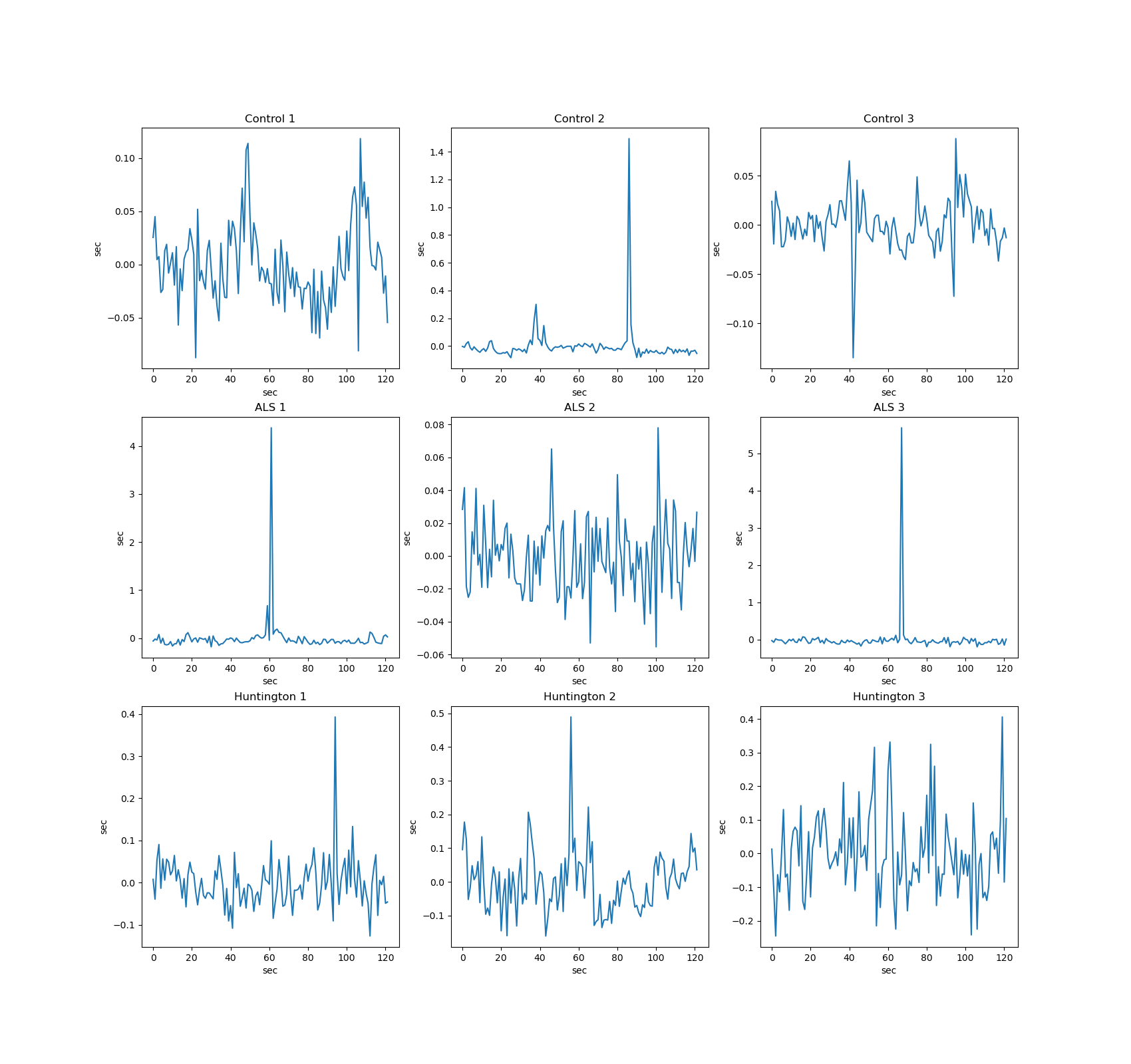}
            \caption{\textbf{Non modified} detrended stride interval time series from 9 participants in neurodegenerative disease study with 3 healthy controls, 3 ALS and 3 Huntington's}\label{fig:12}
        \end{figure}
        
        \begin{figure}[H]
        \centering
            \includegraphics[width=\textwidth]{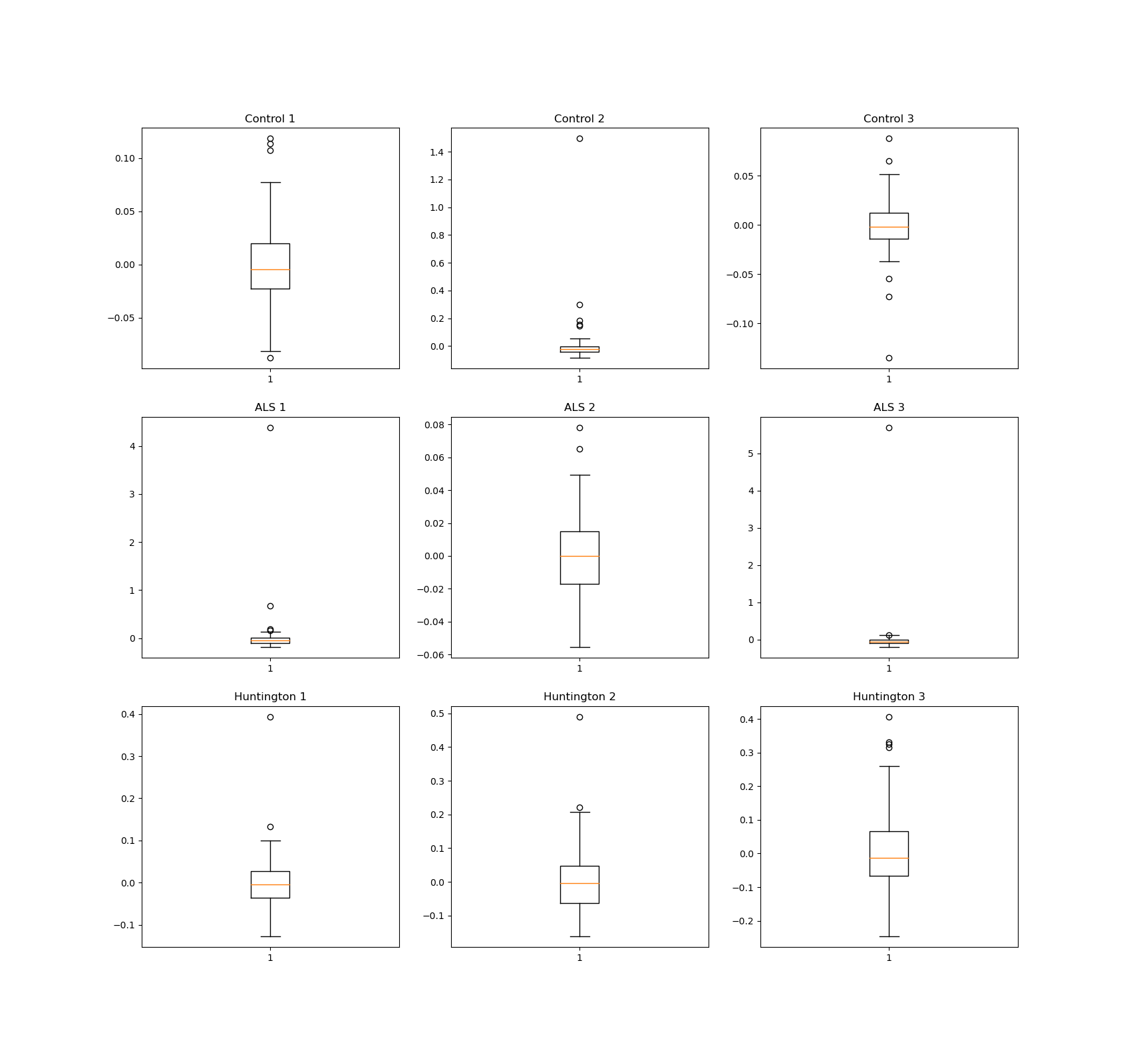}
            \caption{Box plot of \textbf{non modified} detrended stride interval time series from 9 participants in neurodegenerative disease study with 3 healthy controls, 3 ALS and 3 Huntington's}
            \label{fig:13}
        \end{figure}
        
        \begin{figure}[H]
        \centering
        \includegraphics[width=\textwidth]{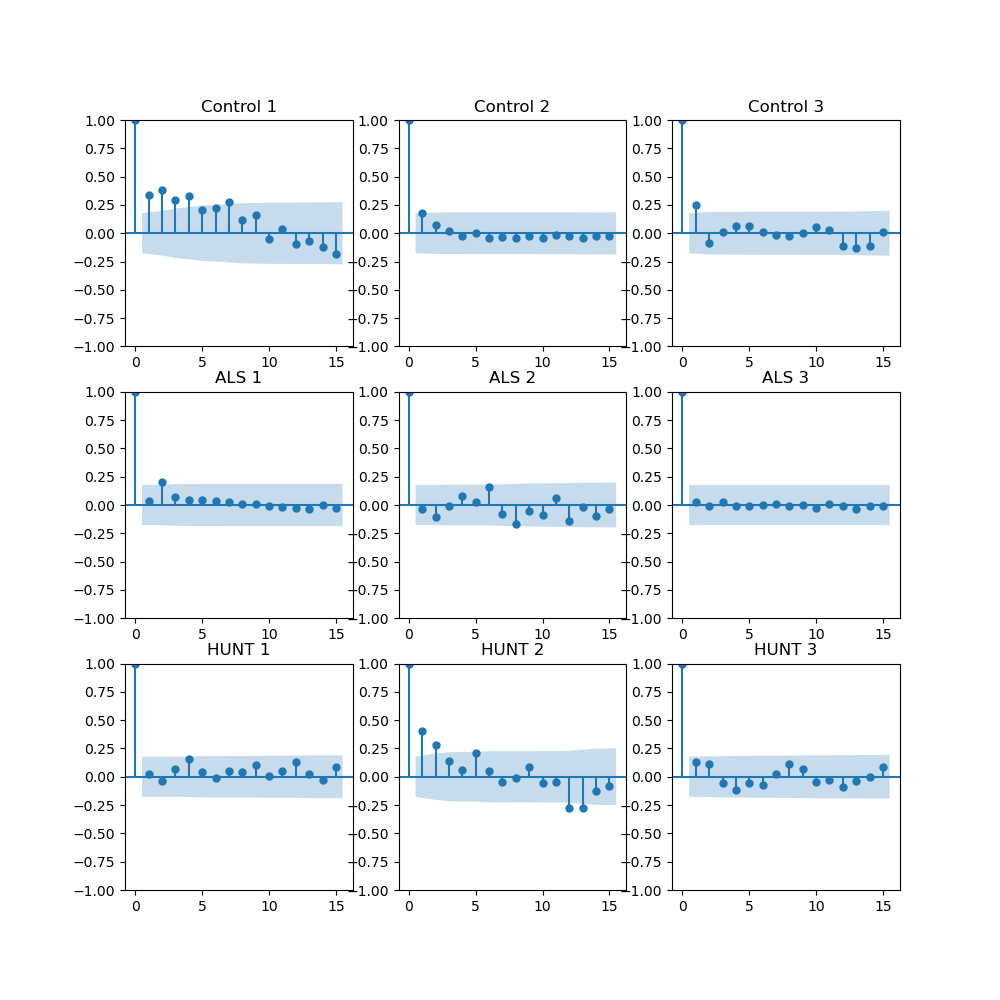}
            \caption{ACF of \textbf{non modified} detrended stride interval time series from 9 participants in neurodegenerative disease study with 3 healthy controls, 3 ALS and 3 Huntington's}\label{fig:14}
        \end{figure}
        
        This work focuses on the best classifier for linear discriminant analysis using multitaper periodogram estimation for classic analysis and modified data, and multitaper M-periodogram for robust analysis and non-modified data. As a result, the parameters of the number of multitapers ($R$) and the number of cepstrals ($L$) need to be chosen. Since the multitaper impacts only asymptotic properties, and for comparison, it will be chosen $R=7$, the same used by \cite{krafty16}.
        
        The number of cepstrals needs to be chosen in a parsimonious way, opting for the smallest possible number that still generates the highest classification rate. Table \ref{tab:5.2} contains the classification rate for the first 20 cepstrals. The maximum classification rate is achieved when $L=9$.
        
        \begin{table}[H]
            \centering
            \caption{Number of Cepstrals by classification rate}
            \label{tab:5.2}
            \begin{tabular}{c|c} 
            \hline
            \textbf{L} & \textbf{Classification rate (\%)}  \\ 
            \hline
            3          & 69.39                                \\
            4          & 69.39                                \\
            5          & 73.47                                \\
            6          & 75.51                                \\
            7          & 81.63                                \\
            8          & 81.63                                \\
            \textcolor{red}{9} & \textcolor{red}{83.67}                                \\
            10         & 75.51                                \\
            11         & 71.43                                \\
            12         & 73.47                                \\
            13         & 73.47                                \\
            14         & 75.51                                \\
            15         & 73.47                                \\
            16         & 75.51                                \\
            17         & 75.51                                \\
            18         & 79.59                                \\
            19         & 79.59                                \\
            20         & 81.63                                \\
            \hline\hline
            \end{tabular}
        \end{table}
        
        In Figure \ref{fig:15}, the horizontal axis presents the number of cepstrals, and the vertical axis shows the classification rate presented in the Table \ref{tab:5.2}. Clearly, the classification rate grows until it reaches its maximum value when $L=9$ with 83.67\% of success.
        
        \begin{figure}[H]
        \centering
        \includegraphics[width=\textwidth]{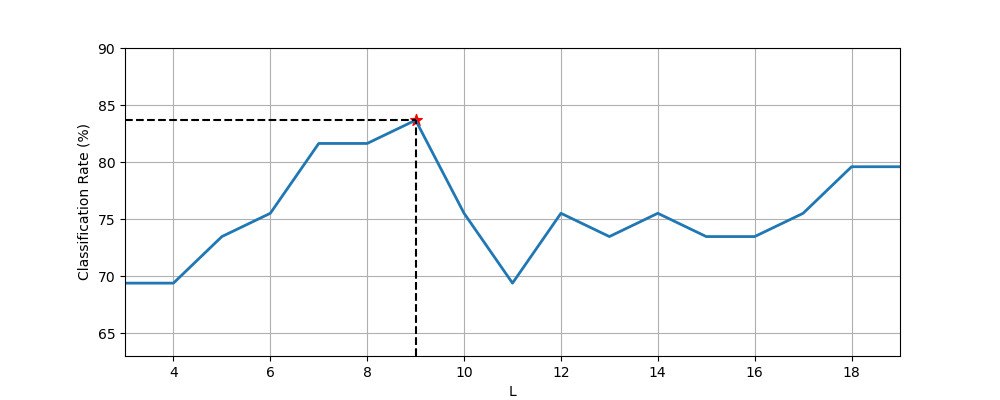}
            \caption{Number of Cepstrals by classification rate}\label{fig:15}
        \end{figure}
        
        Cepstrals are defined as the result of taking the inverse Fourier transform of the logarithm of the spectrum.   . Consequently, they carry all the information regarding not only the variability of the time series but also the variability of the spectrum, enabling a more profound and parsimonious analysis of discrimination and facilitating their use for classifying new time series.
        
        Furthermore, \cite{krafty16} states that when using multitaper periodogram estimates, most of the information is contained in the first cepstrals. According to the author, no more than $L=4$ cepstrals are necessary for this purpose. However, to achieve the classification rates shown by the author, many transformations were necessary in the original database, particularly the exclusion of 4 series from sick individuals, which contain important information for the analysis. Considering the database used in this work without excluding any sick individuals, the average classification rate across all groups is indeed equal to 69.39\%. It is worth noting that \cite{krafty16} did not report this general average rate.
        
        The classification rate can be decomposed by population based on conditional probabilities and use Bayes' formula to compute the probabilities as follows:
        
        \begin{equation}\tag{5.1}\label{eq:eq5.1}
            \rho_{ij} =  \mathbb{P}(\hat{c}_{i\ell} \in \Pi = i | \hat{c}_{j\ell} \in \Pi = j) = \frac{\mathbb{P}(\hat{c}_{j\ell} \in \Pi = i \cap \hat{c}_{j\ell} \in \Pi = j)}{\mathbb{P}(\hat{c}_{j\ell} \in \Pi = j)},
        \end{equation}
        
        \noindent where $\sum_{i=1}^{J}\rho_{ij} = 1$.
        Once in this application case $J=3$, then the $(3 \times 3)$ normalized confusion matrix can be summarized in Table \ref{tab:5.2}.
        
        \begin{table}[H]
        \centering
        \caption{Normalized Confusion Matrix for $J=3$}
        \label{tab:4}
            \begin{adjustbox}{totalheight={3.9cm}}
                \begin{tabular}{c|ccc}
                    \hline
                    \diagbox{\textbf{Predicted population}}{\textbf{Population}} & \textbf{1}        & \textbf{2}       & \textbf{3}        \\
                    \hline
                    \textbf{1}                                                & \textbf{$\rho_{11}$} & $\rho_{21}$          & $\rho_{31}$           \\
                    \textbf{2}                                                & $\rho_{12}$            & $\rho_{22}$ & $\rho_{23}$          \\
                    \textbf{3}                                                & $\rho_{31}$            & $\rho_{32}$          & $\rho_{33}$  \\
                    \hline
                    total & 1 & 1 & 1 \\
                \hline\hline
                \end{tabular}
            \end{adjustbox}
        \end{table}
        
        Note that in Table \ref{tab:4}, the main diagonal represents the correct classification rates per population, while off-diagonal entries represent the misclassification rates. The sum of each column is equal to 1, and each entry can be interpreted as the probability of being classified into each of the populations (rows) given that they belong to the known population (columns).
        
        \begin{table}[H]
            \centering
            \caption{Confusion Matrix for 49 \textbf{changed} time series of neurodegenerative disease in \%}
            \label{tab:5}
            \begin{adjustbox}{totalheight={2.6cm}}
                \begin{tabular}{c|ccc}
                \hline
                \diagbox{\textbf{Predicted population}}{\textbf{Population}} & \textbf{1}        & \textbf{2}       & \textbf{3}        \\
                \hline
                \textbf{1}                                                & \textbf{100.00\%} & 23.08\%          & 30.00\%           \\
                \textbf{2}                                                & 0.00\%            & \textbf{38.46\%} & 5.00\%            \\
                \textbf{3}                                                & 0.00\%            & 38.46\%          & \textbf{65.00\%}  \\
                \hline\hline
                \end{tabular}
            \end{adjustbox}
            \begin{flushleft}
            \small{${*}$ 1: Healthy control; 2: ALS; 3: Huntington's}
            \end{flushleft}
        \end{table}
        
        It is evident that the accuracy rate of healthy individuals reaches 100\% accuracy. However, among individuals with Sclerosis, only 30.77\% were correctly classified, with the majority being confused with individuals with Huntington's disease, resulting in a classification error rate of 38.46\%. Regarding individuals with Huntington's disease, 65\% of the 20 individuals were correctly classified, with 30\% incorrectly classified as healthy and 5\% wrongly classified as having Sclerosis.
        
        Consider now the data non-modified in the sense that the extreme values are not removed. The Assumption of normality serves as a foundation in the context of linear discriminant analysis. Cepstrals are, by definition, the Fourier transform of the logarithm of the spectrum. Additionally, \cite{priestley81} demonstrates that the periodogram follows an asymptotic chi-square distribution with two degrees of freedom. The logarithm serves as a monotone transformation that linearizes the quadratic term, and the inverse Fourier transform simply shifts the coordinates from frequency domain back to the time domain. Consequently, such variables are expected to exhibit a normal asymptotic distribution. To assess the normality of the Cepstrals for application purposes, a Shapiro test was conducted in estimated cepstrals generating a p-value of 56.18\%. This means that the normality distribution is not rejected.
        
        The confusion matrix is constructed for this data using the M-periodogram as spectral estimation and the results are presented in Table \ref{tab:6}. Exactly as the classical periodogram for modified data, the healthy control. 
        
        \begin{table}[H]
            \centering
            \caption{Confusion Matrix for 49 \textbf{non-modified} time series of neurodegenerative disease}
            \label{tab:6}
            \begin{adjustbox}{totalheight={2.7cm}}
                \begin{tabular}{c|ccc}
                \hline
                \diagbox{\textbf{Predicted population}}{\textbf{Population}} & \textbf{1}       & \textbf{2}       & \textbf{3}        \\
                \hline
                \textbf{1}                                                & \textbf{100\%} & 15.38\%          & 15.00\%               \\
                \textbf{2}                                                & 0.00\%           & \textbf{76.92\%} & 10.00\%           \\
                \textbf{3}                                                & 0.00\%           & 7.69\%           & \textbf{75.00\%}  \\
                \hline\hline
                \end{tabular}
            \end{adjustbox}
            \begin{flushleft}
            \small{${*}$ 1: control; 2: ELA; 3: Huntington's}
            \end{flushleft}
        \end{table}
        
        Figure \ref{fig:9} displays the estimated discriminant coefficients obtained through the robust proposed approach and the classical proposed by Krafty et al. (2016). 
        
        \begin{figure}[H]
        \centering
        \includegraphics[width=\textwidth]{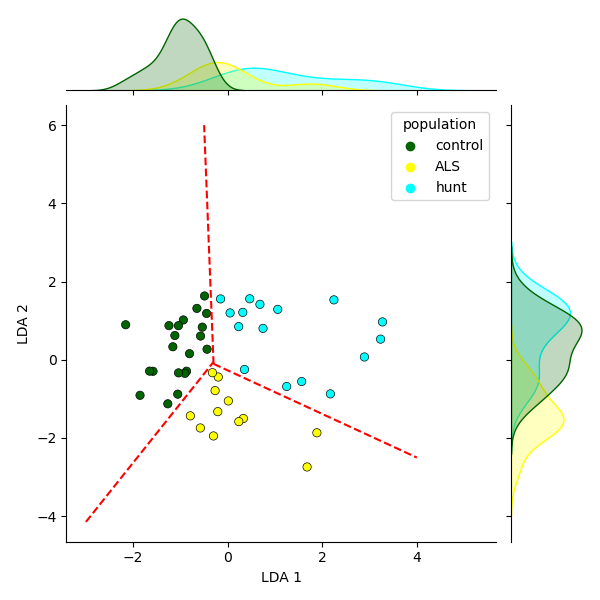}
            \caption{Linear discriminant analysis using two discriminants for \textbf{non-modified} data.}\label{fig:16}
        \end{figure}
        
        \begin{figure}[H]
        \centering
        \includegraphics[width=\textwidth]{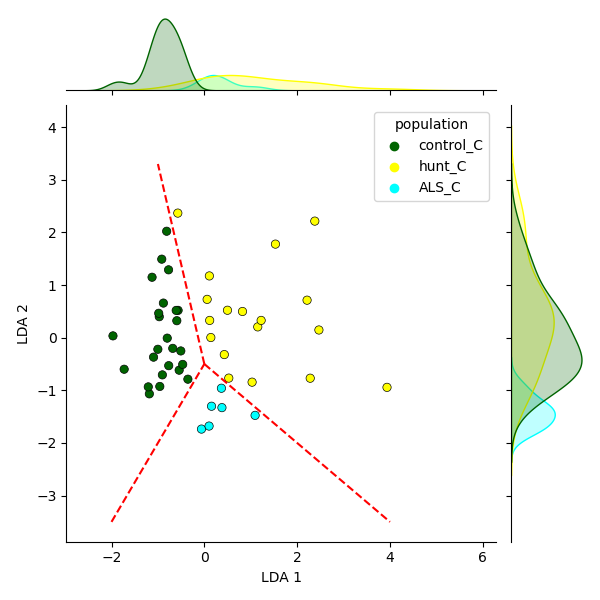}
            \caption{Linear discriminant analysis using two discriminants for \textbf{modified} data.}\label{fig:17}
        \end{figure}
        
        Specifically, coefficients estimated from the modified dataset are represented in black, while those derived from the original dataset are shown in red. Furthermore, the coefficients derived from our proposed robust method using the original dataset are presented in blue. Notably, there are distinct differences between the black and red coefficients.
        In the scenario of the adjusted dataset, the black discriminant primarily distinguishes individuals with Huntington's disease from the other two groups. The second discriminant primarily separates healthy controls from ALS patients. Conversely, in the original dataset scenario, the first discriminant predominantly separates control participants from the other two groups. In contrast, the second discriminant primarily distinguishes individuals with Huntington's disease from those with ALS.
        However, under the presence of atypical observations in the dataset, the discriminants represented in blue demonstrate better separability performance than the red coefficients. Our proposed robust method's empirical mean classification rate for the original data is approximately $75\%$, which closely aligns with the classical approach calculated for the modified dataset ($76\%$).

    \newpage
    
    %%%%%%%%%%%%%%%%%%%%%%%%%%%%%%%%%%%%%%%%%%%%%%%%%%%%
    \section{Conclusion}
    %%%%%%%%%%%%%%%%%%%%%%%%%%%%%%%%%%%%%%%%%%%%%%%%%%%%

        Time series in the time domain have proven to be ineffective for classifying data when these processes have similar systems or very close variability. Given this context, the frequency domain was chosen as the source of information to separate the series instead of the time domain. In the frequency domain, it is possible to discriminate and separate the series by frequency rather than the origin of the entire series. Additionally, noise information and the process autocorrelation structure are mixed in the spectrum due to the convolution used to transform the data from the time domain to the frequency domain. Considering the constant contribution of white noise to the total variability of the process, the most important factor for distinguishing series among populations is the autocorrelation structure. Furthermore, the spectrum of a time series always has an internal product characteristic between the autocorrelations and the noise caused by the transformation. Consequently, they can be separated using a logarithmic function, preserving all information as it is a monotonic transformation of the spectrum. Through the inverse Fourier transform, or deconvolution, this information can return to the time domain with the noise and autocorrelation information separated and summarized, referred to as cepstral coefficients. These coefficients can then be used to distinguish between processes from different populations.
    
        Additionally, this study aims to implement a model considering the spectral variability of each replica (time series) within each population. Since the processes of each population can be very similar, time series from different populations can mix, making discrimination and classification even more difficult. If the series are contaminated by an extreme value that behaves as an additive outlier, the entire correlation structure of the process is lost, along with its predictive capacity for discrimination and classification among populations.
        
        To address this challenge, this study proposed a new model that considers the spectral variability of the replicas within each population and obtains robust cepstral estimates using the M-periodogram estimator. This approach allows the model to statistically accommodate contamination by abrupt values. In this study, the robust cepstrum is referred to as M-cepstrum.
        
        In the context of asymptotic analysis, the asymptotic results of the M-periodogram can be used to show that the cepstral coefficients also tend to converge as the number of replicas ($n_j$), the length of the series ($N$), and the number of cepstral coefficients ($L$) increase. Additionally, it was shown that the cepstral coefficients can be truncated to a fixed value for the M-cepstrum, similar to what was shown by \cite{krafty16} for the cepstrum based on the classical periodogram. Empirically, the mean squared errors of both estimators tend to converge to the true value of the cepstrum as the sample size increases $N$, confirming the results demonstrated conduct the same analysis increasing $R$, $n_j$, and $L$.
        
        Once the M-periodogram estimates are obtained, it was demonstrated that M-cepstra can be used for discrimination and classification using linear discriminant analysis based on \cite{fisher36}'s method. It was observed that, through this method, a significantly smaller dimension can be used for this purpose, as the echoes containing most of the time series information are the closest in terms of autocorrelation. This information is contained in the first M-cepstral coefficients.
        
        To validate the effectiveness of the model, various types of simulations were conducted, with a focus on Monte Carlo simulations to analyze the asymptotic properties of classification rates. These simulations demonstrated the effect of extreme information loss in the classifier using the classical periodogram when the series is contaminated by an abrupt value. The M-cepstrum proved to be more effective in discriminating and classifying the time series in this context, especially for larger series sizes $N$, more replicas $n_j$, and a wide range of white noise variance possibilities. Although the robust model proposed in this study has lower classification rates in the absence of extreme values, it performs better with the presence of such values, which behave as additive outliers, providing important information for time series discrimination and classification.
        
        A practical application was conducted to evaluate the proposed M-cepstrum using gait cycle data from healthy individuals, those with amyotrophic lateral sclerosis, and those with Huntington's disease. It was shown that, in the same context used by \cite{krafty16}, fewer transformations were necessary, resulting in more satisfactory outcomes for determining whether an individual has a disease and, if so, identifying the type of disease.
        
        For future work, there are several perspectives for theoretical, applied, and computational advancements. In the theoretical and applied contexts, as proposed by \cite{krafty16}, data from the right foot sensor of individuals are used for analysis. Additionally, information from heart rate and electroencephalnrams could also be used for the same time units as a source of information. However, the database contains several other sources of information, such as data from the left foot. In this context, more than one piece of information can be used for each time unit, resulting in a multivariate time series in the context of unit cuts. In other words, more than one measure per unit of time. Instead of using only one piece of information per individual at each time, multiple pieces of information can be used in the analysis. The model should include not only the spectra of the left and right foot series but also the cross-spectrum of these two series. Consequently, a multivariate model would emerge for the cepstra, with information not only from the cepstral coefficients but also a matrix of coefficients containing both these coefficients and a cross-cepstrum coefficient. In this context, the M-cepstrum can be developed into a multivariate vector. This could result in higher classification rates for the identification and monitoring of neurodegenerative diseases.
        
        In terms of computational advancements, it is noteworthy that the M-periodogram is currently calculated using a robust regression method. In this case, a matrix of the type $X'X$ must be calculated with an internal product of order $N^2$. However, it is already known that, in pairs, the vectors of the $X$ matrix are orthogonal due to the orthogonal complement characteristics of Fourier series. Consequently, several of these products are being calculated unnecessarily, making the estimations and simulations extremely slow. This issue, besides hindering Monte Carlo simulations, limited the progress of the Shiny app developed in this study and, consequently, the practical use of the technique suggested. In the classical model estimation process, this problem is mitigated by using the \textit{Fast Fourier Transform} (FFT) algorithm, which, under certain conditions regarding the series length, yields the same result. In this study, the FFT can be used if the regression residual does not exceed the Huber constant in absolute value. This change could make the model significantly faster and more efficient, and consequently more useful for the practical use of time series.

%%%%%%%%%%%%%%%%%%%%%%%%%%%%%%
\bibliographystyle{apalike}

%\bibliography{bibliografia}
%%%%%%%%%%%%%%%%%%%%%%%%%%%%%%%%%%%%%%%%%%%%%%%%%%%%%%%%%%%%%%%%
\section{Appendix: PROOFS}
%%%%%%%%%%%%%%%%%%%%%%%%%%%%%%%%%%%%%%%%%%%%%%%%%%%%%%%%%%%%%%%%

%%%%%%%%%%%%%%%%%%%%%%%%%%%%%%%%%%%%%%%%%%%%%%%%%%%

    \subsection{Proof of Proposition 1:}

        \begin{proof}
            
            \noindent Let $\raisebox{.5pt}{\textcircled{\raisebox{-.9pt} {A}}} = ln\left\{1+\eta_{i,jk}^2+2\eta_{i,jk}cos(\lambda)\right\}$ and $\raisebox{.5pt}{\textcircled{\raisebox{-.9pt} {B}}} = ln\left\{1+\zeta_{r,jk}^2+2\zeta_{r,jk}cos(\lambda)\right\}$ for all $i=1,...,q$ and $r=1,...,p$. Then:
                
            \vspace{.5cm}
            \noindent Applying Euler's Formula to cosine term in $\raisebox{.5pt}{\textcircled{\raisebox{-.9pt} {A}}}$:
            \vspace{.5cm}
            
            \noindent $\raisebox{.5pt}{\textcircled{\raisebox{-.9pt} {A}}} = ln\left\{ 1+\eta_{i,jk}^2+\Cancel[red]{2}\eta_{i,jk}\left[\frac{e^{i\lambda}+e^{-i\lambda}}{\Cancel[red]{2}}\right] \right\} = ln\left\{ 1+\eta_{jk}^2+\eta_{i,jk}e^{i\lambda}+\eta_{i,jk}e^{-i\lambda} \right\}$
    
            \vspace{.3cm}
            \noindent $\raisebox{.5pt}{\textcircled{\raisebox{-.9pt} {A}}}=ln\left\{\left( 1+\eta_{i,jk}e^{i\lambda} \right) \left( 1+\eta_{i,jk}e^{-i\lambda} \right)  \right\} = \left\{ln\left( 1+\eta_{i,jk}e^{i\lambda} \right) + ln\left( 1+\eta_{i,jk}e^{-i\lambda} \right)\right\}$
    
            \vspace{.3cm}
            \noindent $\raisebox{.5pt}{\textcircled{\raisebox{-.9pt} {A}}}=\raisebox{.5pt}{\textcircled{\raisebox{-.9pt} {C}}}+\raisebox{.5pt}{\textcircled{\raisebox{-.9pt} {D}}}$
        
            \vspace{.5cm}
            \noindent Using Taylor expansion for $ln(1+x) = x-\frac{x^2}{2}+\frac{x^3}{3}-\frac{x^4}{4}+ \cdots$
            \vspace{.5cm}
        
            $\raisebox{.5pt}{\textcircled{\raisebox{-.9pt} {C}}}= ln\left( 1+\eta_{i,jk}e^{i\lambda} \right) = (\eta_{i,jk}e^{i\lambda}) - \frac{(\eta_{i,jk}e^{i\lambda})^2}{2} + \frac{(\eta_{i,jk}e^{i\lambda})^3}{3} - \frac{(\eta_{i,jk}e^{i\lambda})^4}{4}+ \cdots$

            \vspace{.5cm}
            
            $=\sum_{\ell=1}^{\infty} \frac{(-1)^{\ell+1}\eta_{i,jk}^{\ell}e^{i\lambda \ell}}{\ell}$
    
            \vspace{.3cm}
            $\raisebox{.5pt}{\textcircled{\raisebox{-.9pt} {D}}}= ln\left( 1+\eta_{i,jk}e^{-i\lambda} \right) =(\eta_{i,jk}e^{-i\lambda}) - \frac{(\eta_{i,jk}e^{-i\lambda})^2}{2} + \frac{(\eta_{i,jk}e^{-i\lambda})^3}{3} - \frac{(\eta_{i,jk}e^{-i\lambda})^4}{4}+ \cdots$ 

            \vspace{.5cm}
            
            $=\sum_{\ell=1}^{\infty} \frac{(-1)^{\ell+1}\eta_{i,jk}^{\ell}e^{-i\lambda \ell}}{\ell}$
        
            \vspace{.5cm}
            \noindent Therefore, $\raisebox{.5pt}{\textcircled{\raisebox{-.9pt} {A}}}$ results in:
            \vspace{.5cm}
        
            \noindent $\raisebox{.5pt}{\textcircled{\raisebox{-.9pt} {A}}}=\sum_{\ell=1}^{\infty} \frac{(-1)^{\ell+1}\eta_{i,jk}^{\ell}e^{i\lambda \ell}}{\ell}+\sum_{\ell=1}^{\infty} \frac{(-1)^{\ell+1}\eta_{i,jk}^{\ell}e^{-i\lambda \ell}}{\ell}$
    
            \vspace{.3cm}
            \noindent $\raisebox{.5pt}{\textcircled{\raisebox{-.9pt} {A}}}=\sum_{\ell=1}^{\infty} \frac{(-1)^{\ell+1}\eta_{i,jk}^{\ell}\left[ e^{i\lambda\ell}+e^{-i\lambda\ell} \right]}{\ell}$
    
            \vspace{.3cm}
            \noindent $\raisebox{.5pt}{\textcircled{\raisebox{-.9pt} {A}}}=2\sum_{\ell=1}^{\infty} \frac{(-1)^{\ell+1}\eta_{i,jk}^{\ell}}{\ell}cos(\lambda \ell)$
    
            \vspace{.5cm}

            Following the same logic, it is possible to show that 

            \vspace{.5cm}
            \noindent $\raisebox{.5pt}{\textcircled{\raisebox{-.9pt} {B}}}=2\sum_{\ell=1}^{\infty} \frac{(-1)^{\ell+1}\zeta_{r,jk}^{\ell}}{\ell}cos(\lambda \ell)$

            \vspace{.5cm}

            Summing $\raisebox{.5pt}{\textcircled{\raisebox{-.9pt} {A}}}$ and $\raisebox{.5pt}{\textcircled{\raisebox{-.9pt} {B}}}$ we have

            \vspace{.5cm}

            $\raisebox{.5pt}{\textcircled{\raisebox{-.9pt} {A}}}+ \raisebox{.5pt}{\textcircled{\raisebox{-.9pt} {B}}} = 2\sum_{\ell=1}^{\infty} \frac{(-1)^{\ell+1}\eta_{i,jk}^{\ell}}{\ell}cos(\lambda \ell)+  2\sum_{\ell=1}^{\infty} \frac{(-1)^{\ell+1}\zeta_{r,jk}^{\ell}}{\ell}cos(\lambda \ell)$

            \vspace{.5cm}

            $=2\sum_{\ell=1}^{\infty} \frac{1}{\ell}\left((-1)^{\ell+1}\eta_{i,jk}^{\ell} + (-1)^{\ell+1}\zeta_{r,jk}^{\ell}\right)cos(\lambda \ell)$

            \vspace{.5cm}

            The first part \textit{i)} is proven. And the terms of the series in the last sum are $c_{\ell}$ for each fixed $\ell$ as described in second part of the corollary \textit{ii)}.
            
        \end{proof}

    \subsection{Proof of Corollary 1:} 

        \begin{proof}
            
            \vspace{.5cm}
                
            \noindent $\raisebox{.5pt}{\textcircled{\raisebox{-.9pt} {A}}} = ln\left\{1+\theta_{jk}^2+2\theta_{jk}cos(\lambda)\right\}$
            
            \vspace{.5cm}
            \noindent Applying Euler's Formula to cosine term:
            \vspace{.5cm}
            
            \noindent $\raisebox{.5pt}{\textcircled{\raisebox{-.9pt} {A}}} = ln\left\{ 1+\theta_{jk}^2+\Cancel[red]{2}\theta_{jk}\left[\frac{e^{i\lambda}+e^{-i\lambda}}{\Cancel[red]{2}}\right] \right\} = ln\left\{ 1+\theta_{jk}^2+\theta_{jk}e^{i\lambda}+\theta_{jk}e^{-i\lambda} \right\}$
    
            \vspace{.3cm}
            \noindent $\raisebox{.5pt}{\textcircled{\raisebox{-.9pt} {A}}}=ln\left\{\left( 1+\theta_{jk}e^{i\lambda} \right) \left( 1+\theta_{jk}e^{-i\lambda} \right)  \right\} = \left\{ln\left( 1+\theta_{jk}e^{i\lambda} \right) + ln\left( 1+\theta_{jk}e^{-i\lambda} \right)\right\}$
    
            \vspace{.3cm}
            \noindent $\raisebox{.5pt}{\textcircled{\raisebox{-.9pt} {A}}}=\raisebox{.5pt}{\textcircled{\raisebox{-.9pt} {B}}}+\raisebox{.5pt}{\textcircled{\raisebox{-.9pt} {C}}}$
        
            \vspace{.5cm}
            \noindent Using Taylor expansion for $ln(1+x) = x-\frac{x^2}{2}+\frac{x^3}{3}-\frac{x^4}{4}+ \cdots$
            \vspace{.5cm}
        
            $\raisebox{.5pt}{\textcircled{\raisebox{-.9pt} {B}}}= ln\left( 1+\theta_{jk}e^{i\lambda} \right) = (\theta_{jk}e^{i\lambda}) - \frac{(\theta_{jk}e^{i\lambda})^2}{2} + \frac{(\theta_{jk}e^{i\lambda})^3}{3} - \frac{(\theta_{jk}e^{i\lambda})^4}{4}+ \cdots =\sum_{\ell=1}^{\infty} \frac{(-1)^{\ell+1}\theta_{jk}^{\ell}e^{i\lambda \ell}}{\ell}$
    
            \vspace{.3cm}
            $\raisebox{.5pt}{\textcircled{\raisebox{-.9pt} {C}}}= ln\left( 1+\theta_{jk}e^{-i\lambda} \right) =(\theta_{jk}e^{-i\lambda}) - \frac{(\theta_{jk}e^{-i\lambda})^2}{2}$
            
            $ + \frac{(\theta_{jk}e^{-i\lambda})^3}{3} - \frac{(\theta_{jk}e^{-i\lambda})^4}{4}+ \cdots =\sum_{\ell=1}^{\infty} \frac{(-1)^{\ell+1}\theta_{jk}^{\ell}e^{-i\lambda \ell}}{\ell}$
        
            \vspace{.5cm}
            \noindent Therefore, $\raisebox{.5pt}{\textcircled{\raisebox{-.9pt} {A}}}$ results in:
            \vspace{.5cm}
        
            \noindent $\raisebox{.5pt}{\textcircled{\raisebox{-.9pt} {A}}}=\sum_{\ell=1}^{\infty} \frac{(-1)^{\ell+1}\theta_{jk}^{\ell}e^{i\lambda \ell}}{\ell}+\sum_{\ell=1}^{\infty} \frac{(-1)^{\ell+1}\theta_{jk}^{\ell}e^{-i\lambda \ell}}{\ell}$
    
            \vspace{.5cm}
            \noindent $\raisebox{.5pt}{\textcircled{\raisebox{-.9pt} {A}}}=\sum_{\ell=1}^{\infty} \frac{(-1)^{\ell+1}\theta_{jk}^{\ell}\left[ e^{i\lambda\ell}+e^{-i\lambda\ell} \right]}{\ell}$
    
            \vspace{.5cm}
            \noindent $\raisebox{.5pt}{\textcircled{\raisebox{-.9pt} {A}}}=2\sum_{\ell=1}^{\infty} \frac{(-1)^{\ell+1}\theta_{jk}^{\ell}}{\ell}cos(\lambda \ell)$
    
            \vspace{.5cm}

            \noindent $\raisebox{.5pt}{\textcircled{\raisebox{-.9pt} {A}}}=2\left(\sum_{\ell=1}^{\infty} \frac{1}{\ell} (-1)^{\ell+1}\theta_{jk}^{\ell}cos(\lambda \ell)\right)d$
    
            \vspace{.5cm}

            The first part \textit{i)} is proven. And the terms of the series in the last sum are $c_{\ell}$ for each fixed $\ell$ as described in second part of the corollary \textit{ii)}.
        \end{proof}
        
        %%%%%%%%%%%%%%%%%%%%%%%%%%%%%%%%%%%%%%%%%%%%%%%%%%%
        \subsection{Proof of Corollary 2:} 
            
            \begin{proof}
            \vspace{.5cm}
            \noindent $\raisebox{.5pt}{\textcircled{\raisebox{-.9pt} {A}}} = 
                ln\left\{\frac{1}{1+\phi_{jk}^2-2\phi_{jk}cos(\lambda)}\right\}$
    
            \vspace{.3cm}
            \noindent $\raisebox{.5pt}{\textcircled{\raisebox{-.9pt} {A}}}= ln\left\{1+\phi_{jk}^2-2\phi_{jk}cos(\lambda)\right\}^{-1} = -ln\left\{1+\phi_{jk}^2-2\phi_{jk}cos(\lambda)\right\}$
            
            \vspace{.5cm}
            \noindent Applying Euler's Formula to cosine term:
            \vspace{.5cm}
    
            \noindent $\raisebox{.5pt}{\textcircled{\raisebox{-.9pt} {A}}} = -ln\left\{ 1+\phi_{jk}^2-\Cancel[red]{2}\phi_{jk}\left[\frac{e^{i\lambda}+e^{-i\lambda}}{\Cancel[red]{2}}\right] \right\} = -ln\left\{ 1+\phi_{jk}^2-\phi_{jk}e^{i\lambda}+\phi_{jk}e^{-i\lambda} \right\}$
    
            \vspace{.3cm}
            \noindent $\raisebox{.5pt}{\textcircled{\raisebox{-.9pt} {A}}}=-ln\left\{\left( 1-\phi_{jk}e^{i\lambda} \right) \left( 1-\phi_{jk}e^{-i\lambda} \right)  \right\} = - \left\{ln\left( 1-\phi_{jk}e^{i\lambda} \right) + ln\left( 1-\phi_{jk}e^{-i\lambda} \right)\right\}$
    
            \vspace{.3cm}
            \noindent $\raisebox{.5pt}{\textcircled{\raisebox{-.9pt} {A}}}=-\left[\raisebox{.5pt}{\textcircled{\raisebox{-.9pt} {B}}}+\raisebox{.5pt}{\textcircled{\raisebox{-.9pt} {C}}}\right]$
        
            \vspace{.5cm}
            \noindent Using Taylor expansion for $ln(1-x) = -x-\frac{x^2}{2}-\frac{x^3}{3}-\frac{x^4}{4}- \cdots$
            \vspace{.5cm}
        
            $\raisebox{.5pt}{\textcircled{\raisebox{-.9pt} {B}}}= ln\left( 1-\phi_{jk}e^{i\lambda} \right) = -(\phi_{jk}e^{i\lambda}) - \frac{(\phi_{jk}e^{i\lambda})^2}{2} - \frac{(\phi_{jk}e^{i\lambda})^3}{3} - \frac{(\phi_{jk}e^{i\lambda})^4}{4}- \cdots = -\sum_{\ell=1}^{\infty} \frac{\phi_{jk}^{\ell}e^{i\lambda \ell}}{\ell}$
    
            \vspace{.3cm}
            $\raisebox{.5pt}{\textcircled{\raisebox{-.9pt} {C}}}= ln\left( 1-\phi_{jk}e^{-i\lambda} \right) = -(\phi_{jk}e^{-i\lambda}) - $
            
            $ \frac{(\phi_{jk}e^{-i\lambda})^2}{2} - \frac{(\phi_{jk}e^{-i\lambda})^3}{3} - \frac{(\phi_{jk}e^{-i\lambda})^4}{4} \cdots = -\sum_{\ell=1}^{\infty} \frac{\phi_{jk}^{\ell}e^{-i\lambda \ell}}{\ell}$
        
            \vspace{.5cm}
            \noindent Therefore, $\raisebox{.5pt}{\textcircled{\raisebox{-.9pt} {A}}}$ results in:
            \vspace{.5cm}
        
            \noindent $\raisebox{.5pt}{\textcircled{\raisebox{-.9pt} {A}}}=-\left[-\sum_{\ell=1}^{\infty} \frac{\phi_{jk}^{\ell}e^{i\lambda \ell}}{\ell}-\sum_{\ell=1}^{\infty} \frac{\phi_{jk}^{\ell}e^{-i\lambda \ell}}{\ell} \right]$
    
            \vspace{.3cm}
            \noindent $\raisebox{.5pt}{\textcircled{\raisebox{-.9pt} {A}}}=\sum_{\ell=1}^{\infty} \frac{\phi_{jk}^{\ell}e^{i\lambda \ell}}{\ell}+\sum_{\ell=1}^{\infty} \frac{\phi_{jk}^{\ell}e^{-i\lambda \ell}}{\ell}$
    
            \vspace{.3cm}
            \noindent $\raisebox{.5pt}{\textcircled{\raisebox{-.9pt} {A}}}=\sum_{\ell=1}^{\infty} \frac{\phi_{jk}^{\ell}\left[ e^{i\lambda\ell}+e^{-i\lambda\ell} \right]}{\ell}$
    
            \vspace{.3cm}
            \noindent $\raisebox{.5pt}{\textcircled{\raisebox{-.9pt} {A}}}=2\sum_{\ell=1}^{\infty} \frac{\phi_{jk}^{\ell}}{\ell}cos(\lambda \ell)$
            
            \vspace{.5cm}

            \noindent $\raisebox{.5pt}{\textcircled{\raisebox{-.9pt} {A}}}=2\left(\sum_{\ell=1}^{\infty} \frac{1}{\ell}\phi_{jk}^{\ell}cos(\lambda \ell)\right)$
            
            \vspace{.5cm}
    
            The first part \textit{i)} is proven. And the terms of the series in the last sum are $c_{\ell}$ for each fixed $\ell$ as described in second part of the corollary \textit{ii)}.
        \end{proof}

    \subsection{Proof of Proposition 2:}

        \begin{proof} 
            Suppose that $\pmb{y} = B^{1/2}\pmb{x}$ a changing coordinates, which implies $\pmb{x} = B^{-1/2}\pmb{y}$. Also, suppose that $D_A$ and $D_B$ are diagonal matrices with of $A$ and $B$, with its respective eigenvalues in its principal diagonal. Additionally, suppose that $P_A = P_B$, where $P_A$ and $P_B$ are the respective eigenvector matrices, associated to respective eigenvalues. Then,

            \begin{equation*}
                \pmb{x_1} = \underset{\pmb{x} \in \mathbb{R}^{\mathbb{L}}}{\operatorname{argmax}}
                \left\{
                \frac{\pmb{x}^T A \pmb{x}}{\pmb{x}^T B \pmb{x}} 
                \right\}. 
            \end{equation*}

            \begin{equation*}
                = \underset{\pmb{y} \in \mathbb{R}^{\mathbb{L}}}{\operatorname{argmax}}
                \left\{
                \frac{(B^{-1/2}\pmb{y})^T A (B^{-1/2}\pmb{y})}{(B^{-1/2}\pmb{y})^T B (B^{-1/2}\pmb{y})} 
                \right\}. 
            \end{equation*}

            \begin{equation*}
                = \underset{\pmb{y} \in \mathbb{R}^{\mathbb{L}}}{\operatorname{argmax}}
                \left\{
                \frac{\pmb{y}^TB^{-1/2} A B^{-1/2}\pmb{y}}{\pmb{y}^TB^{-1/2} B B^{-1/2}\pmb{y}} 
                \right\}. 
            \end{equation*}

            \begin{equation*}
                = \underset{\pmb{y} \in \mathbb{R}^{\mathbb{L}}}{\operatorname{argmax}}
                \left\{
                \frac{\pmb{y}^TB^{-1/2} A B^{-1/2}\pmb{y}}{\pmb{y}^TB^{-1/2} B^{1/2}B^{1/2} B^{-1/2}\pmb{y}} 
                \right\}. 
            \end{equation*}

            \begin{equation*}
                = \underset{\pmb{y} \in \mathbb{R}^{\mathbb{L}}}{\operatorname{argmax}}
                \left\{
                \frac{\pmb{y}^TB^{-1/2} A B^{-1/2}\pmb{y}}{\pmb{y}^T\pmb{y}} 
                \right\}. 
            \end{equation*}

            \begin{equation*}
                = \underset{\pmb{y} \in \mathbb{R}^{\mathbb{L}}}{\operatorname{argmax}}
                \left\{\frac{\pmb{y}^T(P_BD_B^{-1/2}P^T_B) (P_AD_AP^T_A) (P_BD_B^{-1/2}P^T_B)\pmb{y}}{\pmb{y}^T\pmb{y}} 
                \right\}. 
            \end{equation*}

            \begin{equation*}
                = \underset{\pmb{y} \in \mathbb{R}^{\mathbb{L}}}{\operatorname{argmax}}
                \left\{\frac{\pmb{y}^TP_BD_B^{-1/2}D_AD_B^{-1/2}P^T_B\pmb{y}}{\pmb{y}^T\pmb{y}} 
                \right\}. 
            \end{equation*}

            \begin{equation*}
                \equiv \underset{\pmb{y} \in \mathbb{R}^{\mathbb{L}}}{\operatorname{argmax}}
                \left\{\frac{\pmb{y}^TP_BD_B^{-1}D_AP^T_B\pmb{y}}{\pmb{y}^T\pmb{y}} 
                \right\}. 
            \end{equation*}

            \begin{equation*}
                = \underset{\pmb{y} \in \mathbb{R}^{\mathbb{L}}}{\operatorname{argmax}}
                \left\{\frac{\pmb{y}^T(P_BD_B^{-1})(D_A)P^T_B\pmb{y}}{\pmb{y}^T\pmb{y}} 
                \right\}. 
            \end{equation*}

            \begin{equation*}
                = \underset{\pmb{y} \in \mathbb{R}^{\mathbb{L}}}{\operatorname{argmax}}
                \left\{\frac{\pmb{y}^T(B^{-1}P_B)(P_A^TAP_A)P^T_B\pmb{y}}{\pmb{y}^T\pmb{y}} 
                \right\}. 
            \end{equation*}

            \begin{equation*}
                = \underset{\pmb{y} \in \mathbb{R}^{\mathbb{L}}}{\operatorname{argmax}}
                \left\{\frac{\pmb{y}^TB^{-1}A\pmb{y}}{\pmb{y}^T\pmb{y}} 
                \right\} = \pmb{y_1}.
            \end{equation*}

    \end{proof}

    \begin{lemma}
        As a direct consequence of proposition 4, $E(I^{MR}_{jk}(\lambda_m)) \approx E(I^{R}_{jk}(\lambda_m)) = S_X(\lambda_m)$
    \end{lemma}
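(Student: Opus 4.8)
The plan is to obtain the expectation identity by lifting the in-probability equivalence supplied by Proposition~\ref{proposition 4} from the autocovariance level up to the periodogram level, and then combining it with the classical multitaper bias calculation recalled in Remark~\ref{remark 4}. First I would note that, for each fixed taper $r=1,\dots,R_N$, the tapered classical periodogram $I^{r}_{jk}(\lambda_m)$ of \ref{eq:eq3.11} is a quadratic form in the data and therefore, exactly as in the rewriting \ref{eq:eq3.9} of the ordinary periodogram, can be written as a finite trigonometric sum of the taper-weighted sample autocovariances; write $I^{r}_{jk}(\lambda_m)=g_{r,m}\!\left(\hat{\gamma}_{jk}(0),\hat{\gamma}_{jk}(\pm 1),\dots,\hat{\gamma}_{jk}(\pm(N-1))\right)$ for the corresponding continuous map $g_{r,m}$. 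The tapered $M$-periodogram $I^{rM}_{jk}(\lambda_m)$ of \ref{eq:eq3.19}--\ref{eq:eq3.20} is, up to asymptotically negligible terms, the same functional $g_{r,m}$ evaluated at the robust taper-weighted autocovariances $\hat{\gamma}^{M}_{jk}(\cdot)$ --- this is precisely the spectral characterisation of the robust autocovariance estimator given in \cite{celine22}.

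Next I would invoke Proposition~\ref{proposition 4}: under Assumptions~\ref{A.1}--\ref{A.2} one has $\hat{\gamma}^{M}_{N}(\tau)\overset{p}{\longrightarrow}\hat{\gamma}_{N}(\tau)$ for each lag $\tau$ as $N\to\infty$. Since $g_{r,m}$ is continuous and depends on only finitely many lags, the continuous mapping theorem yields $I^{rM}_{jk}(\lambda_m)-I^{r}_{jk}(\lambda_m)=o_p(1)$ for every $r$; averaging over $r=1,\dots,R_N$ as in \ref{eq:eq3.18} and \ref{eq:eq3.12} (a finite convex combination) preserves this, so $I^{RM}_{jk}(\lambda_m)-I^{R}_{jk}(\lambda_m)=o_p(1)$ as $N,R_N\to\infty$ with $R_N/N\to 0$ (consistently with Theorem~\ref{theorem2}). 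By Remark~\ref{remark 4} (see also \cite[p.~358]{brockwell91} and \cite[Theorem~10.4.1]{brockwell91}) the classical multitaper periodogram is asymptotically unbiased, i.e.\ $E[I^{R}_{jk}(\lambda_m)]\to S_X(\lambda_m)$, which is the sense in which the displayed equality $E[I^{R}_{jk}(\lambda_m)]=S_X(\lambda_m)$ is to be understood.

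It then remains to promote the $o_p(1)$ difference to a difference of expectations. For this I would establish uniform integrability of the families $\{I^{R}_{jk}(\lambda_m)\}_N$ and $\{I^{RM}_{jk}(\lambda_m)\}_N$: the fourth-moment part of Assumption~\ref{A.1} (together with the summability of the $\upsilon_\tau$) bounds $\sup_N E[(I^{R}_{jk}(\lambda_m))^{2}]$, while the boundedness of the Huber influence function $\psi_H$ in \ref{eq:eq3.16} controls the moments of the robust regression coefficients $\hat{\beta}^{rM}_{jk}(\lambda_m)$, hence of $I^{rM}_{jk}(\lambda_m)$ --- precisely the moment bounds already obtained for the $M$-periodogram in \cite{reisen2018} and \cite{celine22}. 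Uniform integrability together with $I^{RM}_{jk}(\lambda_m)-I^{R}_{jk}(\lambda_m)=o_p(1)$ gives $E[I^{RM}_{jk}(\lambda_m)]-E[I^{R}_{jk}(\lambda_m)]\to 0$, and combining with the previous paragraph yields $E[I^{RM}_{jk}(\lambda_m)]\approx E[I^{R}_{jk}(\lambda_m)]=S_X(\lambda_m)$, the approximation sign absorbing the finite-sample discrepancy that vanishes in the limit.

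The step I expect to be the real obstacle is this last uniform-integrability argument, because $\hat{\beta}^{M}_{jk}(\lambda_m)$ is only implicitly defined as the minimiser of the robust criterion \ref{eq:eq3.15}, so its moments are not immediate. The natural remedy is to linearise the $M$-estimating equation around the true harmonic-regression coefficients, using the bounded derivative of $\psi_H$ and Assumption~\ref{A.1} to obtain the requisite control on the fluctuations of $\hat{\beta}^{rM}_{jk}(\lambda_m)$, or simply to quote the corresponding second-moment bounds in \cite{reisen2018,celine22}. The quadratic-form rewriting of $I^{r}_{jk}$, the continuous-mapping passage from Proposition~\ref{proposition 4} to $I^{rM}_{jk}-I^{r}_{jk}=o_p(1)$, and the averaging over the $R_N$ tapers are all routine by comparison.
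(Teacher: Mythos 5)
The paper offers no actual proof of this lemma: it is stated bare, with only the phrase ``as a direct consequence of Proposition \ref{proposition 4}'' standing in for an argument. Your write-up is therefore a reconstruction rather than a parallel proof, and the route you take --- express each tapered periodogram as a trigonometric functional of (robust) sample autocovariances, transfer the lag-level convergence $\hat{\gamma}^M_N(\tau)-\hat{\gamma}_N(\tau)=o_p(1)$ from Proposition \ref{proposition 4} to the periodogram level, average over the tapers, and then upgrade $o_p(1)$ to convergence of expectations via uniform integrability --- is the natural way to make the claim precise. You are also right that the uniform-integrability step is the genuinely non-trivial one: the paper passes from a convergence-in-probability statement to a statement about $E(I^{MR}_{jk}(\lambda_m))$ with no justification whatsoever, and your observation that the moments of $\hat{\beta}^{rM}_{jk}(\lambda_m)$ are not free (the estimator being only implicitly defined as a minimiser) is exactly the point the paper silently skips.

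Two gaps remain. First, the continuous-mapping step is not as routine as you suggest: the functional $g_{r,m}$ involves all $2N-1$ lags, and that number grows with $N$, so lag-by-lag convergence in probability from Proposition \ref{proposition 4} does not by itself give $I^{rM}_{jk}(\lambda_m)-I^{r}_{jk}(\lambda_m)=o_p(1)$; you would need a uniformly summable control such as $\sum_{|\tau|<N}\bigl\lvert\hat{\gamma}^M_{jk}(\tau)-\hat{\gamma}_{jk}(\tau)\bigr\rvert=o_p(1)$, which neither Proposition \ref{proposition 4} nor \cite{celine22} hands you directly. Second, the conclusion inherits a defect from Proposition \ref{proposition 4} itself: Remark \ref{remark 5} states that $\hat{\gamma}^M_N(\tau)\xrightarrow{p}\gamma(\tau)/a^2$ with $a=2\Phi(c)-1<1$ fixed by the Huber constant ($a\approx 0.82$ for $c=1.345$), so the robust autocovariances do not converge to the classical ones without rescaling by $a^2$, and the honest limit of $E(I^{MR}_{jk}(\lambda_m))$ is $S_X(\lambda_m)/a^2$ rather than $S_X(\lambda_m)$. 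Your argument is explicitly conditional on Proposition \ref{proposition 4} as stated, so this is not an error you introduced, but the ``$\approx$'' in the lemma is concealing a non-vanishing multiplicative bias unless the $M$-periodogram is corrected by that factor.
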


    \subsection{Proof of Theorem 2:} 
        \begin{proof} 
            After constructing the multitaper M-periodogram through an averaging process of periodograms derived from orthogonal data sine tapers, the proof can proceed directly by applying Chebyshev inequality. So, consider that the for each $r$, the periodograms are independent due to orthogonality of the tapers and under \ref{A.1}, $ E \lvert X_{jkt} \rvert^4 < \infty \Rightarrow Var(I^{rM}_{jk}(\lambda_m)) < \infty$ when $R = R_N\rightarrow \infty$ (as defined in \ref{remark 4}. Also, from Equation \ref{eq:eq3.3},
            
            $$ I^{MR}_{jk}(\lambda_m) = \frac{1}{R} \sum_{r=1}^R I^{rM}_{jk}(\lambda_m) $$ 

            Since \( I^{rM}_{jk}(\lambda_m) \) are independent with finite variance, the variance of the mean \( I^{MR}_{jk}(\lambda_m) \) is:
            $$ \text{Var}\left(I^{MR}_{jk}(\lambda_m)\right) = \text{Var}\left(\frac{1}{R} \sum_{r=1}^R I^{rM}_{jk}(\lambda_m)\right) $$

            $$ = \frac{1}{R^2} \sum_{r=1}^R \text{Var}\left(I^{rM}_{jk}(\lambda_m)\right) $$

            For each $r$, let the finite variance be \( \text{Var}(I^{rM}_{jk}(\lambda_m)) = \sigma_r^2 \). Then, for any given $\epsilon > 0$ and
        
            $$\mathbb{P}\left( \left \lvert I^{MR}_{jk}(\lambda_m) - S_X(\lambda_m) \right \rvert > \epsilon \right) = \mathbb{P}\left(\left \lvert I^{MR}_{jk}(\lambda_m) - S_X(\lambda_m) \right \rvert^2 > \epsilon^2 \right) \leq \frac{\text{Var}\left[I^{MR}_{jk}(\lambda_m)\right]}{\epsilon^2}$$
        
            $$= \frac{\sigma_r^2}{R\epsilon^2}   \overset{\text{p}}{\longrightarrow} 0\text{, when} \quad R \rightarrow \infty $$ 
            
            Note that under Assumption\ref{A.1}, the fourth
            moment of $X_{jkt}$ is bounded, then $E\lvert X_{jkt} \rvert^4 < \infty \implies var(I^{Mr}_{jk}(\lambda_m)) < \infty$ for fixed $j,k,\lambda_m$.

        \end{proof}

    \subsection{Proof of Lemma 2:} 
        \begin{proof}
            Under the theorem \ref{theorem1} and the continuous mapping theorem, $E[I^M(\lambda_m)] \rightarrow S_X(\lambda_m) \Longrightarrow E\left\{ln[I^M(\lambda_m)]\right\} \rightarrow ln[S_X(\lambda_m)]$.
        \end{proof}
        
    \subsection{Proof of Proposition 4:} 
        \begin{proof}
        \begin{equation}
            \left \lvert \hat{\gamma}^M(\tau) - \hat{\gamma}(\tau) \right \rvert  = \left \lvert \hat{\gamma}^M(\tau)  - \hat{\gamma}(\tau) +\gamma(\tau) - \gamma(\tau)\right \rvert 
        \end{equation}

        \begin{equation}
            = \left \lvert \hat{\gamma}^M(\tau) - \gamma(\tau)+\gamma(\tau)  - \hat{\gamma}(\tau)  \right \rvert
        \end{equation}

        \begin{equation}
            \leq \left \lvert \hat{\gamma}^M(\tau) -  \gamma(\tau)\right \rvert + \left \lvert \gamma(\tau) - \hat{\gamma}(\tau) \right \rvert = 0
        \end{equation}
        
        The last term in the right term of the inequality is is zero as showed in \cite{celine22} and the first therm because 

        \begin{equation}
            \lim_{N \to \infty} \left \lvert \hat{\gamma}_N^M(\tau)-\hat{\gamma}(\tau) \right \rvert = \left \lvert \gamma_{\psi}(\tau) - \gamma(\tau) \right \rvert
        \end{equation}

        \begin{equation}
            \leq \lvert a \gamma(\tau) - \gamma(\tau) \rvert = \lvert (1-a) \gamma(\tau) \rvert   = 0 \text{, when} \quad a \rightarrow 1
        \end{equation}

        The final result for (3) shows that $\left \lvert \hat{\gamma}^M(\tau) - \hat{\gamma}(\tau) \right \rvert \overset{\text{p}}{\longrightarrow} 0$. And using proposition 6.1.3 in \cite{brockwell91}, when $\left \lvert \hat{\gamma}_N^M(\tau) \overset{\text{p}}{\longrightarrow} \gamma_{\psi}(\tau) \right \rvert$, $\left \lvert \hat{\gamma}_N(\tau) \overset{\text{p}}{\longrightarrow} \gamma(\tau) \right \rvert$ and (3), then $\hat{\gamma}^M(\tau) \overset{\text{p}} {\longrightarrow} \gamma(\tau)$
           
        \end{proof}

    \subsection{Proof of Theorem 3: } 
        \begin{proof} 
            For all $j$th population and $k$th replicates, if $\hat{c}_{jkL}$ is the cepstral classical periodogram based, \cite{krafty16} showed that $\lvert \hat{c}_{jkL} - c_{jk\ell} \rvert = \mathcal{O}_p(1)$ and it is easy to show that $\lvert \hat{c}^M_{jkL} - \hat{c}_{jkL} \rvert = \mathcal{O}_p(1)$. Then, by the triangular inequality we have:

            $$\lvert \hat{c}^M_{jkL} - c_{jk\ell} \rvert = \lvert \hat{c}^M_{jkL} - \hat{c}_{jkL} +\hat{c}_{jkL}- c_{jk\ell} \rvert \leq \lvert \hat{c}^M_{jkL} - \hat{c}_{jkL} \rvert + \lvert \hat{c}_{jkL}- c_{jk\ell} \rvert = \mathcal{O}_p(1)$$
            
        \end{proof}

    \subsection{Proof of Theorem 4: } 
        \begin{proof} 
            Let $\{\hat{c}_{jk\ell}^M\}_{\ell=1}^{\infty}$ is a sequence of estimated cepstral coefficients based on M-Periodogram and $\{\hat{p}_{s\ell}^M\}_{\ell=1}^{\infty}$ the truncated weights functions. The distance between discriminant estimator in the Equation \ref{eq:eq3.5} of the true discriminant function defined in the Equation \ref{eq:eq2.21} is defined as

            $$\left\lvert \hat{d}_{jks}^M - d_{jks}\right\rvert = \left \lvert \sum_{\ell = 1}^{L-1} \hat{p}^M_{jks\ell}\hat{c}^M_{jk\ell} - \sum_{\ell=1}^{\infty}p_{jks\ell}c_{jk\ell} \right\rvert$$

            Adding and subtracting $\sum_{\ell=1}^{L-1}\hat{p}^M_{jks\ell}c_{jk\ell}$ inside of the modulo on the right side we get

            $$= \left \lvert \sum_{\ell = 1}^{L-1} \hat{p}^M_{jks\ell}\hat{c}_{jk\ell} - \sum_{\ell=1}^{L-1}\hat{p}^M_{jks\ell}c_{jk\ell} + 
            \sum_{\ell=1}^{L-1}\hat{p}^M_{jks\ell}c_{jk\ell} - 
            \sum_{\ell=1}^{\infty}p_{jks\ell}c_{jk\ell} \right\rvert$$

            $$= \left \lvert \sum_{\ell = 1}^{L-1} \hat{p}^M_{jks\ell}(\hat{c}_{jk\ell} - c_{jk\ell}) + 
            \sum_{\ell=1}^{L-1}\hat{p}^M_{jks\ell}c_{jk\ell} - 
            \sum_{\ell=1}^{\infty}p_{jks\ell}c_{jk\ell} \right\rvert$$

            And using triangular inequality we get

            $$\leq \left \lvert \sum_{\ell = 1}^{L-1} \hat{p}^M_{jks\ell}(\hat{c}_{jk\ell} - c_{jk\ell})\right\rvert + 
            \left \lvert \sum_{\ell=1}^{L-1}\hat{p}^M_{jks\ell}c_{jk\ell} - 
            \sum_{\ell=1}^{\infty}p_{jks\ell}c_{jk\ell} \right\rvert$$

            Taking the limit of probability in both sides we get

            $$\lim_{L \to \infty}\mathbb{P} \left\lvert \hat{d}_{jks}^M - d_{jks}\right\rvert \leq \lim_{L \to \infty}\mathbb{P} \left \lvert \sum_{\ell = 1}^{L-1} \hat{p}^M_{jks\ell}(\hat{c}_{jk\ell} - c_{jk\ell})\right\rvert 
            +
            \lim_{L \to \infty}\mathbb{P}\left \lvert \sum_{\ell=1}^{L-1}\hat{p}^M_{jks\ell}c_{jk\ell} - 
            \sum_{\ell=1}^{\infty}p_{jks\ell}c_{jk\ell} \right\rvert$$

            $$ = \lim_{L \to \infty}\mathbb{P} \left \lvert \sum_{\ell = 1}^{L-1} \hat{p}^M_{jks\ell}(\hat{c}_{jk\ell} - c_{jk\ell})\right\rvert 
            +
            \lim_{L \to \infty}\mathbb{P}\left \lvert \sum_{\ell=1}^{L-1}(\hat{p}^M_{jks\ell}-p_{jks\ell})c_{jk\ell} \right\rvert = \mathcal{O}_p(1)$$

            The first term is due to the \ref{theorem2} and the second term is due $\hat{p}_{s\ell}$ is surprisingly to be bounded in probability.
    
        \end{proof}

\end{document}